\title{
Egalitarian Resource Sharing Over Multiple Rounds\footnote{	Department of Computer Science,
	University of Texas at Austin.  Email: \{fuli, plaxton, vaibhavsinha\}@utexas.edu.
}
}
\author{Fu Li \and  C.\ Gregory Plaxton \and Vaibhav B.\ Sinha}
\date{\today}
\newtheorem{theorem}{Theorem}
\newtheorem{lemma}{Lemma}
\newtheorem{ob}{Observation}
\newtheorem{corollary}{Corollary}
\newcommand{\punt}[1]{}
\newcommand{\tmpDelete}[1]{}
\newcommand{\CoeffSI}{z}
\newcommand{\agents}{A}
\newcommand{\agent}{a}
\newcommand{\objects}{B}
\newcommand{\object}{b}
\newcommand{\edge}{e}
\newcommand{\edges}{E}
\newcommand{\edw}{\alpha}
\newcommand{\supp}{\beta}
\newcommand{\capacity}{C}
\newcommand{\capt}{c}
\newcommand{\dem}{d}
\newcommand{\utility}{u}
\newcommand{\ut}[3]{\utility(#1, #2, #3)}
\newcommand{\utobj}[4]{\utility(#1, #2, #3, #4)}
\DeclareMathOperator{\oafd}{OAFD}
\DeclareMathOperator{\sort}{sort}
\DeclareMathOperator{\suppl}{supplies}
\DeclareMathOperator{\edows}{endowments}
\DeclareMathOperator{\demands}{demands}
\DeclareMathOperator{\allo}{allocs}
\DeclareMathOperator{\nmbr}{num}
\DeclareMathOperator{\agentss}{agents}
\DeclareMathOperator{\objectss}{objects}
\DeclareMathOperator{\caap}{cap}
\DeclareMathOperator{\brkptss}{brkpts}
\DeclareMathOperator{\suub}{sub}
\DeclareMathOperator{\LMMF}{LMMF}
\DeclareMathOperator{\MMF}{MMF}
\DeclareMathOperator{\frugal}{frugal}
\DeclareMathOperator{\shrink}{shrink}
\newcommand{\allocationSymbol}{\mu}
\newcommand{\supply}[1]{\suppl(#1)}
\newcommand{\edws}[1]{\edows(#1)}
\newcommand{\dems}[2]{\demands(#1, #2)}
\newcommand{\allocs}[1]{\allo(#1)}
\newcommand{\num}[1]{\nmbr(#1)}
\newcommand{\brkfn}[2]{\Lambda_{#1}(#2)}
\newcommand{\brkpts}[2]{\brkptss(#1, #2)}
\newcommand{\agnts}[2]{\agentss(#1, #2)}
\newcommand{\objs}[2]{\objectss(#1, #2)}
\newcommand{\suppt}[2]{\supp_{#1}(#2)}
\newcommand{\capty}[2]{\caap(#1, #2)}
\newcommand{\brkpt}{\lambda}
\newcommand{\newsub}[3]{\suub(#1, #3, #2)}
\newcommand{\Manip}{\Phi}
\newcommand{\LMM}{\mathcal{M}}
\newcommand{\AAA}{\mathcal{A}}
\newcommand{\SI}{ \text{SI}}
\newcommand{\st}[1]{\sort(#1)}
\newcommand{\Winners}[1]{W(#1)}
\newcommand{\Losers}[1]{L(#1)}
\begin{document}
\maketitle

\begin{abstract}
	It is often beneficial for agents to pool their resources in order to
better accommodate fluctuations in individual demand.  Many
multi-round resource allocation mechanisms operate in an online
manner: in each round, the agents specify their demands for that
round, and the mechanism determines a corresponding allocation.  In
this paper, we focus instead on the offline setting in which the
agents specify their demand for each round at the outset.  We
formulate a specific resource allocation problem in this setting, and
design and analyze an associated mechanism based on the solution
concept of lexicographic maximin fairness. We present an efficient
implementation of our mechanism, and prove that it is
\tmpDelete{Pareto-efficient,} envy-free, non-wasteful, resource monotonic, population monotonic,  and
group strategyproof. We also prove that our mechanism guarantees each
agent at least half of the utility that they can obtain by not sharing
their resources.  We complement these positive results by proving that
no maximin fair mechanism can improve on the aforementioned factor of
one-half.

\end{abstract}

\section{Introduction}
\label{sec:intro}

 
Problems related to computational resource allocation lie at the
intersection of economics and computer science, and have received a
lot of attention in the literature.  In particular, the theory of fair
division, including such concepts as the egalitarian equivalent rule,
provides a suitable game-theoretic framework for tackling modern technological
challenges arising in cloud computing environments.  
This connection
has inspired several mechanisms with strong game-theoretic properties,
including
mechanisms for coping with fluctuating
demands~\cite{cole2013mechanism,freeman2018dynamic,tang2014long}, and
for allocating multiple resource types such as CPU, disk, and
bandwidth~\cite{ghodsiZHKSS2010,ghodsi2013choosy} when agents do not
know their resource demands~\cite{kandasamy2020online} or when there
is a stream of resources~\cite{aleksandrov2017pure}.

In this paper, we consider a group of agents sharing a single type of
resource (e.g., a collection of identically-equipped servers in the
cloud) over a set of rounds. (Later we allow for the possibility that
the total supply of the shared resources may vary from one round to
the next.)
Each agent owns a specific fraction of the shared resources, and
reports a demand for each round.  For a given round, an agent accrues
utility equal to the (possibly fractional) number of units allocated
to them, as long as the allocation does not exceed their demand; any
allocation beyond this threshold does not provide additional
utility. We assume that no monetary exchange occurs between the
agents, as is often the case in resource sharing applications (e.g.,
within a single organization).  Even in this simple setting, there are
many interesting questions that can be investigated. Indeed, a number
of works (e.g.,
\cite{freeman2018dynamic,hossain19,kandasamy2020online,tang2014long})
have proposed and analyzed allocation mechanisms for this setting.
These works have emphasized the natural online variant in which the
agent demands for any given round are not revealed until the start of
that round. In some applications, it may be possible to accurately
estimate future demands, e.g., due to periodicity. Can we design
mechanisms that effectively exploit such (partial) knowledge of future
demands? From a theoretical perspective, a natural starting point for
addressing this question is to consider the offline variant in which
all of the future demands are known at the outset; this is the
approach taken in the present paper.

Strategic (coalitions of) agents might misreport their demands to try
to achieve higher utility, often at the expense of other agents.  For
example, an agent might under-report their demand in a given round,
hoping for any loss of utility in that round to be more than offset by
the net gain realized in the remaining rounds.  Accordingly, we seek
to design strategyproof (SP) or group strategyproof (GSP) mechanisms
that incentivize truthful reporting.

An allocation satisfies the sharing incentives (SI) property if it
ensures that each agent achieves utility at least as high as they
can achieve by not sharing their resources.
The mechanism that allocates resources to the agents in proportion to
their relative endowments (ownership shares) is GSP and SI.  Such a
mechanism can be wasteful in the sense that it can allocate resources to an agent
in excess of their demand while leaving the demand of another agent unmet.  
Thus, we seek to design mechanisms that only produce non-wasteful (NW) allocations.

Given an allocation, we say that agent $a$ envies agent $a'$ if $a$
prefers the allocation of $a'$ (scaled to account for the relative
endowments of $a$ and $a'$) to their own.  
An allocation is envy-free (EF) if no agent envies another.
An allocation is frugal if it does not allocate more resources to an agent than they demand.
We consider two notions of fair allocations:
lexicographically maximin fair (LMMF),
and a weaker notion, maximin fair (MMF).

We seek mechanisms that are resource monotonic (RM),
that is, if the supply of one or more resources is increased, no agent
experiences a decrease in utility.
We seek mechanisms that are also population monotonic (PM),
that is, if the  endowment of one or more agents is decreased, no other agent
experiences a decrease in utility.


For the offline setting, we present an egalitarian mechanism for allocating
resources to agents over multiple rounds and provide an efficient
algorithm to compute the allocation.  Our mechanism is frugal,
LMMF, GSP, $\frac12$-SI (a relaxation of SI), NW, 
EF, RM,
and PM.  We also show that there is no MMF $\CoeffSI$-SI mechanism for
any $\CoeffSI > \frac12$.  The significance of our work is discussed in
greater detail below, but first we review relevant prior work.

{\bf Related work.} 
A lexicographic maximin solution maximizes the minimum utility, and
subject to this, maximizes the second-lowest utility, and subject to
this, maximizes the third-lowest utility, and so on.
Lexicographic maximin solutions have been studied in many area of
research, including computing the nucleolus of cooperative
games~\cite{pottersT92}, combinatorial
optimization~\cite{behringer1981simplex}, network
flows~\cite{megiddo1974, megiddo1977}, and as one of the standard
fairness concepts in telecommunications and network
applications~\cite{ogryczak2005telecommunications, pioro2004routing}. For
more details, we refer the reader to the recent work of Ogryczak et
al.~\cite[Section~2.1]{ogryczakLNP14_survey_fair}. 
Below we briefly discuss the  works that are most relevant to the present paper.

Some widely used online schedulers (e.g., the fair scheduler
implemented in Hadoop and Spark) enforce LMMF.  In the online setting,
there are two senses in which we can seek to achieve the LMMF
property: static and dynamic.  In the static sense, we produce an LMMF
allocation for each round independently.  In the dynamic sense, we
produce an allocation for any given round that enforces LMMF over the
entire history up to that round (subject to the constraint that the
allocations determined for previous rounds cannot be changed).

Our work is inspired by Freeman et al.~\cite{freeman2018dynamic}, who
studied the game-theoretic aspects of online resource sharing, with a
primary focus on the SP, SI, and NW properties.  They prove that the static version discussed above 
satisfies these desiderata, while the dynamic version 
fails to satisfy SI and SP.
They then consider a more general utility function, where agents
derive a fixed ``high'' utility per unit of resource up to their
demands and a fixed ``low'' utility beyond that threshold. With this
utility function, Freeman et al.\ show that the three aforementioned
properties are incompatible in a dynamic setting and thus appropriate
trade-offs need to be considered. They propose two mechanisms that
partly satisfy the desiderata. Hossain~\cite{hossain19} has
subsequently presented another mechanism for this setting.

Kandasamy et al.~\cite{kandasamy2020online} study mechanism design for
online resource sharing when agents do not know their resource
requirements.  Like Freeman et al., they focus on satisfying the SP,
SI, and NW properties.  Tang et al.\ \cite{tang2014long} propose a
dynamic allocation policy for the online setting that is similar to
the dynamic version of LMMF.

Bogomolnaia and Moulin~\cite{bogomolnaia2004random} study random
assignment problems with dichotomous preferences from a
game-theoretical perspective. Dichotomous preferences can be viewed as
a special case of fractional demands.  Bogomolnaia and Moulin consider
several mechanisms, including the LMMF mechanism.  They prove that the
latter mechanism is GSP, EF, RM, PM, and fair-share (the special case
of SI where all demands are zero or infinite).  Compared with the
model of Bogomolnaia and Moulin, our setting allows for fractional
demands, unequal agent endowments, and an unequal supply of resources
between rounds.

The work of Katta and Sethuraman~\cite{katta2006solution}
addresses
the random assignment problem with general agent preferences (i.e.,
where indifference is allowed in the agent preferences).
They use parametric network flow to achieve LMMF for the special case
of dichotomous preferences.  For general preferences, they extend the
parametric flow algorithm to compute an EF and ordinally efficient
assignment, and they prove that no mechanism is SP, EF, and ordinally
efficient.

Ghodsi et al.~\cite{ghodsi2013choosy} 
consider the LMMF mechanism 
in the context of a random assignment problem where
the agent endowments need not be the same.  
Our work strengthens  their SP result to GSP while allowing for fractional demands.

Offline resource sharing has been studied in the context of
multiperiod resource allocation with equal agent
endowments~\cite{klein1992lexicographic,luss1999equitable}. 
This line
of research is focused on the design of efficient algorithms for
computing a lexicographic maximin solution (via linear programming), as
opposed to analyzing the associated game-theoretic properties.

Offline resource sharing can be viewed as the problem of allocating
different kinds of substitutable resources to different populations of
agents.  This allocation problem has been studied in various specific
settings, e.g., distribution of coal among power
companies~\cite{brown1979sharing},
 multiperiod manufacturing of
high-tech products~\cite{klein1995multiperiod}, and allocation of
vaccines to different populations~\cite{singh2020fairness}.
To
the best of our knowledge, the prior work in this area studies this
allocation problem from a computational perspective, rather than a
game-theoretic perspective.  Sethuraman's survey paper on house
allocation problems~\cite{sethuraman2010_survey_house_allocation}
discusses the connection between allocation with substitutable
resources and random assignment with dichotomous preferences.


{\bf Significance of our work.} Freeman et
al.~\cite{freeman2018dynamic} study the game-theoretic properties of
several online resource allocation mechanisms: the previously known
static and dynamic LMMF mechanisms, and the newly-proposed Flexible
Lending and T-period mechanisms~\cite{freeman2018dynamic}. In settings
where future demands are known, or can be accurately estimated, we can
hope to significantly improve upon the fairness guarantees of such
online mechanisms.  As a simple example, consider an instance with $n$
agents and $n$ rounds, where each agent contributes a single unit per
round.  Suppose agents $2$ through $n$ each demand two units in every
round, and agent $1$ demands $n$ units in round $1$ and no units
thereafter.  Clearly, an egalitarian allocation gives a utility of $n$
to every agent. On the other hand, all of the aforementioned online
mechanisms give agent $1$ a utility of $1$, which is only a $1/n$
fraction of the egalitarian share. Our first main contribution is to
provide an efficient implementation of a suitable egalitarian
mechanism for the offline setting (i.e., where future demands are
known); see the first part of Section \ref{sec:mechanism}.

Our second main contribution is to establish various fundamental
game-theoretic properties of our egalitarian mechanism. To do so, we
leverage a connection between the random assignment problem of
Bogomolnaia and Moulin~\cite{bogomolnaia2004random} and our resource
sharing problem. Specifically, they assume that the agent preferences
are dichotomous, which corresponds to the special case of our setting
in which the fractional demands of the agents are all $0$ or $1$.
While the work of Bogomolnaia and Moulin provides us with an
invaluable roadmap, we need to overcome some technical challenges in
order to handle arbitrary fractional demands. (We also handle unequal
agent endowments and unequal supplies over the rounds, but
generalizing our results in these directions is quite
straightforward.) In Section~\ref{sec:structural}, we establish a
number of useful structural properties of lexicographic maximin
allocations. In Section~\ref{sec:allocProps}, we use these structural
properties to establish various game-theoretic properties of frugal
LMMF allocations.  In Section~\ref{sec:GSP}, we establish that our
egalitarian mechanism (and in fact any frugal LMMF mechanism) is GSP.

Our third main contribution is to establish possibility and
impossibility results related to the SI property.  The SI property is of particular importance in the setting of resource sharing, 
where we need to ensure that agents are  not discouraged from  pooling their resources. 
In Section \ref{sec:allocProps}, we show that any frugal LMMF allocation is $\frac{1}{2}$-SI.  In Section \ref{sec:imp}, we
 show that, for any $\CoeffSI>\frac{1}{2}$, no mechanism is MMF and
$\CoeffSI$-SI. Since no mechanism is MMF and SI, we consider a natural
relaxation: mechanisms that are MMF subject to being SI. (In other
words, we require the mechanism to be SI, and we only enforce the MMF property with
respect to the set of SI allocations.)  In Section \ref{sec:imp}, we show that no such mechanism
is SP.
All of our proofs appear in the appendix.

\section{Preliminaries}
\label{sec:prelims}

As discussed in the introduction, we wish to model a setting in
which a number of agents share a set of resources over multiple
rounds, and where the demand of each agent in each round is known in
advance. For the purposes of our formal presentation, we find it
convenient to refer to the pool of available resources in a given
round as an object.  (We will allow the size of this pool to vary
from one round to the next; see the notion of ``supply'' defined
below.) Thus, we use $k$ objects to model a $k$-round instance.
Since we are restricting attention to the offline setting, the
ordering of these objects is immaterial.

For any set of agents $\agents$, we define $\edws{\agents}$ as the set
of all endowment functions $\edw : \agents \to \mathbb{R}_{> 0}$, and
for any subset $\agents'$ of $\agents$ we define $\edw(\agents')$ as
$\sum_{\agent \in \agents'} \edw(\agent)$.\footnote{In the remainder
of the paper, we implicitly define similar overloads for a number of
other functions associated with supply, demand, allocation, flow, 
and capacity.}  For any set of objects $\objects$, we define
$\supply{\objects}$ as the set of all supply functions $\supp :
\objects \to \mathbb{R}_{\ge 0}$. 

For any set of agents $\agents$ and any set of objects $\objects$, 
we define $\dems{\agents}{\objects}$ as the set of all demand  functions $\dem : \agents \times \objects \to \mathbb{R}_{\ge 0}$. 
For any subset $\agents'$ of $\agents$ and any $\dem$ in $\dems{\agents}{\objects}$, $\dem_{\agents'}$ denotes the demand function in $\dems{\agents'}{\objects}$ such that $\dem_{\agents'}(\agent, \object) = \dem(\agent, \object)$ for all agents $\agent$ in $\agents'$ and all objects $\object$ in $\objects$.

For any set of agents $\agents$, any set of objects $\objects$, any $\edw$ in $\edws{\agents}$, any $\supp$ in $\supply{\objects}$, and any $\dem$ in $\dems{\agents}{\objects}$, 
the tuple $(\agents, \objects, \edw, \supp, \dem)$ denotes
an instance of object allocation with fractional demands 
($\oafd$).
We think of each agent $\agent$ in $\agents$ as owning a $\edw(\agent)/\edw(\agents)$
fraction of each object in $\objects$.

For any $\oafd$ instance $I = (\agents, \objects, \edw, \supp, \dem)$, 
we define $\allocs{I}$ as the set of all allocation functions  $\allocationSymbol : \agents \times \objects \to \mathbb{R}_{\ge 0}$ such that 
$\allocationSymbol(\agents,\object) \le \supp(\object)$ for all objects $\object$ in $\objects$.
 
An $\oafd$ mechanism $M$ takes as input an $\oafd$ instance 
$I$ and  
outputs a subset $M(I)$ of $\allocs{I}$.\footnote{In the present paper, it is convenient to assume that the output of an $\oafd$ mechanism is a set of allocations, as opposed to a single allocation, because the $\oafd$ mechanism $\LMM$ that we present in Section~\ref{sec:mechanism} has this characteristic.
For any $\oafd$ instance $I = (\agents, \objects, \edw, \supp, \dem)$, all of the agents in $\agents$ are indifferent between the allocations in $\LMM(I)$. For any $\oafd$ instance $I$, our efficient implementation of $\LMM$ computes a single allocation in $\LMM(I)$.}

For any $\oafd$ instance $I = (\agents, \objects, \edw, \supp, \dem)$,
and any $\allocationSymbol$ in $\allocs{I}$, we define the utility of
agent $\agent$ from object $\object$ as
$\utobj{\allocationSymbol}{\dem}{\agent}{\object} =
\min(\allocationSymbol(\agent, \object), \dem(\agent, \object))$.  We
assume that the utility of any agent $\agent$, denoted
$\ut{\allocationSymbol}{\dem}{\agent}$, is equal to $\sum_{\object \in
  \objects} \utobj{\allocationSymbol}{\dem}{\agent}{\object}$.

For any $\oafd$ instance $I = (\agents, \objects, \edw, \supp, \dem)$,
any $\allocationSymbol$ in $\allocs{I}$, and any $\object$ in
$\objects$, the definition of utility implies that $\sum_{\agent \in
  \agents} \utobj{\allocationSymbol}{\dem}{\agent}{\object} \le
\min(\supp(\object), \dem(\agents, \object))$.  We let $\supp_I$ in
$\supply{I}$ denote the supply such that $\suppt{I}{\object} =
\min(\supp(\object), \dem(\agents, \object))$ for all objects
$\object$ in $\objects$.  For any $\oafd$ instance $I = (\agents,
\objects, \edw, \supp, \dem)$, any $\allocationSymbol$ in
$\allocs{I}$, and any subset $\agents'$ of $\agents$, the definition
of utility also implies that $\sum_{\agent \in \agents'}
\ut{\allocationSymbol}{\dem}{\agent} \le \sum_{\object \in \objects}
\min(\suppt{I}{\object}, \dem(\agents', \object))$.  We let
$\capty{I}{\agents'}$ denote $\sum_{\object \in \objects}
\min(\suppt{I}{\object},\dem(\agents', \object))$.

{\bf Game-theoretic desiderata for allocations.}
For any $\oafd$ instance $I = (\agents,
\objects, \edw, \supp, \dem)$, an allocation $\allocationSymbol$ in
$\allocs{I}$ is (proportionally) envy-free (EF) if
\begin{equation*}
	\utility(\allocationSymbol,\dem, a) \ge \sum_{\object \in \objects}
\min\left(\frac{\edw(\agent)}{\edw(\agent')}
\allocationSymbol(\agent',\object),\dem(\agent,\object)\right)
\end{equation*}
for
all agents $\agent$ and $\agent'$ in $\agents$.  Intuitively, no agent
prefers the appropriately scaled (i.e., taking into account relative
endowments) version of another agent's allocation to their own
allocation.


The sharing incentives (SI) property requires that any agent 
$\agent$ who provides a
truthful report achieves utility at least as high as they would
achieve with an $\edw(\agent)/\edw(\agents)$ fraction of every object.
Formally, for any $\oafd$ instance
$I=(\agents,\objects,\edw,\supp,\dem)$ and any $\CoeffSI$ in $[0,1]$, an
allocation $\allocationSymbol$ in $\allocs{I}$ is said to be
$\CoeffSI$-SI if $\ut{\allocationSymbol}{\dem}{\agent} \ge \CoeffSI\sum_{\object \in
  \objects} \min\left(
\frac{\edw(\agent)}{\edw(\agents)}\supp(\object), \dem(\agent,
\object)\right)$ for all agents $\agent$ in $\agents$.  We say that an
allocation is SI if it is $1$-SI.

In our model, the maximum utility that an agent $\agent$ can achieve
from an object $\object$ is $\dem(\agent,\object)$; accordingly, in our setting, there
is no reason to allocate more than $\dem(\agent,\object)$ units of
object $\object$ to agent $\agent$.  For any $\oafd$ instance
$I=(\agents,\objects,\edw,\supp,\dem)$, we say that an allocation
$\allocationSymbol$ in $\allocs{I}$ is frugal if
$\allocationSymbol(\agent, \object) \le \dem(\agent, \object)$ for all
$(\agent,\object)$ in $\agents\times\objects$.  We let $\frugal(I)$
denote the set of all frugal allocations in $\allocs{I}$.  
For any $\oafd$ instance $I = (\agents, \objects, \edw, \supp, \dem)$ 
and any $\allocationSymbol$ in $\frugal(I)$, we say that $\allocationSymbol$ is 
non-wasteful (NW) if for any object $\object$ in $\objects$, 
either $\allocationSymbol(\agents, \object) = \supp(\object)$ or $\allocationSymbol(\agent, \object) = \dem(\agent, \object)$ for all agents $\agent$ in $\agents$. 

An
allocation $\allocationSymbol$ in $\allocs{I}$ 
that maximizes $\min_{\agent \in \agents}
\ut{\allocationSymbol'}{\dem}{\agent}/\edw(\agent)$ over all
$\allocationSymbol'$ in $\allocs{I}$
is said to be
maximin fair (MMF).  We let $\MMF(I)$ denote the set
of all MMF allocations in $\allocs{I}$.  We let $\mathbf{\utility}(I,
\allocationSymbol)$ denote the length-$|\agents|$ vector whose $j$th
component denotes the $j$th smallest
$\ut{\allocationSymbol}{\dem}{\agent}/\edw(\agent)$ for all agents
$\agent$ in $\agents$.  An allocation $\allocationSymbol$ in
$\allocs{I}$ is lexicographically maximin fair (LMMF) if $\mathbf{u}(I, \allocationSymbol)$ is
lexicographically at least $\mathbf{u}(I, \allocationSymbol')$ for all
$\allocationSymbol'$ in $\allocs{I}$.  We let $\LMMF(I)$ denote the
set of all LMMF allocations in $\allocs{I}$.  
Note that LMMF is a
stricter notion of fairness than MMF.

For any $\oafd$ instance $I=(\agents,\objects,\edw,\supp,\dem)$, any
subset $\agents'$ of $\agents$, and any $\allocationSymbol$ in
$\LMMF(I)$, we let $\newsub{I}{\allocationSymbol}{\agents'}$ denote
the $\oafd$ instance
$(\agents\setminus\agents',\objects,\edw',\supp',\dem_{\agents
  \setminus \agents'})$ where $\edw'(\agent)=\edw(\agent)$ for all
agents $\agent$ in $\agents \setminus \agents'$ and $\supp'(\object) =
\supp(\object) - \allocationSymbol(\agents', \object)$ for all objects
$\object$ in $\objects$.  Lemma~\ref{lem:suboptimal} below establishes
an optimal substructure property of LMMF allocations.
\begin{lemma}\label{lem:suboptimal}
	Let $I = (\agents, \objects, \edw, \supp, \dem)$ be an $\oafd$ instance, let $\agents'$ be a subset of $\agents$, and let $\allocationSymbol$ belong to $\LMMF(I)$.
	Let $\allocationSymbol'$ be the restriction of $\mu$ to $\agents \setminus A'$, that is, $\allocationSymbol': (\agents \setminus \agents') \times \objects \to \mathbb{R}_{\ge 0}$ is such that $\allocationSymbol'(\agent, \object) = \allocationSymbol(\agent, \object)$ for all $(\agent, \object)$ in $(\agents \setminus \agents') \times \objects$. 
	Then $\allocationSymbol'$ belongs to $\LMMF(\newsub{I}{\allocationSymbol}{\agents'})$. 
\end{lemma}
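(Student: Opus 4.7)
The plan is to argue by contradiction. Suppose $\mu'$ is not in $\LMMF(J)$, where $J := \newsub{I}{\mu}{\agents'}$. Then some $\mu^{\star} \in \allo(J)$ satisfies $\mathbf{u}(J,\mu^{\star}) >_{\text{lex}} \mathbf{u}(J,\mu')$. I will splice $\mu^{\star}$ onto $\mu$ to obtain an allocation $\hat\mu$ for the original instance $I$ whose utility vector is lexicographically greater than that of $\mu$, contradicting $\mu \in \LMMF(I)$.

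Concretely, define $\hat\mu : \agents \times \objects \to \mathbb{R}_{\ge 0}$ by $\hat\mu(\agent,\object) = \mu(\agent,\object)$ for $\agent \in \agents'$ and $\hat\mu(\agent,\object) = \mu^{\star}(\agent,\object)$ for $\agent \in \agents \setminus \agents'$. To check $\hat\mu \in \allo(I)$, fix any $\object \in \objects$; since $\mu^{\star} \in \allo(J)$, the supply constraint of $J$ gives $\sum_{\agent \in \agents \setminus \agents'} \mu^{\star}(\agent,\object) \le \supp(\object) - \sum_{\agent \in \agents'} \mu(\agent,\object)$ by the definition of the subinstance supply, and rearranging yields $\sum_{\agent \in \agents} \hat\mu(\agent,\object) \le \supp(\object)$. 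By construction, $\ut{\hat\mu}{\dem}{\agent} = \ut{\mu}{\dem}{\agent}$ for every $\agent \in \agents'$, while on $\agents \setminus \agents'$ the scaled-utility multisets satisfy $\st(\ut{\hat\mu}{\dem}{\agent}/\edw(\agent) : \agent \in \agents \setminus \agents') = \mathbf{u}(J,\mu^{\star}) >_{\text{lex}} \mathbf{u}(J,\mu') = \st(\ut{\mu}{\dem}{\agent}/\edw(\agent) : \agent \in \agents \setminus \agents')$, using that $\mu$ and $\mu'$ agree on $\agents \setminus \agents'$ and the demand of $J$ is $\dem_{\agents \setminus \agents'}$.

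To conclude, I need the following sort-merge monotonicity fact: if $\mathbf{v}, \mathbf{v}'$ are equal-length nonnegative real vectors with $\st(\mathbf{v}) >_{\text{lex}} \st(\mathbf{v}')$, and $\mathbf{w}$ is any other nonnegative real vector, then $\st(\mathbf{v} \oplus \mathbf{w}) >_{\text{lex}} \st(\mathbf{v}' \oplus \mathbf{w})$, where $\oplus$ denotes concatenation. This is the main technical step and can be proved by letting $k$ be the first index where $\st(\mathbf{v})$ and $\st(\mathbf{v}')$ disagree, and then observing that the sorted merge with $\mathbf{w}$ agrees with $\st(\mathbf{v}')$ and $\st(\mathbf{v})$ in their common prefix of length $k-1$ once the same number of $\mathbf{w}$-entries (those not exceeding the common prefix value) has been interleaved, so the first disagreement in $\st(\mathbf{v} \oplus \mathbf{w})$ versus $\st(\mathbf{v}' \oplus \mathbf{w})$ occurs at the position corresponding to index $k$ and has the same sign.

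Applying this fact with $\mathbf{v}, \mathbf{v}'$ being the scaled-utility vectors on $\agents \setminus \agents'$ under $\hat\mu$ and $\mu$ respectively, and $\mathbf{w}$ being the common scaled-utility vector on $\agents'$, yields $\mathbf{u}(I,\hat\mu) >_{\text{lex}} \mathbf{u}(I,\mu)$, contradicting $\mu \in \LMMF(I)$. The main obstacle is simply formalizing the sort-merge monotonicity claim; once that is in hand, the rest is a direct construction and bookkeeping.
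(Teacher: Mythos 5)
Your proof is correct and follows essentially the same route as the paper's: assume $\mu'$ is not LMMF for the subinstance, splice a lexicographically better subinstance allocation onto $\mu$ restricted to $\agents'$, and derive a lexicographic improvement over $\mu$ in the original instance, contradicting $\mu \in \LMMF(I)$. The sort-merge monotonicity fact you isolate (and correctly sketch) is precisely the step the paper uses implicitly, without proof, in its final chain of comparisons, so your treatment is if anything slightly more complete.
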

\punt{\begin{proof}
For any $\oafd$ instance $\hat{I} = (\hat{\agents}, \hat{\objects},
\hat{\edw}, \hat{\supp}, \hat{\dem})$, any $\hat{\allocationSymbol}$
in $\LMMF(\hat{I})$,
and any subset $\hat{\agents}'$ of $\hat{\agents}$, let
$\mathbf{\utility}(\hat{I}, \hat{\allocationSymbol}, \hat{\agents}')$
denote the length-$|\hat{\agents}'|$ vector whose $j$th component
denotes the $j$th smallest value of
$\ut{\hat{\allocationSymbol}}{\hat{\dem}}{\agent}/\edw(\agent)$ over
all agents $\agent$ in $\hat{\agents}'$.  Then
\[
\mathbf{\utility}(\hat{I}, \hat{\allocationSymbol}) =
\st{\mathbf{\utility}(\hat{I}, \hat{\allocationSymbol},
  \hat{\agents}') + \mathbf{\utility}(\hat{I},
  \hat{\allocationSymbol}, \hat{\agents} \setminus \hat{\agents}')},
\]
where $+$ denotes concatenation and $\sort$ is a function that sorts
the input vector.

Let $\tilde{I}$ denote $\newsub{I}{\allocationSymbol}{\agents'}$.
Notice that $\allocationSymbol'$ belongs to $\allocs{\tilde{I}}$.
Assume for the sake of contradiction that $\allocationSymbol'$ does
not belong to $\LMMF(\tilde{I})$.  Let $\tilde{\allocationSymbol}$
belong to $\LMMF(\tilde{I})$.  Then $\mathbf{\utility}(\tilde{I},
\allocationSymbol') = \mathbf{\utility}(I, \allocationSymbol, \agents
\setminus \agents') \neq \mathbf{\utility}(\tilde{I},
\tilde{\allocationSymbol})$.
Since $\tilde{\allocationSymbol}$ belongs to $\LMMF(\tilde{I})$ and
$\allocationSymbol'$ does not belong to $\LMMF(\tilde{I})$, we deduce
that $\mathbf{\utility}(\tilde{I}, \tilde{\allocationSymbol})$ is
lexicographically greater than $\mathbf{\utility}(I,
\allocationSymbol, \agents \setminus \agents')$.

Let $\allocationSymbol^*: \agents \times \objects \to \mathbb{R}_{\ge
  0}$ be defined by $\allocationSymbol^*(\agent, \object) =
\tilde{\allocationSymbol}(\agent, \object)$ for all $(\agent,
\object)$ in $(\agents \setminus \agents') \times \objects$ and
$\allocationSymbol^*(\agent, \object) = \allocationSymbol(\agent,
\object)$ for all $(\agent, \object)$ in $\agents' \times \objects$.
Thus $\allocationSymbol^*$ belongs to $\allocs{I}$
and $$\mathbf{\utility}(I, \allocationSymbol^*) =
\st{\mathbf{\utility}(I, \allocationSymbol^*, \agents') +
  \mathbf{\utility}(I, \allocationSymbol^*, \agents \setminus
  \agents')} = \st{\mathbf{\utility}(I, \allocationSymbol, \agents') +
  \mathbf{\utility}(\tilde{I}, \tilde{\allocationSymbol})}$$ is
lexicographically greater than $\st{\mathbf{\utility}(I,
  \allocationSymbol, \agents') + \mathbf{\utility}(I,
  \allocationSymbol, \agents \setminus \agents')} =
\mathbf{\utility}(I, \allocationSymbol)$, a contradiction since
$\allocationSymbol$ belongs to $\LMMF(I)$.  
\end{proof}

}

{\bf Game-theoretic desiderata for mechanisms.}
In order to define the strategyproof (SP) and group strategyproof
(GSP) properties, it is convenient to
first define the $k$-SP property for any given positive integer $k$.
An $\oafd$ mechanism is $k$-SP if no coalition of $k$ agents can
misrepresent their demands in such a way that some member of the
coalition gains and no member of the coalition loses.  Formally, an
$\oafd$ mechanism $M$ is said to be $k$-SP if for any $\oafd$ instance
$I=(\agents,\objects,\edw,\supp,\dem)$, any $\allocationSymbol$ in
$M(I)$, any subset $\agents'$ of $\agents$ such that $|\agents'|=k$,
any $\dem^*$ in $\dems{\agents'}{\objects}$, any $\oafd$ instance
$I'=(\agents,\objects,\edw,\supp,\dem')$ where
$\dem'=(\dem_{\agents\setminus\agents'},\dem^*)$, and any
$\allocationSymbol'$ in $M(I')$, either there is no agent $\agent$ in
$\agents'$ such that $\ut{\allocationSymbol}{\dem}{\agent} <
\ut{\allocationSymbol'}{\dem}{\agent}$, or there is an agent $\agent$
in $\agents'$ such that $\ut{\allocationSymbol}{\dem}{\agent} >
\ut{\allocationSymbol'}{\dem}{\agent}$. A mechanism is SP if it is
$1$-SP. A mechanism is GSP if it is $k$-SP for all $k$.

An $\oafd$ mechanism is said to be resource monotonic (RM) if
increasing the supply of one or more objects does not decrease the
utility of any agent.  Formally, an $\oafd$ mechanism $M$ is said to
be RM if for any instances $I = (\agents, \objects, \edw, \supp,
\dem)$ and $I' = (\agents,\objects,\edw,\supp',\dem)$ such that
$\supp(\object)\le \supp'(\object)$ for all objects $\object$ in
$\objects$, any agent $\agent$ in $\agents$, any allocation
$\allocationSymbol$ in $M(I)$, and any allocation $\allocationSymbol'$
in $M(I')$, we have $u(\mu,\dem,a) \le u(\mu',\dem,a)$.

An $\oafd$ mechanism is said to be population monotonic (PM) if 
 decreasing the endowments of one or more agents does not decrease the utility of any other agent.  
Formally, given  any instance $I = (\agents, \objects, \edw, \supp, \dem)$,  we define $\shrink(I)$ as the set of all $\oafd$ instances $I' =(A', B, \edw', \supp,\dem_{ A'})$ such that  
$A'$ is a subset of $A$ and $\edw'(\agent)\le \edw(\agent)$ for all agents $\agent$ in	$A'$.
An $\oafd$ mechanism $M$ is said to be PM if for any $\oafd$ instances  $I = (\agents, \objects, \edw, \supp, \dem)$ and $I' = (A', B, \edw', \supp,\dem_{ A'})$  in $\shrink(I)$, any allocations $\allocationSymbol$ in $M(I)$ and $\allocationSymbol'$ in $M(I')$, and any agent $\agent$ in $ A'$ such that  $\edw'(\agent) = \edw(\agent)$, 
we have $u(\mu,\dem,a) \le u(\mu',\dem,a)$.
		
An $\oafd$ mechanism $M$ is EF (resp., 
NW,  $\CoeffSI$-SI
) if for
any $\oafd$ instance $I$, every allocation in $M(I)$ is EF (resp., 
NW, $\CoeffSI$-SI
).  An $\oafd$ mechanism $M$ is frugal (resp., MMF,
LMMF) if for any $\oafd$ instance $I$, the set of allocations $M(I)$
is contained in $\frugal(I)$ (resp., $\MMF(I)$, $\LMMF(I)$).

\textbf{Lexicographic flow.}
We now briefly review the lexicographic flow problem, which we utilize to obtain an efficient implementation of our mechanism.
We refer readers unfamiliar with the parametric maximum flow to Appendix~\ref{apd:lex_flow} for more details.

Given a flow network $G = (V, \edges)$ with source $s$ and sink $t$, and a subset $S$ of $V - t$ such that $s$ is in $S$, we write $(S, \overline{S})$ to denote the associated cut of $G$.
There is a minimum cut $(S, \overline{S})$ such that $S$ contains $S'$ for all minimum cuts $(S', \overline{S'})$.
We refer to this minimum cut $(S, \overline{S})$ as the source-heavy minimum cut.

In this paper, we consider parametric flow networks  where each edge leaving $s$ has a capacity proportional to a parameter $\lambda$ and all other edge capacities are independent of $\lambda$.
For any parametric flow network $G$, we let $G(\lambda)$ denote the flow network associated with a particular value of $\lambda$.
We use the standard terminology of breakpoints as defined in Gallo et al.~\cite{gallo1989}.
As the value of $\lambda$ increases from $0$, the vertices in $V \setminus \{s,t\}$ move from the sink side to the source side of the source-heavy minimum cut.
For any parametric flow network $G$, 
the breakpoint function $\brkfn{}{v}$ maps 
any given vertex $v$ in $V \setminus \{s,t\}$ to
the breakpoint value of $\lambda$ at which $v$ moves from the sink side to the source side of the source-heavy minimum cut~\cite{stone1978}.

We now define the notion of a lexicographic flow~\cite{megiddo1974,megiddo1977}. 
Assume that the edges leaving $s$ reach the vertices $\{v_1, \dots,
v_k\}$, and that $t$ does not belong to this set.  Let the capacity of
the edge $(s, v_i)$ be $w_i\lambda$.
For a flow $f$ in $G(\infty)$, let $\theta(G, f)$ denote the length-$k$ vector whose $j$th component is the $j$th smallest $f(s,v_i)/w_i$, for $i$ in $[k]$.
A lexicographic flow $f$ of $G$ is a maximum flow $f$ in $G(\infty)$ that is lexicographically at least  $\theta(G,f')$ for all maximum flows $f'$ in $G(\infty)$.
Gallo et al.\ describe an algorithm that computes the breakpoint function and a lexicographic flow in $O(|V||E| \log(|V|^2/|E|))$ time~\cite{gallo1989}.


\section{Frugal Lexicographic Maximin Fair Mechanism}
\label{sec:mechanism}

Let $\LMM$ denote the $\oafd$ mechanism such that $\LMM(I) = \LMMF(I)
\cap \frugal(I)$ for all $\oafd$ instances $I$.  In
Section~\ref{sec:structural} we establish that all agents are
indifferent between allocations in $\LMM(I)$ (see
Lemma~\ref{lem:allocation}).  In this section, we describe an
efficient non-deterministic algorithm $\mathcal{A}$ that implements
$\LMM$ in the following sense: on any input $\oafd$ instance $I$, the
set of possible allocations produced by $\AAA$ is $\LMM(I)$.  The
algorithm $\AAA$ is based on a reduction to the lexicographic flow
problem on a parametric flow network.  Given an $\oafd$ instance $I =
(\agents, \objects, \edw, \supp, \dem)$ as input, algorithm $\AAA$
first creates a parametric flow network $G_I = (\agents \cup \objects
\cup \{s, t\}, \edges)$ with the edge capacities defined by the
functions $\edw, \dem,$ and $\supp_{I}$ described below.  The network
$G_I$ has an agent (resp., object) vertex for each agent (resp.,
object) in the input.  We denote the set of agent (resp., object)
vertices by $\agents$ (resp., $\objects$).  For any agent vertex
$\agent$ and any object vertex $\object$, there is a directed edge of
capacity $\edw(\agent)\lambda$ from $s$ to $\agent$, there is a
directed edge of capacity $\dem(\agent, \object)$ from $\agent$ to
$\object$, and there is a directed edge of capacity
$\suppt{I}{\object}$ from $\object$ to $t$.  It is easy to check that
Observation~\ref{ob:frugal_flow} below holds.
\begin{ob}
\label{ob:frugal_flow}
    For any $\oafd$ instance $I = (\agents, \objects, \edw, \supp, \dem)$, 
	there is a one-to-one correspondence between 
    flows $f$ in $G_I(\infty)$ and allocations $\allocationSymbol$ in $\frugal(I)$ such that 
    $f(\agent, \object) = \allocationSymbol(\agent, \object) $ for all $(\agent, \object)$ in $\agents\times \objects$.
\end{ob}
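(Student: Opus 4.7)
The plan is to establish the bijection by explicitly giving the two maps in both directions and then checking that each satisfies the relevant constraints. In the forward direction, given a frugal allocation $\allocationSymbol$ in $\frugal(I)$, I would define a function $f : \edges \to \mathbb{R}_{\ge 0}$ by setting $f(\agent,\object) = \allocationSymbol(\agent,\object)$ for every $(\agent,\object)$ in $\agents \times \objects$, and then extending via the unique values forced by flow conservation, namely $f(s,\agent) = \sum_{\object \in \objects} \allocationSymbol(\agent,\object)$ and $f(\object,t) = \sum_{\agent \in \agents} \allocationSymbol(\agent,\object)$. In the backward direction, given a flow $f$ in $G_I(\infty)$, I would simply set $\allocationSymbol(\agent,\object) = f(\agent,\object)$.

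Next I would verify the capacity constraints. For the forward map, the edges out of $s$ have infinite capacity in $G_I(\infty)$, so $f(s,\agent)$ is trivially below capacity. The edges from agents to objects satisfy $f(\agent,\object) = \allocationSymbol(\agent,\object) \le \dem(\agent,\object)$ because $\allocationSymbol$ is frugal. The key check is for edges into $t$: using $\allocationSymbol \in \allocs{I}$ gives $\sum_{\agent} \allocationSymbol(\agent,\object) \le \supp(\object)$, and frugality gives $\sum_{\agent} \allocationSymbol(\agent,\object) \le \sum_{\agent} \dem(\agent,\object)$; taking the minimum yields $f(\object,t) \le \suppt{I}{\object}$. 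Flow conservation at each agent and object vertex holds by construction. For the backward map, $\allocationSymbol(\agent,\object) = f(\agent,\object) \le \dem(\agent,\object)$ so $\allocationSymbol$ is frugal, and $\sum_{\agent} \allocationSymbol(\agent,\object) = f(\object,t) \le \suppt{I}{\object} \le \supp(\object)$ shows $\allocationSymbol \in \allocs{I}$.

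Finally, I would argue that the two maps are mutual inverses. Composing backward-then-forward is immediate since $\allocationSymbol$ determines $f$ on agent-object edges, and on $s$-to-agent and object-to-$t$ edges the values are forced by flow conservation, hence the constructed flow equals the original. Composing forward-then-backward is also immediate since both the constructed allocation and the original allocation agree on every $(\agent,\object)$ by definition.

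The only subtlety worth highlighting is the capacity check on edges into $t$, where one must combine the supply constraint on $\allocationSymbol$ with frugality to obtain the $\min$ appearing in the definition of $\suppt{I}{\object}$; everything else is a direct unpacking of definitions, which is presumably why the statement is labeled an observation.
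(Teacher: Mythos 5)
Your proof is correct and is exactly the straightforward verification the paper has in mind when it says the observation "is easy to check" (the paper omits the proof entirely). You correctly identify the one point requiring any care — that the capacity of the edge $(\object,t)$ is $\suppt{I}{\object}=\min(\supp(\object),\sum_{\agent}\dem(\agent,\object))$, so both the supply constraint from $\allocs{I}$ and frugality are needed to verify it.
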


Given as input an $\oafd$ instance $I$, algorithm $\AAA$
non-deterministically selects a lexicographic flow in $G_I(\infty)$
and outputs the corresponding allocation in $\allocs{I}$.
The algorithm of Gallo et al.\ can be used to compute a lexicographic
flow in $O((|\agents| + |\objects|)|\agents||\objects|
\log((|\agents|+|\objects|)^2 / |\agents||\objects|))$ time.  
Using
Observation~\ref{ob:frugal_flow}, it is straightforward to prove that there is a one-to-one
correspondence between frugal LMMF allocations in $\LMM(I)$ and
lexicographic flows in $G_I$.  Thus we obtain
Lemma~\ref{lem:frugal_LMF} below.
\begin{lemma}\label{lem:frugal_LMF}
	For any $\oafd$ instance $I$, the set of possible allocations produced by algorithm $\AAA$ on input $I$ is equal to $\LMM(I)$.
\end{lemma}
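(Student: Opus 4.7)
The plan is to leverage Observation~\ref{ob:frugal_flow} to identify flows in $G_I(\infty)$ with frugal allocations of $I$, and to show that, under this identification, lexicographic flows correspond precisely to the allocations in $\LMM(I)$. First I would translate $\theta(G_I, f)$ into allocation language: flow conservation at agent vertex $\agent$ gives $f(s, \agent) = \sum_{\object \in \objects} f(\agent, \object) = \sum_{\object \in \objects} \mu(\agent, \object)$, and since $\mu$ is frugal, $\mu(\agent, \object) \le \dem(\agent, \object)$ so $\ut{\mu}{\dem}{\agent} = f(s, \agent)$. Because the edge $(s, \agent)$ has capacity $\edw(\agent)\lambda$, the weights in the definition of $\theta$ are $w_\agent = \edw(\agent)$, and hence $\theta(G_I, f) = \mathbf{u}(I, \mu)$.

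The key intermediate step is to show that every $\mu \in \LMM(I)$ corresponds to a \emph{maximum} flow in $G_I(\infty)$. If the associated $f$ were not maximum, an augmenting $s$-$t$ path through some agent $\agent$ yields a flow $f'$ with $f'(s, \agent) > f(s, \agent)$ and $f'(s, \agent') = f(s, \agent')$ for all $\agent' \ne \agent$. The corresponding frugal $\mu'$ then has one strictly larger coordinate and all others unchanged, so the sorted vector $\mathbf{u}(I, \mu')$ is strictly lexicographically greater than $\mathbf{u}(I, \mu)$, contradicting $\mu \in \LMMF(I)$. I would also record the trivial truncation remark that for any $\mu' \in \allocs{I}$, the allocation $\mu''(\agent, \object) = \min(\mu'(\agent, \object), \dem(\agent, \object))$ is frugal and preserves every agent's utility, so LMMF over $\allocs{I}$ and LMMF over $\frugal(I)$ produce the same sorted utility vectors; hence $\LMM(I)$ is nonempty and checking LMMF against frugal competitors suffices.

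With these pieces, the two directions close cleanly. If $f$ is a lexicographic flow but the corresponding $\mu$ fails to be LMMF, the truncation remark gives some $\mu^* \in \LMM(I)$ with $\mathbf{u}(I, \mu^*)$ lex-greater than $\mathbf{u}(I, \mu)$; by the intermediate step the associated $f^*$ is a maximum flow, and $\theta(G_I, f^*)$ is lex-greater than $\theta(G_I, f)$, contradicting the lexicographic optimality of $f$ among maximum flows. Conversely, if $\mu \in \LMM(I)$, the intermediate step makes $f$ a maximum flow, and any maximum flow $f'$ with $\theta(G_I, f')$ lex-greater than $\theta(G_I, f)$ would yield a frugal $\mu'$ with $\mathbf{u}(I, \mu')$ lex-greater than $\mathbf{u}(I, \mu)$, contradicting $\mu \in \LMMF(I)$. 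So $f$ is lexicographic, and algorithm $\AAA$ can output $\mu$.

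The main obstacle I anticipate is the intermediate LMMF-implies-maximum-flow step, because LMMF does not in general coincide with maximizing total utility; the argument relies crucially on the observation that a single-agent, single-augmentation strictly lifts the sorted utility vector in lex order, and on having flow conservation at each agent translate augmenting paths into genuine utility increases (which is where frugality is essential, so that the extra flow into $\agent$ is not wasted above $\dem(\agent, \object)$).
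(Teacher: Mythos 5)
Your proposal is correct and follows essentially the same structure as the paper's proof: identify frugal allocations with flows via Observation~\ref{ob:frugal_flow}, note that $\theta(G_I,f)=\mathbf{u}(I,\mu)$, show that frugal LMMF allocations correspond to maximum flows, and close both inclusions by contradiction using the truncation of an arbitrary LMMF allocation to a frugal one. The only (minor) difference is in the intermediate step: the paper derives ``LMMF implies maximum flow'' by observing that LMMF allocations are NW and that a frugal allocation is NW exactly when the corresponding flow meets the minimum-cut value $\supp_I(\objects)$, whereas you argue directly that an augmenting path would strictly raise one agent's utility while leaving the others fixed, lexicographically improving the sorted utility vector --- an equally valid and arguably more self-contained route.
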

\punt{\begin{proof}
It is straightforward to verify that the following observations hold.
\begin{ob}\label{ob:LMMF_in_NW}
For any $\oafd$ instance $I$, all allocations in $\LMMF(I)$ are NW.
\end{ob}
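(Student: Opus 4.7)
The plan is to argue by contradiction. Suppose $\allocationSymbol \in \LMMF(I)$ fails NW, so there exist an object $\object$ and an agent $\agent$ with $\sum_{a' \in \agents} \allocationSymbol(a', \object) < \supp(\object)$ and $\allocationSymbol(\agent, \object) < \dem(\agent, \object)$. I would choose $\epsilon > 0$ small enough that $\allocationSymbol(\agent,\object) + \epsilon \le \dem(\agent,\object)$ and $\sum_{a' \in \agents} \allocationSymbol(a', \object) + \epsilon \le \supp(\object)$, and define $\allocationSymbol'$ to agree with $\allocationSymbol$ everywhere except that $\allocationSymbol'(\agent,\object) = \allocationSymbol(\agent,\object) + \epsilon$. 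By the choice of $\epsilon$, $\allocationSymbol'$ lies in $\allocs{I}$, and unpacking the definition of $\utobj{\cdot}{\cdot}{\cdot}{\cdot}$ yields $\ut{\allocationSymbol'}{\dem}{\agent} = \ut{\allocationSymbol}{\dem}{\agent} + \epsilon$ while $\ut{\allocationSymbol'}{\dem}{a'} = \ut{\allocationSymbol}{\dem}{a'}$ for every other agent $a'$.

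It then remains to show that $\mathbf{u}(I,\allocationSymbol')$ is lexicographically strictly greater than $\mathbf{u}(I,\allocationSymbol)$, contradicting $\allocationSymbol \in \LMMF(I)$. The unsorted ratio vector $(\ut{\allocationSymbol'}{\dem}{a'}/\edw(a'))_{a' \in \agents}$ dominates $(\ut{\allocationSymbol}{\dem}{a'}/\edw(a'))_{a' \in \agents}$ coordinate-wise and is strictly larger at coordinate $\agent$. A routine rearrangement fact (the $j$-th smallest entry of a nonnegative vector is a monotone function of its multiset of entries) then gives $\mathbf{u}(I,\allocationSymbol')_j \ge \mathbf{u}(I,\allocationSymbol)_j$ for every $j$. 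Because the total sums differ by $\epsilon/\edw(\agent) > 0$, at least one index must be strict; taking the smallest such index yields strict lexicographic dominance.

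The only mildly delicate step is the rearrangement fact, but this is standard and can be invoked in one line. Everything else is immediate from the definitions of $\allocs{I}$, utility, and LMMF, so the whole argument fits in a short paragraph.
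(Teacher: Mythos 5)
Your proof is correct and follows essentially the same route the paper takes: the paper states this observation without proof as ``straightforward to verify,'' but its proof of Theorem~\ref{thm:NW} (the same fact for allocations in $\LMM(I)$) uses exactly your perturbation argument, merely raising $\allocationSymbol(\agent,\object)$ to $\min(\dem(\agent,\object),\supp(\object)-\allocationSymbol(\agents-\agent,\object))$ rather than by a small $\epsilon$, and leaving the sorting/lexicographic step implicit where you spell it out. The only caveat is that your negation of NW (reading $\allocationSymbol(\agent,\object)\neq\dem(\agent,\object)$ as $<$) implicitly uses frugality, which is harmless since the paper defines NW only on $\frugal(I)$ and only ever applies the observation to frugal allocations.
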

\begin{ob}\label{ob:min-cut}
For any $\oafd$ instance $I$, the capacity of a minimum cut of
$G_I(\infty)$ is $\supp_I(\objects)$.
\end{ob}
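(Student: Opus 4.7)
The plan is to bound the minimum cut capacity from both sides. For the upper bound, I will exhibit a specific cut: take $S = \{s\} \cup \agents \cup \objects$ and $\overline{S} = \{t\}$. The only edges crossing from $S$ to $\overline{S}$ are the edges $(\object, t)$ for each object $\object$, each with capacity $\suppt{I}{\object}$. Summing over all objects gives a cut of capacity $\suppt{I}{\objects} = \supp_I(B)$, so the minimum cut capacity is at most $\supp_I(B)$.

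For the lower bound, I will consider an arbitrary cut $(S, \overline{S})$ of $G_I(\infty)$. Since every edge $(s, \agent)$ for $\agent \in \agents$ has capacity $\edw(\agent) \cdot \infty = \infty$ in $G_I(\infty)$, any cut placing some agent in $\overline{S}$ has infinite capacity and so cannot be minimum; hence we may assume $\agents \subseteq S$. Partition the objects as $\objects_S = \objects \cap S$ and $\objects_T = \objects \cap \overline{S}$. The edges crossing the cut are exactly $(\object, t)$ for $\object \in \objects_S$ and $(\agent, \object)$ for $\agent \in \agents$, $\object \in \objects_T$, so the cut's capacity is
\[
\sum_{\object \in \objects_S} \suppt{I}{\object} \;+\; \sum_{\object \in \objects_T} \sum_{\agent \in \agents} \dem(\agent, \object).
\]

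The key step is then to invoke the definition $\suppt{I}{\object} = \min(\supp(\object), \sum_{\agent \in \agents} \dem(\agent, \object))$, which immediately yields $\suppt{I}{\object} \le \sum_{\agent \in \agents} \dem(\agent, \object)$ for every object $\object$. Substituting this bound into the inner sum over $\objects_T$, the cut capacity is at least
\[
\sum_{\object \in \objects_S} \suppt{I}{\object} \;+\; \sum_{\object \in \objects_T} \suppt{I}{\object} \;=\; \supp_I(B).
\]

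Combining the two bounds establishes that the minimum cut capacity equals $\supp_I(B)$. I do not expect any significant obstacle here; the argument is essentially a direct unfolding of the definition of $\supp_I$ together with the standard observation that infinite-capacity source edges force all agents onto the source side of any finite cut.
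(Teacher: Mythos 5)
Your proof is correct, and it is the standard verification the authors have in mind: the paper states this observation without proof (``It is straightforward to verify\ldots''), and your two-sided bound---the all-but-$t$ cut for the upper bound, and the infinite-capacity source edges plus the definition $\suppt{I}{\object} = \min(\supp(\object), \sum_{\agent \in \agents} \dem(\agent, \object))$ for the lower bound---is exactly the intended argument. No gaps.
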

	
We begin by proving the following useful claim.
	
Claim~1: Let $I = (\agents, \objects, \edw, \supp, \dem)$ be an
$\oafd$ instance. Let $\mu$ be an allocation in $\frugal(I)$ and let
$f$ be a flow in $G_I(\infty)$ such that $f(\agent, \object) =
\allocationSymbol(\agent, \object) $ for all $(\agent, \object)$ in
$\agents\times \objects$. Then $f$ is a maximum flow in $G_I(\infty)$ if
and only if $\mu$ is NW.
	
Proof: We first prove the only if direction.  Using the max-flow
min-cut theorem and Observation~\ref{ob:min-cut}, we deduce that the
value of flow $f$ is $\supp_I(\objects)$.  Since $\allocationSymbol$
is frugal and $ \allocationSymbol(\agent, \object)=f(\agent, \object)
$ for all $(\agent, \object)$ in $\agents\times \objects$, we have
$\allocationSymbol(\agents, \objects) = \supp_{I}(\objects)$.  Since
$\allocationSymbol$ is frugal and $\allocationSymbol(\agents,
\objects) = \supp_{I}(\objects)$, we deduce that
$\allocationSymbol(\agents, \object) = \suppt{I}{\object}$ for all
objects $\object$ in $\objects$, which further implies that
$\allocationSymbol$ is NW.  Now, we prove the if direction.  Since
$\mu$ belongs to $\frugal(I)$ and $\allocationSymbol$ is NW, we deduce
that $\allocationSymbol(\agents, \objects) = \supp_{I}(\objects)$.
Thus, the value of flow $f$ is $\supp_{I}(\objects)$.  Hence the
max-flow min-cut theorem and Observation~\ref{ob:min-cut} imply that
flow $f$ is a maximum flow in $G_I(\infty)$.  This concludes the proof
of Claim~1.
	
Let $I = (\agents, \objects, \edw, \supp, \dem)$ be an $\oafd$
instance.  Let $\AAA(I)$ denote the set of possible allocations
produced by algorithm $\AAA$ on input $I$.  Since $\LMM(I) =
\frugal(I)\cap \LMMF(I)$, it suffices to prove that $\AAA(I) =
\frugal(I)\cap \LMMF(I)$.  Thus Claims~2 and~3 below imply that the
lemma holds.
	
Claim~2: $\frugal(I)\cap \LMMF(I)\subseteq \AAA(I)$.

Proof: Let $\mu$ be an allocation in $\frugal(I)\cap \LMMF(I)$.
Observation~\ref{ob:frugal_flow} implies that there is a flow in
$G_I(\infty)$, call it $f$, such that $\allocationSymbol(\agent,
\object) = f(\agent, \object)$ for all $(a,\object)$ in $\agents\times \objects$.
Observation~\ref{ob:LMMF_in_NW} implies that $\mu$ is NW, and hence
Claim~1 implies that flow $f$ is a maximum flow in $G_I(\infty)$.  To
prove that $\mu$ belongs to $\AAA(I)$, it suffices to prove that
$f$ is a lexicographic flow in $G_I$.  Assume for the sake of
contradiction that $f$ is not a lexicographic flow in $G_I$.  Hence
there is a maximum flow $f'$ in $G_I(\infty)$ such that
$\theta(G_I,f')$ is lexicographically greater than $\theta(G_I,f)$.
Observation~\ref{ob:frugal_flow} implies that there is a frugal
allocation, call it $\allocationSymbol'$, such that
$\allocationSymbol'(\agent, \object) = f(\agent, \object)$ for all
$(\agent, \object)$ in $\agents\times \objects$.  Since
$\allocationSymbol(\agent, \object) = f(\agent, \object)$ and
$\allocationSymbol'(\agent, \object) = f'(\agent, \object)$ for all
$(a,\object)$ in $\agents\times \objects$, we deduce that
$\mathbf{\utility}(I,\allocationSymbol) =\theta(G_I,f)$ and
$\mathbf{\utility}(I,\allocationSymbol') =\theta(G_I,f')$,
respectively.  Since $\mathbf{\utility}(I,\allocationSymbol)
=\theta(G_I,f)$, $\mathbf{\utility}(I,\allocationSymbol')
=\theta(G_I,f')$, and $\theta(G_I,f')$ is lexicographically greater
than $\theta(G_I,f)$, we deduce that
$\mathbf{\utility}(I,\allocationSymbol')$ is lexicographically greater
than $\mathbf{\utility}(I,\allocationSymbol)$, a contradiction since
$\mu$ belongs to $ \LMMF(I)$. This concludes the proof of Claim~2.

Claim~3: $\AAA(I) \subseteq \frugal(I)\cap \LMMF(I)$.

Proof: Let $\allocationSymbol$ be an allocation function in $\AAA(I)$.
Let $f$ denote the lexicographic flow in $G_I$ selected by algorithm
$\AAA$; thus $f$ corresponds to $\allocationSymbol$.  Since any
lexicographic flow in $G_I$ is a maximum flow in $G_I(\infty)$, we
deduce from Observation~\ref{ob:frugal_flow} that $\allocationSymbol$
belongs to $\frugal(I)$.  It remains to prove that $\mu$ belongs to
$\LMMF(I)$. Assume for the sake of contradiction that
$\allocationSymbol$ does not belong to $\LMMF(I)$.  Hence there is an
allocation $\allocationSymbol'$ in $\LMMF(I)$ such that
$\mathbf{\utility}(I,\allocationSymbol')$ is lexicographically greater
than $\mathbf{\utility}(I,\allocationSymbol)$.  Let
$\allocationSymbol''$ be an allocation in $\allocs{I}$ such that
$\mu''(a,\object) = \min(\mu'(a,\object),d(a,\object))$ for all $(a,\object)$ in $\agents\times \objects$.  The
definition of $\allocationSymbol''$ implies that $\allocationSymbol''$
belongs to $\frugal(I)$.  Since the maximum utility an agent $a$ can
achieve from an object $b$ is $d(a, \object)$, we have
$\mathbf{\utility}(I,\allocationSymbol'') =
\mathbf{\utility}(I,\allocationSymbol')$.  Since
$\mathbf{\utility}(I,\allocationSymbol'') =
\mathbf{\utility}(I,\allocationSymbol')$ and $\allocationSymbol'$
belongs to $\LMMF(I)$, we deduce that $\allocationSymbol''$ belongs to
$\LMMF(I)$.  Since $\allocationSymbol''$ belongs to $\LMMF(I)$,
Observation~\ref{ob:LMMF_in_NW} implies that $\allocationSymbol''$ is
NW.  Since $\mathbf{\utility}(I,\allocationSymbol')$ is
lexicographically greater than
$\mathbf{\utility}(I,\allocationSymbol)$ and
$\mathbf{\utility}(I,\allocationSymbol'') =
\mathbf{\utility}(I,\allocationSymbol')$, we conclude that
$\mathbf{\utility}(I,\allocationSymbol'')$ is lexicographically
greater than $\mathbf{\utility}(I,\allocationSymbol)$.  Since
$\allocationSymbol''$ belongs to $\frugal(I)$ and
$\allocationSymbol''$ is NW, Observation~\ref{ob:frugal_flow} and
Claim~1 imply that there is a maximum flow in $G_I(\infty)$, call it
$f''$, such that $f''(\agent, \object) = \allocationSymbol''(\agent,
\object)$ for all $(\agent, \object)$ in $\agents\times\objects$.  Since
$\allocationSymbol(\agent, \object) = f(\agent, \object)$ and
$\allocationSymbol''(\agent, \object) = f''(\agent, \object)$ for all
$(a,\object)$ in $\agents\times \objects$, we deduce that
$\mathbf{\utility}(I,\allocationSymbol) =\theta(G_I,f)$ and
$\mathbf{\utility}(I,\allocationSymbol'') =\theta(G_I,f'')$,
respectively.  Since $\mathbf{\utility}(I,\allocationSymbol) =
\theta(G_I,f)$, $\mathbf{\utility}(I,\allocationSymbol'') =
\theta(G_I,f'')$, and $\mathbf{\utility}(I,\allocationSymbol'')$ is
lexicographically greater than
$\mathbf{\utility}(I,\allocationSymbol)$, we deduce that
$\theta(G_I,f'')$ is lexicographically greater than $\theta(G_I,f)$, a
contradiction since $f$ is a lexicographic flow in $G_I$.  This
concludes the proof of Claim~3.  
\end{proof}
}

We introduce some notations that are helpful in analysis of mechanism
$\LMM$.  The algorithm of Gallo et al.\ to find the lexicographic flow
computes the breakpoint function $\Lambda_I$ such that
$\brkfn{I}{\agent}$ denotes the breakpoint at which agent vertex
$\agent$ moves from the sink side to the source side of the
source-heavy minimum cut, for all agent vertices $\agent$ in
$\agents$.  Let $\num{I}$ denote $|\{\brkfn{I}{\agent} \mid \agent \in
\agents\}|$.  For any $i$ in $[\num{I}]$, let $\brkpts{I}{i}$ denote
the $i$th smallest value in $\{\brkfn{I}{\agent} \mid \agent \in
\agents\}$, and let $\agnts{I}{i}$ denote the set $\{\agent \in
\agents \mid \brkfn{I}{\agent} \le \brkpts{I}{i}\}$.  We define
$\agnts{I}{0}$ as $\emptyset$.  For any $i$ in $[\num{I}]$, and object
$\object$ in $\objects$, let $\caap(I,i,\object)$ denote
$\suppt{I}{\object} - \dem(\agnts{I}{i-1}, \object)$.  With
$\objs{I}{0}$ defined as $\emptyset$, for any $i$ in $[\num{I}]$, let
$\objs{I}{i}$ be recursively defined as the union of $\objs{I}{i-1}$
and
\begin{equation*}
\{\object \in \objects \setminus \objs{I}{i-1} \mid \dem(\agnts{I}{i} \setminus \agnts{I}{i-1}, \object) > \caap(I,i,\object)\}.
\end{equation*}

\subsection{Technical Properties of Mechanism $\LMM$}
\label{sec:structural}

In this section, we establish some basic technical results concerning
mechanism $\LMM$.  These results are used in
Section~\ref{sec:allocProps} (resp., Section~\ref{sec:GSP}) to derive
certain game-theoretic properties of frugal LMMF allocations (resp.,
mechanisms).  Throughout this section, let $I = (\agents, \objects,
\edw, \supp, \dem)$ denote an $\oafd$ instance and let $G$ denote
$G_I$.  We let $\Lambda$ and $k$ denote the breakpoint function
$\Lambda_I$ and the value $\num{I}$, respectively.  For any $i$ in
$[k]$, we let $\brkpt_i$, $\agents_i$, and $\objects_i$ denote
$\brkpts{I}{i}$, $\agnts{I}{i}$, and $\objs{I}{i}$, respectively.  In
addition, let $\agents_0$ denote $\agnts{I}{0}$ and let $\objects_0$
denote $\objs{I}{0}$.  For any $i$ in $[k]$ and any object $\object$
in $\objects$, we let $\capt_i(\object)$ denote $\caap(I, i, b)$.  For
any $i$ in $[k]$, and any non-empty subset $A'$ of $\agents \setminus
\agents_{i-1}$, let $\capacity_i(A')$ denote $\sum_{\object \in
  \objects \setminus \objects_{i-1}} \min \left(\capt_i(\object),
\dem(A', \object)\right)$.  Notice that for any non-empty subset $A'$
of $\agents$, we have $\capacity_1(\agents') = \capty{I}{\agents'}$.

\punt{Before proving Lemma~\ref{lem:lambda}, we show Lemma~\ref{lem:aux}
below which establishes important properties of the minimum
breakpoint of any agent vertex in each parametric flow network $G_i$.
Throughout this section, we make the following definitions for all $i$ in $[k]$:
$\Lambda^*_i$ denotes the breakpoint function of $G_i$; 
$\brkpt^*_i$ denotes the minimum breakpoint of
an agent vertex in $G_i$;
$\brkpt^{**}_i$ denotes $\min_{\emptyset \neq A' \subseteq \agents \setminus \agents_{i-1}} \capacity_i(A')/\edw(A')$;
$\agents^*_i$ denotes $\bigcup \{A' \subseteq \agents \setminus \agents_{i-1} \mid \capacity_i(A')
= \edw(A')\brkpt^*_i\}$; 
$\objects^*_i$ denotes $\{\object \in \objects \setminus \objects_{i-1} \mid \dem(\agents^*_i, \object)
> \capt_i(\object)\}$;
 $\Psi_1(i)$ denotes the predicate
``$\brkpt^*_i=\brkpt^{**}_i$'';
$\Psi_2(i)$ denotes the predicate ``$\Lambda^*_i(\agent) = \brkpt^*_i$ for all agents $\agent$ in $\agents^*_i$''; 
$\Psi_3(i)$ denotes the predicate ``for any flow in
$G_i(\infty)$ such that $f(s, \agent)=\edw(\agent)\brkpt^*_i$ for all
agent vertices $\agent$ in $\agents \setminus \agents_{i-1}$, we have
$f(\agent, \object) = \dem(\agent, \object)$ for all
$(\agent, \object)$ in $\agents^*_i \times
((\objects \setminus \objects_{i-1}) \setminus \objects^*_i)$'';
$\Psi_4(i)$ denotes the predicate ``for any flow in $G_i(\infty)$ such that
$f(s, \agent)=\edw(\agent)\brkpt^*_i$ for all agent vertices $\agent$ in
$\agents \setminus \agents_{i-1}$, we have $f(\agents^*_i, \object) =
f(\object, t) = \capt_i(\object)$ for all object vertices $\object$ in
$\objects^*_i$.''

\begin{lemma}
\label{lem:aux}
Let $i$ be in $[k]$.
Then predicate $\Psi_j(i)$ holds for all $j$ in
$\{1,\ldots,4\}$.
\end{lemma}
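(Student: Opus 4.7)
The plan is to analyze the minimum cut structure of the parametric flow network $G_i$ and then invoke max-flow/min-cut duality. The first step is to parameterize a cut $(S,\bar S)$ of $G_i(\lambda)$ with $s\in S$ by its agent-part $A'\subseteq\agents\setminus\agents_{i-1}$; given $A'$, the optimal object assignment places $b$ on the source side iff $\capt_i(b)\le\dem(A',b)$, giving cut capacity $\edw((\agents\setminus\agents_{i-1})\setminus A')\lambda+\capacity_i(A')$. Consequently $\kappa(\lambda)=\min_{A'}\{\edw((\agents\setminus\agents_{i-1})\setminus A')\lambda+\capacity_i(A')\}$.

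For $\Psi_1(i)$, observe that the $A'=\emptyset$ cut has capacity $\edw(\agents\setminus\agents_{i-1})\lambda$, and any non-empty $A'$ strictly improves on this iff $\lambda>\capacity_i(A')/\edw(A')$. Hence no agent vertex appears on the source side of any minimum cut until $\lambda$ reaches $\brkpt^{**}_i=\min_{A'\neq\emptyset}\capacity_i(A')/\edw(A')$, at which point every tight $A'$ yields a minimum cut; this gives $\brkpt^*_i=\brkpt^{**}_i$.

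For $\Psi_2(i)$, the central ingredient is the submodularity of $\capacity_i$, which follows object by object from the inequality $\min(x,a+b)\le\min(x,a)+\min(x,b)$. Combining the resulting submodular inequality $\capacity_i(A\cup A')+\capacity_i(A\cap A')\le\capacity_i(A)+\capacity_i(A')$ with the modularity of $\edw$ and the defining extremality of $\brkpt^*_i$, tightness of $A$ and $A'$ propagates to $A\cup A'$. Iterating, $\agents^*_i$ is itself tight, so the cut with source side $\{s\}\cup\agents^*_i\cup\{b\in\objects\setminus\objects_{i-1}:\dem(\agents^*_i,b)\ge\capt_i(b)\}$ is a minimum cut at $\lambda=\brkpt^*_i$. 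By the maximality of the source-heavy minimum cut, every $a\in\agents^*_i$ lies on its source side at this parameter, so $\Lambda^*_i(a)\le\brkpt^*_i$; the minimality of $\brkpt^*_i$ then forces equality.

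For $\Psi_3(i)$ and $\Psi_4(i)$, any flow $f$ satisfying $f(s,a)=\edw(a)\brkpt^*_i$ for all $a\in\agents\setminus\agents_{i-1}$ has value $\edw(\agents\setminus\agents_{i-1})\brkpt^*_i=\kappa(\brkpt^*_i)$, so $f$ is a maximum flow in $G_i(\brkpt^*_i)$, and the cut $S=\{s\}\cup\agents^*_i\cup\objects^*_i$ is a minimum cut since $\agents^*_i$ is tight. Max-flow/min-cut complementary slackness then saturates the edges $(b,t)$ for $b\in\objects^*_i$ and nulls the edges $(a,b)$ with $a\in(\agents\setminus\agents_{i-1})\setminus\agents^*_i$ and $b\in\objects^*_i$; flow conservation at $b$ gives $f(\agents^*_i,b)=f(b,t)=\capt_i(b)$, which is $\Psi_4(i)$. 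Applying $\Psi_4(i)$, flow conservation at the agents in $\agents^*_i$ yields total outflow $\edw(\agents^*_i)\brkpt^*_i=\capacity_i(\agents^*_i)$, and then the identity $\capacity_i(\agents^*_i)=\capt_i(\objects^*_i)+\dem(\agents^*_i,(\objects\setminus\objects_{i-1})\setminus\objects^*_i)$ forces the aggregate flow $f(\agents^*_i,(\objects\setminus\objects_{i-1})\setminus\objects^*_i)$ to equal its upper bound $\dem(\agents^*_i,(\objects\setminus\objects_{i-1})\setminus\objects^*_i)$, so $f(a,b)=\dem(a,b)$ on each such edge, completing $\Psi_3(i)$. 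I expect the main obstacle to be the submodularity-plus-source-heavy bookkeeping in $\Psi_2(i)$; the remaining predicates reduce to careful accounting with max-flow/min-cut.
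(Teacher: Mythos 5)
Your proof is correct and follows the same basic strategy as the paper's: parameterize the minimum cuts of $G_i(\lambda)$ by their agent part (your formula $\kappa(\lambda)=\min_{A'}\{\edw((\agents\setminus\agents_{i-1})\setminus A')\lambda+\capacity_i(A')\}$ is exactly what the paper's three-case analysis of the source-heavy minimum cut establishes), identify $\brkpt^{**}_i$ as the first breakpoint, and then saturate edges by a capacity-counting/complementary-slackness argument for $\Psi_3$ and $\Psi_4$. The one place where you genuinely diverge is $\Psi_2$: the paper derives it by showing that for any $\brkpt'>\brkpt^{**}_i$ the edges leaving $\agents^*_i\cup\objects^*_i$ form a bottleneck of total capacity $\capacity_i(\agents^*_i)=\edw(\agents^*_i)\brkpt^{**}_i<\edw(\agents^*_i)\brkpt'$, whereas you argue at $\lambda=\brkpt^*_i$ itself that $\agents^*_i$ is tight and hence lies on the source side of the source-heavy minimum cut. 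Both arguments hinge on the identity $\capacity_i(\agents^*_i)=\edw(\agents^*_i)\brkpt^{**}_i$, i.e., that the union of tight sets is tight; the paper asserts this silently, while you supply the missing justification via submodularity of $\capacity_i$ --- a worthwhile addition. One small imprecision there: the pointwise inequality you cite, $\min(x,a+b)\le\min(x,a)+\min(x,b)$, only yields subadditivity (the case of disjoint sets); for overlapping tight sets you need the full submodular inequality, which follows from concavity of $x\mapsto\min(\capt_i(\object),x)$ together with $\dem(A\cup A',\object)+\dem(A\cap A',\object)=\dem(A,\object)+\dem(A',\object)$, plus the observation that $\capacity_i(A\cap A')\ge\edw(A\cap A')\brkpt^{**}_i$ by minimality of $\brkpt^{**}_i$. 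With that repair the argument is complete.
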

\begin{proof}
Recall that the set of agent (resp., object) vertices in $G_i$ is
$\agents \setminus \agents_{i-1}$ (resp.,
$\objects \setminus \objects_{i-1})$.  
We first establish the
following useful claim, which implies that $\brkpt^*_i \ge \brkpt^{**}_i$.
	
Claim~1: A maximum flow in $G_i(\brkpt^{**}_i)$ has value
$\edw(\agents \setminus \agents_{i-1})\brkpt^{**}_i$.
	
Proof: To prove that a maximum flow in $G_i(\brkpt^{**}_i)$ has value
$\edw(\agents \setminus \agents_{i-1})\brkpt^{**}_i$, it is sufficient to
argue that a minimum cut in $G_i(\brkpt^{**}_i)$ has capacity
$\edw(\agents \setminus \agents_{i-1})\brkpt^{**}_i$.  Let the source-heavy
minimum cut in $G_i(\brkpt^{**}_i)$ be $(S, \overline{S})$ and let
$\agents'$ denote $S \cap (\agents \setminus \agents_{i-1})$.  We
begin by showing that $S \cap (\objects \setminus \objects_{i-1})
= \{\object \in \objects \setminus \objects_{i-1} \mid \dem(\agents', \object) \ge \capt_i(\object)\}$.
Assume for the sake of contradiction that 
this equation does not hold.
We consider three cases.
	
Case~1: There is an object vertex $\object$ in
$\overline{S}\cap(\objects \setminus \objects_{i-1})$ such that
$\dem(\agents', \object) > \capt_i(\object)$.  Hence the capacity
of the cut $(S+\object, \overline{S}-\object)$ is
$\dem(\agents', \object) - \capt_i(\object) > 0$ less than the
capacity of the cut $(S, \overline{S})$, a contradiction since
$(S, \overline{S})$ is a minimum capacity cut.
	
Case~2: There is an object vertex $\object$ in
$S\cap(\objects \setminus \objects_{i-1})$ such that $\capt_i(\object)
> \dem(\agents', \object)$.  Hence the capacity of the cut
$(S-\object, \overline{S}+\object)$ is $\capt_i(\object)
- \dem(\agents', \object) > 0$ less than the capacity of the cut
$(S, \overline{S})$, a contradiction since $(S, \overline{S})$ is a
minimum capacity cut.
	
Case~3: There is an object $\object$ in
$\overline{S}\cap(\objects \setminus \objects_{i-1})$ such that
$\capt_i(\object) = \dem(\agents', \object)$.  Hence the cuts
$(S+\object, \overline{S}-\object)$ and $(S, \overline{S})$ have the
same capacity, but $(S+\object, \overline{S}-\object)$ has a larger
source side, a contradiction.
	
From the above case analysis, $S \cap
(\objects \setminus \objects_{i-1})
= \{\object \in \objects \setminus \objects_{i-1} \mid \dem(\agents', \object) \ge \capt_i(\object)\}$.
Thus the capacity of the cut $(S, \overline{S})$ is 
\begin{equation*}
\edw(\agents \setminus (\agents_{i-1}\cup\agents'))\brkpt^{**}_i + \sum_{\object \in \objects \setminus \objects_{i-1}} \min(\capt_i(\object), \dem(\agents', \object))
		= \edw(\agents \setminus (\agents_{i-1}\cup\agents'))\brkpt^{**}_i + \capacity_i(\agents').
\end{equation*}
The definition of $\brkpt^{**}_i$ implies that
$\edw(\agents') \brkpt^{**}_i \le \capacity_i(\agents')$.  Thus the
capacity of the minimum cut is at least
$\edw(\agents \setminus \agents_{i-1})\brkpt^{**}_i$.  Moreover, the capacity
of cut $({s}, V\setminus{s})$ is
$\edw(\agents \setminus \agents_{i-1})\brkpt^{**}_i$. Thus the capacity of a
minimum cut of $G_i(\brkpt^{**}_i)$ is
$\edw(\agents \setminus \agents_{i-1})\brkpt^{**}_i$.  This concludes the
proof of Claim~1.
	
Let $\brkpt'$ be a value greater than $\brkpt^{**}_i$.  We show that there is
no flow in $G_i(\brkpt')$ such that every agent vertex $\agent$ has
incoming flow $\edw(\agent)\brkpt'$.  Assume for the sake of
contradiction that there is a flow such that every agent vertex
$\agent$ has incoming flow $\edw(\agent)\brkpt'$.
The total capacity of the edges leaving $\agents^*_i \cup \objects^*_i$ is
\begin{equation*}
	\sum_{\object \in \objects \setminus \objects_{i-1}} \min(\capt_i(\object), \dem(\agents^*_i, \object))
	= \capacity_i(\agents^*_i). 
\end{equation*}
Since $\capacity_i(\agents^*_i)
= \edw(\agents^*_i)\brkpt^{**}_i < \edw(\agents^*_i)\brkpt'$,
the total capacity of
the edges leaving $\agents^*_i \cup \objects^*_i$ is less than the total
flow into the set $\agents^*_i \cup \objects^*_i$, a contradiction.  This
result, together with Claim~1, establishes that $\Psi_1(i)$ and
$\Psi_2(i)$ hold.

Let $f$ be a flow in $G_i(\infty)$ such that
$f(s, \agent)=\edw(\agent)\brkpt^*_i$ for all agent vertices $\agent$ in
$\agents \setminus \agents_{i-1}$
Since the total capacity of the edges leaving
$\agents^*_i \cup \objects^*_i$ is $\capacity_i(\agents^*_i)$, which is equal
to the total flow $\edw(\agents^*_i)\brkpt^*_i$ into
$\agents^*_i \cup \objects^*_i$ in $f$, we deduce that $f(\edge)
= \capt(\edge)$ for all edges $\edge$ leaving
$\agents^*_i \cup \objects^*_i$. Thus $f(\object, t) = \capt_i(\object)$ for
all object vertices $\object$ in $\objects^*_i$, and $f(\agent, \object)
= \dem(\agent, \object)$ for all $(\agent, \object)$ in
$\agents^*_i \times
((\objects \setminus \objects_{i-1}) \setminus \objects^*_i)$.
Moreover, since the
total flow into $\agents^*_i \cup \objects^*_i$ is
$\capacity_i(\agents^*_i)= \capt_i(\objects^*_i) + \dem(\agents^*_i,
(\objects \setminus \objects_{i-1}) \setminus \objects^*_i)$, we have
$f(\agents^*_i, \objects^*_i) = \capt_i(\objects^*_i)$.  
It follows that $f(\agents^*_i, \object) = f(\object, t)$ for all object vertices $\object$ in $\objects^*_i$.  We conclude that $\Psi_3(i)$
and $\Psi_4(i)$ hold.
\end{proof}
}

Lemma~\ref{lem:lambda} characterizes the values of the breakpoints of
$G$, and the breakpoint associated with each agent vertex.  It also
establishes a connection between a lexicographic flow in $G$ and the
sets $\agents_1, \dots, \agents_k, \objects_1, \dots, \objects_k$.
The result of Lemma~\ref{lem:lambda} is similar in spirit to
Theorem~4.6 of Megiddo \cite{megiddo1974} for general parametric flow
networks.  Since we work with parametric flow networks with a special
structure, we are able to obtain a more specific result and we can
characterize a lexicographic flow in greater detail.  Our proof of
Lemma~\ref{lem:lambda} is not based on Megiddo's proof; instead, we
provide a simpler proof for our special case.  Our formulation of
Lemma~\ref{lem:lambda} generalizes Megiddo's result in one aspect,
since it allows for agents with different endowments; this
generalization is straightforward.

Consider a sequence of parametric flow networks $G_1, \dots, G_{k}$,
where $G_i$ is the subgraph of $G$ induced by $(\agents \setminus
\agents_{i-1}) \cup (\objects \setminus \objects_{i-1}) \cup \{s,t\}$,
except that for any object vertex $\object$ in $G_i$, the capacity of
edge $(\object, t)$ is defined to be $\capt_i(\object)$.  Remark: It
follows easily from Lemma~\ref{lem:lambda} below that
$\capt_i(\object)\geq 0$.

For any $i$ in $[k]$, we define the following predicates:
$\Gamma_1(i)$ denotes ``the minimum breakpoint of any agent vertex in
$G_i$ is $\brkpt_i$''; $\Gamma_2(i)$ denotes ``$\brkpt_i$ is equal to
$\min_{\emptyset\neq \tilde{A} \subseteq\agents\setminus\agents_{i-1}}
\capacity_i(\tilde{A})/\edw(\tilde{A})$''; $\Gamma_3(i)$ denotes
``$\agents_i$ is equal to $\agents_{i-1} \cup \bigcup \{\tilde{A}
\subseteq \agents \setminus \agents_{i-1} \mid \capacity_i(\tilde{A})
= \edw(\tilde{A})\brkpt_i\}$''; $\Gamma_4(i)$ denotes ``for any
lexicographic flow $f$ in $G$, we have $f(\agent, \object) =
\dem(\agent, \object)$ for all $(\agent, \object)$ in $(\agents_i
\setminus \agents_{i-1}) \times (\objects \setminus \objects_i)$'';
$\Gamma_5(i)$ denotes ``for any lexicographic flow $f$ in $G$, we have
$f(\agents_i,\object)=f(\object,t)=\suppt{I}{\object}$ and $f(\agent,
\object)=0$ for all $(\agent, \object)$ in
$(\agents\setminus\agents_i)\times(\objects_i\setminus\objects_{i-1})$.''
\begin{lemma}\label{lem:lambda}
  Predicate $\Gamma_j(i)$ holds for all $i$ in $[k]$ and all $j$ in $\{1,\ldots,5\}$.
\end{lemma}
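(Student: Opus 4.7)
The plan is to prove all five predicates $\Gamma_1(i),\ldots,\Gamma_5(i)$ simultaneously by induction on $i\in[k]$. The inductive step at $i$ will reduce the problem to a self-contained analysis of the truncated parametric network $G_i$: using $\Gamma_4(i'),\Gamma_5(i')$ for $i'<i$, any lexicographic flow $f$ of $G$ already saturates the demands from $\agents_{i-1}$ to $\objects\setminus\objects_{i-1}$, fully loads every object of $\objects_{i-1}$, and routes no flow from $\agents\setminus\agents_{i-1}$ into $\objects_{i-1}$; restricting $f$ to the remaining vertices then yields a flow in $G_i$ in which every source-edge $(s,\agent)$ carries $\edw(\agent)\brkpt_i$. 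Thus it suffices to understand the first breakpoint and its associated cut in a generic $G_j$.

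Fix $j$ and set $\brkpt^{**}_j:=\min_{\emptyset\neq\tilde A\subseteq\agents\setminus\agents_{j-1}}\capacity_j(\tilde A)/\edw(\tilde A)$. Let $\agents^*_j$ be the maximal $\tilde A$ attaining this minimum (well-defined since the set of minimizers is closed under union by a submodularity-style argument), and let $\objects^*_j:=\{\object\in\objects\setminus\objects_{j-1}:\dem(\agents^*_j,\object)>\capt_j(\object)\}$. A min-cut swap argument in $G_j(\brkpt^{**}_j)$ shows that the source-heavy minimum cut has source side exactly $\{s\}\cup\agents^*_j\cup\objects^*_j$: for any candidate cut with agent part $A'$, move each object $\object$ to the side that weakly decreases capacity, forcing $\object$ to the source side iff $\dem(A',\object)\ge\capt_j(\object)$, which in turn forces $A'$ to attain the minimum defining $\brkpt^{**}_j$. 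The resulting capacity is $\edw(\agents\setminus\agents_{j-1})\brkpt^{**}_j$, so every source-edge is saturated at parameter $\brkpt^{**}_j$; conversely, for any $\brkpt'>\brkpt^{**}_j$, the out-capacity $\capacity_j(\agents^*_j)=\edw(\agents^*_j)\brkpt^{**}_j$ of $\agents^*_j\cup\objects^*_j$ is strictly less than the inflow $\edw(\agents^*_j)\brkpt'$ that would be required, so no flow saturates all source-edges. Together these pin the first breakpoint of $G_j$ at $\brkpt^{**}_j$ and force every agent of $\agents^*_j$ to have breakpoint exactly $\brkpt^{**}_j$ in $G_j$. A tight counting step at equality shows that any flow in $G_j(\infty)$ saturating every source-edge at parameter $\brkpt^{**}_j$ must also saturate every edge leaving $\agents^*_j\cup\objects^*_j$, giving $f(\agent,\object)=\dem(\agent,\object)$ for $(\agent,\object)\in\agents^*_j\times((\objects\setminus\objects_{j-1})\setminus\objects^*_j)$ and $f(\agents^*_j,\object)=f(\object,t)=\capt_j(\object)$ for $\object\in\objects^*_j$.

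Applying the above analysis at level $i$ to the restricted flow identified in paragraph~one immediately forces $\brkpt_i=\brkpt^{**}_i$, $\agents_i\setminus\agents_{i-1}=\agents^*_i$, and $\objects_i\setminus\objects_{i-1}=\objects^*_i$, which is precisely $\Gamma_1(i)$--$\Gamma_3(i)$ after substituting the definitions of $\agents^*_i$ and $\objects^*_i$; the saturation properties become $\Gamma_4(i)$ and $\Gamma_5(i)$. The main obstacle I expect is the cut-decomposition bookkeeping that underwrites the reduction from $G$ to $G_i$: formally verifying that the source-heavy minimum cut of $G$ at parameter $\brkpt_i$ is the disjoint union of the source-heavy minimum cuts of $G_{i'}(\brkpt_{i'})$ for $i'\le i$, so that the agents crossing to the source side between $\brkpt_{i-1}$ and $\brkpt_i$ form precisely $\agents^*_i$. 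I plan to handle this by computing the capacity of the stitched cut layer by layer, comparing it against $\edw(\agents\setminus\agents_i)\brkpt_i$ plus contributions from the earlier layers, and invoking uniqueness of the source-heavy minimum cut to conclude.
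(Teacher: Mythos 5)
Your proposal matches the paper's proof in all essentials: the paper likewise proves all five predicates simultaneously by induction on $i$, isolates the per-layer analysis of $G_i$ into an auxiliary lemma whose content is exactly your second paragraph (a source-heavy min-cut swap argument pinning the first breakpoint of $G_i$ at $\min_{\emptyset\neq\tilde{A}\subseteq\agents\setminus\agents_{i-1}}\capacity_i(\tilde{A})/\edw(\tilde{A})$, infeasibility above that value via the out-capacity $\capacity_i(\agents^*_i)$ of $\agents^*_i\cup\objects^*_i$, and a tight-saturation count yielding the $\Gamma_4$ and $\Gamma_5$ conclusions), and uses $\Gamma_4(i')$ and $\Gamma_5(i')$ for $i'<i$ to transfer the analysis from $G$ to $G_i$. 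The only noteworthy difference lies in the transfer step you flag as the main obstacle: the paper argues via the restriction of the lexicographic flow of $G$ (showing it is feasible in $G_i$ and saturates the right edges) rather than by stitching together source-heavy minimum cuts, but both routes are sound and the paper is itself terse at precisely this point.
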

\punt{\begin{proof}
	Let $f$ denote a lexicographic flow in $G$ and for any $i$ in $[k]$, let $P(i)$ denote the predicate ``$\Gamma_j(i)$ holds for all $j$ in $\{1, \ldots, 5\}$.''
	We prove by induction that $P(i)$ holds for all $i$ in $[k]$.

	Base case:
	Since $\agents_0 = \emptyset$, we have $\capt_1(\object) = \suppt{I}{\object}$ for all object vertices $\object$ in $\objects$. 
	Thus $G_1 = G$, and hence $\Gamma_1(1)$ holds.
	Lemma~\ref{lem:aux} implies that $\Psi_1(1)$ and $\Psi_2(1)$ hold; hence $\Gamma_2(1)$ and $\Gamma_3(1)$ hold.
	Since $\Psi_3(1)$ and $\Psi_4(1)$ hold by Lemma~\ref{lem:aux}, $\brkpt^*_1 = \brkpt_1$, $\agents^*_1 = \agents_1 \setminus \agents_0$, and $\objects^*_1 = \objects_1 \setminus \objects_0$, we deduce that $\Gamma_4(1)$ and $\Gamma_5(1)$ hold.

	Induction step: Let $i$ belong to $\{2, \dots, k\}$ and 
	assume that $P(i')$ holds for all $i'$ in $[i-1]$.
	We need to prove that $P(i)$ holds. 
	Let $\object$ be an object vertex in $\objects \setminus \objects_{i-1}$.
	Since the induction hypothesis implies that $\Gamma_4(i')$ holds for all  $i'$ in $[i-1]$, we deduce that $f(\agent, \object) = \dem(\agent, \object)$ for all $\agent$ in $\agents_{i-1}$.
	Thus $f(\agents \setminus \agents_{i-1}, \object) \le \suppt{I}{\object} - \dem(\agents_{i-1}, \object) = \capt_i(\object)$.
	Moreover, since the induction hypothesis implies that $\Gamma_5(i')$ holds for all $i'$ in $[i-1]$, we deduce that $f(\agent, \object') = 0$ for all $(\agent, \object')$ in $(\agents \setminus \agents_{i-1}) \times \objects_{i-1}$.
	From the aforementioned results, it is straightforward to verify that $\Gamma_1(i)$ holds.
	Lemma~\ref{lem:aux} implies that $\Psi_1(i)$ and $\Psi_2(i)$ hold; hence $\Gamma_2(i)$ and $\Gamma_3(i)$ hold.
	Since $\Psi_3(i)$ holds by Lemma~\ref{lem:aux}, $\brkpt^*_i = \brkpt_i$, $\agents^*_i = \agents_i \setminus \agents_{i-1}$, and $\objects^*_i = \objects_i \setminus \objects_{i-1}$, we deduce that
	$\Gamma_4(i)$ holds.
	Let $\object'$ be an object vertex in $\objects_i \setminus \objects_{i-1}$.
	Predicate $\Psi_4(i)$ implies that $f(\agents_i \setminus \agents_{i-1}, \object') = c_i(\object') = \suppt{I}{\object'} - \dem(\agents_{i-1}, \object')$.
	Since the induction hypothesis implies $\Gamma_4(i')$ holds for all $i'$ in $[i-1]$, we deduce that $f(\agent, \object') = \dem(\agent, \object')$ for all $\agent$ in $\agents_{i-1}$.
	Hence $f(\agents_{i-1}, \object') = \dem(\agents_{i-1}, \object')$.
	Since $f(\agents_{i-1}, \object') = \dem(\agents_{i-1}, \object')$ and $f(\agents_i \setminus \agents_{i-1}, \object') = \suppt{I}{\object'} - \dem(\agents_{i-1}, \object')$, we deduce that $f(\agents_i, \object') = \suppt{I}{\object'} = f(\object', t)$, where the last equality holds because the capacity of edge $(\object', t)$ is $\suppt{I}{\object'}$.
	Since $f(\agents_i, \object') = \suppt{I}{\object'}$ and $\suppt{I}{\object'}$ is the capacity of edge $(\object', t)$, we deduce that $f(\agent, \object) = 0$ for all $\agent$ in $\agents \setminus \agents_i$, which establishes $\Gamma_5(i)$.
	We conclude that $P(i)$ holds, as required.
\end{proof}

}


We now prove some results about frugal LMMF allocations.  Throughout
the remainder of the section, let $\allocationSymbol$ denote an
allocation in $\LMM(I)$.  The definition of mechanism $\LMM$ implies
that $\allocationSymbol$ is frugal and LMMF.  Recall that algorithm
$\AAA$ first computes a lexicographic flow $f$ in $G$ such that
$\allocationSymbol(\agent, \object) = f(\agent, \object)$ for all
$(\agent,\object)$ in $\agents\times\objects$.

Corollaries~\ref{cor:under_demands_rounds}
and~\ref{cor:over_demands_rounds} below describe structural properties
of the allocation $\allocationSymbol$ that follow immediately from
predicates~$\Gamma_4$ and~$\Gamma_5$, respectively.
\begin{corollary}\label{cor:under_demands_rounds}
	For any $i$ in $[k]$, any agent $\agent$ in $\agents_i \setminus \agents_{i-1}$, and any object $\object$ in $\objects \setminus \objects_i$, we have $\allocationSymbol(\agent,\object) = \dem(\agent,\object)$. 
\end{corollary}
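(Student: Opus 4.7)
The plan is to simply read the conclusion off of predicate $\Gamma_4$ from Lemma~\ref{lem:lambda} after translating it from the flow network $G$ to the allocation $\allocationSymbol$. Recall that algorithm $\AAA$ produces $\allocationSymbol$ from a lexicographic flow $f$ in $G=G_I$ via the correspondence $\allocationSymbol(\agent,\object) = f(\agent,\object)$ (Observation~\ref{ob:frugal_flow}), and Lemma~\ref{lem:frugal_LMF} ensures that every $\allocationSymbol$ in $\LMM(I)$ arises this way.

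Fix $i$ in $[k]$, $\agent$ in $\agents_i \setminus \agents_{i-1}$, and $\object$ in $\objects \setminus \objects_i$. Since $\objects_{i-1}\subseteq\objects_i$, we have $\object \in \objects \setminus \objects_{i-1}$ as well, so $(\agent,\object)$ lies in the set $(\agents_i \setminus \agents_{i-1}) \times (\objects \setminus \objects_i)$ referenced in the statement of $\Gamma_4(i)$. Applying Lemma~\ref{lem:lambda} with predicate $\Gamma_4(i)$ to the lexicographic flow $f$ associated with $\allocationSymbol$ yields $f(\agent,\object) = \dem(\agent,\object)$, and hence $\allocationSymbol(\agent,\object) = \dem(\agent,\object)$, as required.

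There is essentially no obstacle here: the corollary is a direct restatement of $\Gamma_4$ under the flow-to-allocation correspondence. The only mild subtlety is confirming that $\objects \setminus \objects_i \subseteq \objects \setminus \objects_{i-1}$, which follows from the monotonicity of the sets $\objects_j$ built into the recursive definition of $\objs{I}{i}$.
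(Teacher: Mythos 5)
Your proof is correct and matches the paper's approach: the paper derives this corollary directly from predicate $\Gamma_4$ of Lemma~\ref{lem:lambda} together with the flow-to-allocation correspondence, exactly as you do. (The containment $\objects \setminus \objects_i \subseteq \objects \setminus \objects_{i-1}$ you verify is not even needed, since $\Gamma_4(i)$ is already stated for the set $(\agents_i \setminus \agents_{i-1}) \times (\objects \setminus \objects_i)$.)
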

\begin{corollary}\label{cor:over_demands_rounds}
	For any $i$ in $[k]$, any agent $\agent$ in $\agents \setminus \agents_{i}$, and any object $\object$ in $\objects_{i}$, we have
	$\allocationSymbol(\agent,\object)=0$.
\end{corollary}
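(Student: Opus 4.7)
The plan is to observe that Corollary~\ref{cor:over_demands_rounds} is an almost immediate consequence of predicate $\Gamma_5$ established in Lemma~\ref{lem:lambda}, together with the fact that algorithm $\AAA$ produces an allocation $\allocationSymbol$ that corresponds directly to a lexicographic flow $f$ in $G$.

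First, I would fix $i$ in $[k]$, an agent $\agent$ in $\agents \setminus \agents_i$, and an object $\object$ in $\objects_i$. By the recursive definition of the sets $\objects_0, \objects_1, \ldots, \objects_k$ (where $\objects_0 = \emptyset$), there is some index $j$ in $[i]$ such that $\object$ belongs to $\objects_j \setminus \objects_{j-1}$. Since the sets $\agents_0, \agents_1, \ldots, \agents_k$ form a nested chain of subsets of $\agents$, and $j \le i$, we have $\agents_j \subseteq \agents_i$, which gives $\agent \in \agents \setminus \agents_i \subseteq \agents \setminus \agents_j$.

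Next, applying Lemma~\ref{lem:lambda} at index $j$, in particular predicate $\Gamma_5(j)$, to the lexicographic flow $f$ in $G$ selected by algorithm $\AAA$, we conclude that $f(\agent, \object) = 0$ for every pair $(\agent, \object)$ in $(\agents \setminus \agents_j) \times (\objects_j \setminus \objects_{j-1})$. In particular, this applies to the pair we fixed above, so $f(\agent, \object) = 0$. Finally, since $\allocationSymbol(\agent, \object) = f(\agent, \object)$ for all $(\agent, \object)$ in $\agents \times \objects$, we obtain $\allocationSymbol(\agent, \object) = 0$, as required.

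There is essentially no obstacle here: the only subtlety is unpacking the indexing, namely, recognizing that $\objects_i$ is a disjoint union of the layers $\objects_j \setminus \objects_{j-1}$ for $j \in [i]$, so that the ``layerwise'' conclusion of $\Gamma_5$ upgrades automatically to a conclusion about all of $\objects_i$ once one uses the monotonicity $\agents_j \subseteq \agents_i$.
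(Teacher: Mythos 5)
Your proof is correct and matches the paper's approach: the paper states that this corollary follows immediately from predicate $\Gamma_5$ of Lemma~\ref{lem:lambda} together with the flow--allocation correspondence, and your argument simply spells out the layer decomposition $\objects_i = \bigcup_{j \in [i]} (\objects_j \setminus \objects_{j-1})$ and the nesting $\agents_j \subseteq \agents_i$ that make this immediate.
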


Lemma~\ref{lem:allocation} below establishes some basic results that
are useful for many of our subsequent proofs.  For example, we use
Lemma~\ref{lem:allocation} along with
Corollaries~\ref{cor:under_demands_rounds}
and~\ref{cor:over_demands_rounds} to prove that any frugal LMMF
allocation is EF (Theorem~\ref{thm:EF}).
\begin{lemma}
\label{lem:allocation}
Let $\agent$ be an agent in $\agents$ and let $\object$ be an object
in $\objects$.  Then, $\allocationSymbol(\agent, \object)$ belongs to
$[0,\dem(\agent,\object)]$, and $\ut{\allocationSymbol}{\dem}{\agent}
= \allocationSymbol(\agent,\objects) = \edw(\agent)\brkfn{}{\agent}$.
\end{lemma}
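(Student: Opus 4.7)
The first two assertions are routine: $\allocationSymbol(\agent,\object)\in[0,\dem(\agent,\object)]$ since $\allocationSymbol$ is a non-negative frugal allocation, and $\ut{\allocationSymbol}{\dem}{\agent}=\sum_{\object\in\objects}\min(\allocationSymbol(\agent,\object),\dem(\agent,\object))=\allocationSymbol(\agent,\objects)$ by frugality. The substantive task is to establish $\allocationSymbol(\agent,\objects)=\edw(\agent)\Lambda(\agent)$.

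The plan is to work with the lexicographic flow $f$ in $G$ corresponding to $\allocationSymbol$ (introduced just before the lemma). Flow conservation at agent vertex $\agent$ yields $\allocationSymbol(\agent,\objects)=f(s,\agent)$, so it suffices to prove $f(s,\agent)=\edw(\agent)\Lambda(\agent)$. Let $i\in[k]$ be the unique index with $\agent\in\agents_i\setminus\agents_{i-1}$, so that $\Lambda(\agent)=\brkpt_i$ by predicate $\Gamma_3(i)$ of Lemma~\ref{lem:lambda}. The strategy is to establish a per-slice equality $\sum_{\agent'\in\agents_i\setminus\agents_{i-1}}f(s,\agent')=\edw(\agents_i\setminus\agents_{i-1})\brkpt_i$ and pair it with the per-agent upper bound $f(s,\agent')\le\edw(\agent')\brkpt_i$; since the sum of the bounds then matches the total, equality must hold term by term.

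For the per-slice identity, fix $\agent'\in\agents_i\setminus\agents_{i-1}$. Corollary~\ref{cor:under_demands_rounds} yields $\allocationSymbol(\agent',\object)=\dem(\agent',\object)$ for $\object\in\objects\setminus\objects_i$, and Corollary~\ref{cor:over_demands_rounds} (vacuous at $i=1$) yields $\allocationSymbol(\agent',\object)=0$ for $\object\in\objects_{i-1}$. For $\object\in\objects_i\setminus\objects_{i-1}$, predicate $\Gamma_5(i)$ gives $\sum_{\agent''\in\agents_i}\allocationSymbol(\agent'',\object)=\suppt{I}{\object}$, and Corollary~\ref{cor:under_demands_rounds} applied to each $\agent''\in\agents_{i-1}$ gives $\allocationSymbol(\agents_{i-1},\object)=\dem(\agents_{i-1},\object)$; subtracting yields $\sum_{\agent'\in\agents_i\setminus\agents_{i-1}}\allocationSymbol(\agent',\object)=\suppt{I}{\object}-\dem(\agents_{i-1},\object)=\capt_i(\object)$. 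Summing over $\object$ and splitting by the defining dichotomy of $\objects_i$ (the min in $\capacity_i(\agents_i\setminus\agents_{i-1})$ is attained at $\dem$ on $\objects\setminus\objects_i$ and at $\capt_i$ on $\objects_i\setminus\objects_{i-1}$), we obtain $\sum_{\agent'\in\agents_i\setminus\agents_{i-1}}\allocationSymbol(\agent',\objects)=\capacity_i(\agents_i\setminus\agents_{i-1})$, which equals $\edw(\agents_i\setminus\agents_{i-1})\brkpt_i$ by $\Gamma_3(i)$.

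For the per-agent upper bound, I invoke the Gallo et al.\ characterization recalled in Section~\ref{sec:prelims}: the lexicographic flow produced by algorithm $\AAA$ is a maximum flow in the modified network obtained from $G$ by resetting the capacity of each edge $(s,\agent')$ to $\edw(\agent')\Lambda(\agent')$; hence $f(s,\agent')\le\edw(\agent')\brkpt_i$ for every $\agent'\in\agents_i\setminus\agents_{i-1}$, completing the argument. The main obstacle is precisely this apportionment step: the per-slice total follows cleanly from Lemma~\ref{lem:lambda} alone, but distributing it correctly among the individual agents of the slice requires the extra capacity information supplied by the Gallo et al.\ construction (equivalently, a rebalancing argument exploiting the lex-maximality of $\theta(G,f)$).
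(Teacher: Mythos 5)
Your proof is correct, but it reaches the key identity $\allocationSymbol(\agent,\objects)=\edw(\agent)\brkfn{}{\agent}$ by a genuinely different and considerably longer route than the paper. The paper's proof is essentially one line: the flow on edge $(s,\agent)$ in the lexicographic flow equals $\edw(\agent)\brkfn{}{\agent}$ by the very characterization of the breakpoint function via Gallo et al.'s capacity-resetting construction, and the chain $\edw(\agent)\brkfn{}{\agent}=f(s,\agent)=f(\agent,\objects)=\allocationSymbol(\agent,\objects)=\ut{\allocationSymbol}{\dem}{\agent}$ then follows from flow conservation and frugality. You instead derive the per-slice total $\sum_{\agent'\in\agents_i\setminus\agents_{i-1}}f(s,\agent')=\capacity_i(\agents_i\setminus\agents_{i-1})=\edw(\agents_i\setminus\agents_{i-1})\brkpt_i$ from Corollaries~\ref{cor:under_demands_rounds} and~\ref{cor:over_demands_rounds} and predicate $\Gamma_5(i)$, and then apportion it among the agents of the slice using the per-agent upper bound $f(s,\agent')\le\edw(\agent')\brkpt_i$. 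The net effect is that you still invoke the same external fact as the paper's one-liner (that a lexicographic flow respects the reset capacities $\edw(\agent)\brkfn{}{\agent}$ on the source edges), but only as an upper bound, recovering the matching total from structural facts already proved; this makes the apportionment more transparent at the cost of length, and has the side benefit of re-deriving the slice identity underlying Lemma~\ref{lem:leeq}. Two small points to tighten: (i) $\capacity_i(\agents_i\setminus\agents_{i-1})=\edw(\agents_i\setminus\agents_{i-1})\brkpt_i$ does not follow from $\Gamma_3(i)$ alone, since $\agents_i\setminus\agents_{i-1}$ is a \emph{union} of minimizing sets; you also need $\Gamma_2(i)$ together with submodularity of $\capacity_i$ (the same closure-under-union fact the paper uses implicitly for $\agents^*_i$ in Lemma~\ref{lem:aux}); (ii) the paper only states that every maximum flow of the modified network is a lexicographic flow, not the converse, so your upper bound tacitly uses the (true, and assumed throughout the paper) fact that the vector $(f(s,\agent))_{\agent\in\agents}$ is the same for every lexicographic flow.
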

\punt{\begin{proof}
	The capacity of edge $(\agent, \object)$ in $G$ is $\dem(\agent, \object)$. 
	Hence $\allocationSymbol(\agent, \object) = f(\agent, \object)$ belongs to $[0,\dem(\agent,\object)]$.
	Flow $f$ satisfies $\edw(\agent)\brkfn{}{\agent} = f(s, \agent) = f(\agent, \objects) = \allocationSymbol(\agent, \objects) = \ut{\allocationSymbol}{\dem}{\agent}$.
\end{proof}
}

We use Lemma~\ref{lem:leeq} below, along with
Lemma~\ref{lem:allocation} and
Corollary~\ref{cor:under_demands_rounds}, to prove that any frugal
LMMF allocation is $\frac12$-SI (Theorem~\ref{thm:SI}).
\begin{lemma}\label{lem:leeq}
	Let $i$ be in $[k]$.
	Then
	$\sum_{j\in [i]} \edw(\agents_j \setminus \agents_{j-1}) \lambda_j = \supp(\objects_i)+ \dem(\agents_i,\objects\setminus \objects_i)$. 
\end{lemma}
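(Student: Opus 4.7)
The plan is to interpret both sides of the identity as the total allocation to the agents in $\agents_i$, and then split the objects into two groups: those in $\objects_i$ (where the object's effective supply is fully used up by $\agents_i$), and those outside $\objects_i$ (where each agent in $\agents_i$ receives exactly her demand). The main obstacle is not the decomposition itself but the subtlety of showing that for $\object \in \objects_i$ we have $\suppt{I}{\object} = \supp(\object)$, so that the effective supply appearing naturally on the allocation side matches the raw supply that appears in the statement.

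First, I would rewrite the left-hand side as a sum of utilities. By Lemma~\ref{lem:allocation}, every agent $\agent \in \agents_j \setminus \agents_{j-1}$ satisfies $\brkfn{}{\agent} = \brkpt_j$ and $\ut{\allocationSymbol}{\dem}{\agent} = \allocationSymbol(\agent,\objects) = \edw(\agent)\brkpt_j$. Summing over $j \in [i]$ and over all agents in each block yields
\begin{equation*}
\sum_{j \in [i]} \edw(\agents_j \setminus \agents_{j-1}) \brkpt_j
= \sum_{\agent \in \agents_i} \allocationSymbol(\agent,\objects)
= \allocationSymbol(\agents_i,\objects_i) + \allocationSymbol(\agents_i,\objects \setminus \objects_i).
\end{equation*}

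Next, I handle the second summand. For each $j \in [i]$, $\Gamma_4(j)$ (i.e., Corollary~\ref{cor:under_demands_rounds}) gives $\allocationSymbol(\agent,\object) = \dem(\agent,\object)$ whenever $\agent \in \agents_j \setminus \agents_{j-1}$ and $\object \in \objects \setminus \objects_j$; since $\objects \setminus \objects_i \subseteq \objects \setminus \objects_j$ for all $j \le i$, this identity holds for every $(\agent,\object) \in \agents_i \times (\objects \setminus \objects_i)$. Hence $\allocationSymbol(\agents_i, \objects \setminus \objects_i) = \dem(\agents_i, \objects \setminus \objects_i)$.

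For the first summand, fix $\object \in \objects_i$ and let $j \le i$ be the (unique) index with $\object \in \objects_j \setminus \objects_{j-1}$. Predicate $\Gamma_5(j)$ gives $f(\agents_j,\object) = \suppt{I}{\object}$, and Corollary~\ref{cor:over_demands_rounds} (applied to every $j' \in \{j,\ldots,i\}$) gives $\allocationSymbol(\agent,\object) = 0$ for $\agent \in \agents_i \setminus \agents_j$, so $\allocationSymbol(\agents_i,\object) = \suppt{I}{\object}$. It remains to upgrade $\suppt{I}{\object}$ to $\supp(\object)$: the defining inclusion $\object \in \objects_j \setminus \objects_{j-1}$ yields $\dem(\agents_j \setminus \agents_{j-1},\object) > \capt_j(\object) = \suppt{I}{\object} - \dem(\agents_{j-1},\object)$, i.e., $\dem(\agents_j,\object) > \suppt{I}{\object}$. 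Since $\agents_j \subseteq \agents$, this forces $\suppt{I}{\object} < \sum_{\agent \in \agents} \dem(\agent,\object)$, and so by the definition $\suppt{I}{\object} = \min(\supp(\object), \sum_{\agent \in \agents} \dem(\agent,\object))$ we must have $\suppt{I}{\object} = \supp(\object)$. Summing over $\object \in \objects_i$ gives $\allocationSymbol(\agents_i,\objects_i) = \supp(\objects_i)$, and combining the three displays yields the claim.
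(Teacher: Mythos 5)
Your proof is correct and follows essentially the same route as the paper's: both decompose the total allocation to $\agents_i$ over $\objects_i$ versus $\objects\setminus\objects_i$, invoke Lemma~\ref{lem:lambda} (via Corollaries~\ref{cor:under_demands_rounds} and~\ref{cor:over_demands_rounds}) for the two pieces, and prove the same key claim that $\suppt{I}{\object}=\supp(\object)$ for every $\object\in\objects_i$ by the same argument from the defining inequality of $\objects_j\setminus\objects_{j-1}$. The only cosmetic difference is that the paper phrases the left-hand side as the total flow into $\agents_i$ and uses flow conservation, whereas you sum $\allocationSymbol(\agent,\objects)=\edw(\agent)\brkfn{}{\agent}$ directly via Lemma~\ref{lem:allocation}; these are the same computation.
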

\punt{\begin{proof}
	We begin by proving the following useful claim.

	Claim~1: Let $i$ belong to $[k]$. 
	Let $\object$ be an object in $\objects_i \setminus \objects_{i-1}$. 
	Then $\suppt{I}{\object} = \supp(\object)$.
	
	To prove Claim~1, observe that the definition of $\objects_i \setminus \objects_{i-1}$ implies that $\dem(\agents_i \setminus \agents_{i-1}, \object) > \capt_i(\object) = \suppt{I}{\object} - \dem(\agents_{i-1}, \object)$.
	Thus $\dem(\agents, \object) \ge \dem(\agents_i, \object) > \suppt{I}{\object}$.
	Since $\dem(\agents, \object) > \suppt{I}{\object}$ and $\suppt{I}{\object} = \min(\supp(\object), \dem(\agents, \object))$, we have $\suppt{I}{\object} = \supp(\object)$.
	This completes the proof of Claim~1.

	Notice that $\sum_{j\in [i]} \edw(\agents_j \setminus \agents_{j-1}) \lambda_j$ is the total flow into $\agents_i$ in $f$.
	The total flow out of $\agents_i$ in $f$ is $f(\agents_i, \objects)$.
	For any agent vertex $\agent$ in $\agents_i$ and any object vertex $\object$ in $\objects \setminus \objects_{i}$, Lemma~\ref{lem:lambda} implies that $f(\agent, \object) = \dem(\agent, \object)$.
	For any object $\object$ in $\objects_i$, Lemma~\ref{lem:lambda} implies that $f(\agents_i, \object) = \suppt{I}{\object}$. 
	Thus $f(\agents_i, \objects) = \suppt{I}{\objects_i} + \dem(\agents_i,\objects\setminus \objects_i)$.
	Since the net flow into $\agents_i$ is $0$, we obtain $\sum_{j\in [i]} \edw(\agents_j \setminus \agents_{j-1}) \lambda_j = \suppt{I}{\objects_i} + \dem(\agents_i,\objects\setminus \objects_i) = \supp(\objects_i) + \dem(\agents_i,\objects\setminus \objects_i)$, where the last equality follows from Claim~1.
\end{proof}}

We use Lemma~\ref{lem:frozenAgents_at_diff_iter}
below, along with
Lemma~\ref{lem:allocation} and the result that any frugal LMMF
allocation is NW (Theorem~\ref{thm:NW}), to prove that any frugal LMMF
mechanism is RM (Theorem~\ref{thm:RM}).
We use Lemmas~\ref{lem:suboptimal},~\ref{lem:lambda},
and~\ref{lem:allocation} and the result that any frugal LMMF
mechanism is RM (Theorem~\ref{thm:RM}) to prove that any frugal LMMF
mechanism is PM (Theorem~\ref{thm:PM}).
\begin{lemma}\label{lem:frozenAgents_at_diff_iter}
	Let $i$ be in $[k]$, let $a$ (resp., $a'$) be an agent in
	$\agents_i$ (resp., $\agents \setminus \agents_i$), and let
	$\object$ be an object in $\objects$ such that
	$\allocationSymbol(\agent',\object)>0$.  Then
	$\allocationSymbol(\agent,\object)=\dem(\agent,\object)$.
\end{lemma}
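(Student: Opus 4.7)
The plan is to combine Corollaries~\ref{cor:under_demands_rounds} and~\ref{cor:over_demands_rounds} by using the nested structure of the sequences $\agents_1 \subseteq \cdots \subseteq \agents_k$ and $\objects_1 \subseteq \cdots \subseteq \objects_k$ to locate $a$, $a'$, and $\object$ in the right ``layers'' of this hierarchy.

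First, since $a' \in \agents \setminus \agents_i$, there is some index $j > i$ with $a' \in \agents_j \setminus \agents_{j-1}$. In particular, $a'$ lies in $\agents \setminus \agents_i$. The hypothesis $\allocationSymbol(a',\object) > 0$ rules out (via Corollary~\ref{cor:over_demands_rounds} applied at index $i$) the possibility that $\object \in \objects_i$; hence $\object \in \objects \setminus \objects_i$.

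Next, locate $a$: since $a \in \agents_i$, there is some $\ell \le i$ with $a \in \agents_\ell \setminus \agents_{\ell-1}$. Because the sequence $\objects_0 \subseteq \objects_1 \subseteq \cdots \subseteq \objects_i$ is monotone, the conclusion $\object \notin \objects_i$ from the previous step gives $\object \notin \objects_\ell$, i.e., $\object \in \objects \setminus \objects_\ell$. Applying Corollary~\ref{cor:under_demands_rounds} at index $\ell$ to the pair $(a, \object) \in (\agents_\ell \setminus \agents_{\ell-1}) \times (\objects \setminus \objects_\ell)$ yields $\allocationSymbol(a, \object) = \dem(a, \object)$, as required.

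There is no genuine obstacle here beyond bookkeeping: once one notices that the two corollaries are essentially complementary statements about flow saturation across the ``cut'' between $\agents_i$ and its complement, the result drops out. The only subtle point is the monotonicity observation $\objects_\ell \subseteq \objects_i$ for $\ell \le i$, which is immediate from the recursive definition of the sets $\objects_i$.
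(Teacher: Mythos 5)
Your proof is correct and follows essentially the same route as the paper's: Corollary~\ref{cor:over_demands_rounds} applied to $a'$ forces $\object \in \objects \setminus \objects_i$, and then Corollary~\ref{cor:under_demands_rounds} gives the conclusion. The only difference is that you make explicit the index bookkeeping ($a \in \agents_\ell \setminus \agents_{\ell-1}$ with $\objects_\ell \subseteq \objects_i$) that the paper's two-sentence proof leaves implicit.
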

\punt{\begin{proof}
	Corollary~\ref{cor:over_demands_rounds} and $\allocationSymbol(\agent',\object)>0$
	imply that $\object$ belongs to $\objects \setminus \objects_i$.
	Hence Corollary~\ref{cor:under_demands_rounds}
	implies that $\allocationSymbol(\agent,\object)=\dem(\agent,\object)$.
\end{proof}}

We use Lemma~\ref{lem:minimal_allocation} below, along with
Lemmas~\ref{lem:suboptimal},~\ref{lem:lambda},
and~\ref{lem:allocation} and the result that any frugal LMMF
mechanism is RM (Theorem~\ref{thm:RM}), to prove that any frugal LMMF
mechanism is GSP (Theorem~\ref{thm:GSP}).
\begin{lemma}\label{lem:minimal_allocation}
Let $i$ belong to $[k]$ and let $\allocationSymbol'$ be an allocation
in $\allocs{I}$ such that
$\ut{\allocationSymbol'}{\dem}{a}\ge \edw(a)\brkfn{}{a}$ for all
agents $\agent$ in $\agents_i$.  Then
$\allocationSymbol'(\agents_i, \object) \ge \allocationSymbol(\agents_i, \object)$
for all objects $\object$ in $\objects$.
\end{lemma}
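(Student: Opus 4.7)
The plan is to sandwich the total utility $\sum_{\agent\in\agents_i}\ut{\allocationSymbol'}{\dem}{\agent}$ between a lower bound coming from the hypothesis and an upper bound coming from an object-by-object calculation, and then exploit tightness of this sandwich to recover the required per-object inequality.

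First, I invoke Lemma~\ref{lem:allocation} to rewrite the hypothesis as $\ut{\allocationSymbol'}{\dem}{\agent}\ge\ut{\allocationSymbol}{\dem}{\agent}=\edw(\agent)\brkfn{}{\agent}$ for every $\agent\in\agents_i$, sum over $\agents_i$, and apply Lemma~\ref{lem:leeq}, yielding
\begin{equation*}
\sum_{\agent\in\agents_i}\ut{\allocationSymbol'}{\dem}{\agent}\;\ge\;\sum_{j\in[i]}\edw(\agents_j\setminus\agents_{j-1})\brkpt_j\;=\;\supp(\objects_i)+\dem(\agents_i,\objects\setminus\objects_i).
\end{equation*}
For the matching upper bound, I use the inequality $\sum_{\agent\in\agents_i}\min(\allocationSymbol'(\agent,\object),\dem(\agent,\object))\le\min(\allocationSymbol'(\agents_i,\object),\dem(\agents_i,\object))$ for every $\object$, together with the feasibility bound $\allocationSymbol'(\agents_i,\object)\le\allocationSymbol'(\agents,\object)\le\supp(\object)$. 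Splitting the sum over $\object$ according to whether $\object\in\objects_i$ (bounding the inner min by $\supp(\object)$) or $\object\in\objects\setminus\objects_i$ (bounding it by $\dem(\agents_i,\object)$) gives exactly $\supp(\objects_i)+\dem(\agents_i,\objects\setminus\objects_i)$.

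Since the two bounds coincide, every intermediate inequality must be tight. In particular, for each $\object\in\objects_i$ I conclude $\allocationSymbol'(\agents_i,\object)=\supp(\object)$, and for each $\object\in\objects\setminus\objects_i$ I conclude $\allocationSymbol'(\agents_i,\object)\ge\dem(\agents_i,\object)$. To finish, I identify $\allocationSymbol(\agents_i,\object)$ and compare: for $\object\in\objects\setminus\objects_i$, Corollary~\ref{cor:under_demands_rounds} directly gives $\allocationSymbol(\agents_i,\object)=\dem(\agents_i,\object)$, while for $\object\in\objects_i$, iterating predicate $\Gamma_5$ of Lemma~\ref{lem:lambda} across the layers $\objects_j\setminus\objects_{j-1}$ for $j\le i$ yields $\allocationSymbol(\agents_i,\object)=\suppt{I}{\object}$, and the defining inequality of the $\objects_j$ forces $\suppt{I}{\object}=\supp(\object)$ throughout $\objects_i$. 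In both cases $\allocationSymbol'(\agents_i,\object)\ge\allocationSymbol(\agents_i,\object)$.

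The main obstacle is extracting per-object information from what is essentially an aggregate identity: Lemma~\ref{lem:leeq} only constrains the total utility accrued by the agents in $\agents_i$, so the upper bound must be crafted, via the split along $\objects_i$ and $\objects\setminus\objects_i$, so that equality of the sums forces equality of every summand and thereby the desired per-object lower bound on $\allocationSymbol'(\agents_i,\object)$.
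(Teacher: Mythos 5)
Your proof is correct and follows essentially the same route as the paper's: a per-object upper bound on $\sum_{\agent\in\agents_i}\utobj{\allocationSymbol'}{\dem}{\agent}{\object}$ (namely $\supp(\object)$ on $\objects_i$ and $\dem(\agents_i,\object)$ off $\objects_i$), matched against an aggregate lower bound from the hypothesis, with tightness then forcing the per-object conclusion. The only cosmetic difference is that you compute the aggregate via Lemma~\ref{lem:leeq} where the paper sums $\edw(\agent)\brkfn{}{\agent}$ directly via Lemma~\ref{lem:allocation}; the content is identical.
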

\punt{\begin{proof}
	For any object $\object$ in $\objects$, we let $\utility(\object)$ (resp., $\utility'(\object)$) denote $\sum_{\agent \in \agents_i} \utobj{\allocationSymbol}{\dem}{\agent}{\object}$ (resp., $\sum_{\agent \in \agents_i} \utobj{\allocationSymbol'}{\dem}{\agent}{\object}$).
	We begin by establishing a useful claim.

	Claim~1: We have $\utility'(\object) \le \utility(\object)$ for all $\object$ in $\objects$.

	Proof:
	Let $\object$ be an object in $\objects$.
	We consider two cases.

	Case~1: $\object \in \objects_i$.
	The definition of $\suppt{I}{\object}$ implies that $\utility'(\object) \le \suppt{I}{\object}$.
	Using Lemma~\ref{lem:lambda} and the definition of $\allocationSymbol$, we deduce that $\allocationSymbol(\agents_i, \object) = f(\agents_i, \object) = \suppt{I}{\object}$.
	Lemma~\ref{lem:allocation} implies that $\allocationSymbol(\agent, \object)$ belongs to $[0, \dem(\agent, \object)]$ for all $\agent$ in $\agents_i$.
	Using Lemma~\ref{lem:allocation}, we conclude that $\utility(\object) = \allocationSymbol(\agents_i, \object) = \suppt{I}{\object}$.
	Thus $\utility'(\object) \le \utility(\object)$.

	Case~2: $\object \in \objects \setminus \objects_i$.
	We have $\utility'(\object) \le \dem(\agents_i, \object)$.
	Using Lemma~\ref{lem:lambda} and the definition of $\allocationSymbol$, we deduce that for any agent $\agent$ in $\agents_i$, $\allocationSymbol(\agent, \object) = f(\agent, \object) = \dem(\agent, \object)$.
	Using Lemma~\ref{lem:allocation}, we conclude that $\utility(\object) = \dem(\agents_i, \object)$.
	Thus, $\utility'(\object) \le \utility(\object)$.

	We have 
	\[\sum_{\object \in \objects} \utility'(\object) = \sum_{\agent \in \agents_i} \ut{\allocationSymbol'}{\dem}{\agent} \ge \sum_{\agent \in \agents_i} \brkfn{}{\agent}\edw(\agent) =  \sum_{\agent \in \agents_i} \ut{\allocationSymbol}{\dem}{\agent} = \sum_{\object \in \objects} \utility(\object),\] 
	where the second equality follows from Lemma~\ref{lem:allocation}.
	Together with Claim~1,
	we deduce that $\utility'(\object) = \utility(\object)$ for all $\object$ in $\objects$.
	Thus $\allocationSymbol'(\agents_i, \object) \ge \utility'(\object) = \utility(\object) = \allocationSymbol(\agents_i, \object)$ for all $\object$ in $\objects$, where the last equality follows from Lemma~\ref{lem:allocation}.
\end{proof}
}

\subsection{Game-Theoretic Properties of Frugal LMMF Allocations}
\label{sec:allocProps}

In this section we establish some game-theoretic properties of frugal
LMMF allocations.  Throughout this section, let $I = (\agents,
\objects, \edw, \supp, \dem)$ be an $\oafd$ instance and let
$\allocationSymbol$ belong to $\LMM(I)$.  The definition of $\LMM$
implies that $\allocationSymbol$ is an arbitrary frugal LMMF
allocation. 

\begin{theorem}
\label{thm:NW}
Allocation $\allocationSymbol$ is NW.
\end{theorem}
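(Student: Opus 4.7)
The plan is to split the proof according to whether a given object $\object$ lies in $\objects_k$ or in $\objects \setminus \objects_k$, and show that the first disjunct of the NW definition holds in the former case and the second in the latter. First I would observe that $\agents_k = \agents$: by construction $\brkpt_k = \brkpts{I}{\num{I}}$ is the largest breakpoint value attained by any agent vertex, so every agent $\agent$ satisfies $\brkfn{I}{\agent} \le \brkpt_k$ and therefore lies in $\agents_k$. Since the sets $\agents_i \setminus \agents_{i-1}$ partition $\agents_k = \agents$, every agent belongs to exactly one such difference set.

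Next, consider an object $\object \in \objects \setminus \objects_k$. Since $\objects_{i-1} \subseteq \objects_i \subseteq \objects_k$ for every $i \in [k]$, the object $\object$ lies in $\objects \setminus \objects_i$ for each $i$. Applying Corollary~\ref{cor:under_demands_rounds} to each $i$, we get $\allocationSymbol(\agent, \object) = \dem(\agent, \object)$ for every $\agent \in \agents_i \setminus \agents_{i-1}$. Ranging over $i$, this covers every agent, yielding the second disjunct of the NW definition for $\object$.

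Now consider an object $\object \in \objects_k$, say $\object \in \objects_i \setminus \objects_{i-1}$ for some $i \in [k]$. Predicate $\Gamma_5(i)$ of Lemma~\ref{lem:lambda} gives $\allocationSymbol(\agents_i, \object) = \suppt{I}{\object}$ together with $\allocationSymbol(\agent, \object) = 0$ for all $\agent \in \agents \setminus \agents_i$, so $\allocationSymbol(\agents, \object) = \suppt{I}{\object}$. It therefore suffices to argue $\suppt{I}{\object} = \supp(\object)$. By the recursive definition of $\objects_i$, since $\object \in \objects_i \setminus \objects_{i-1}$ we have $\dem(\agents_i \setminus \agents_{i-1}, \object) > \capt_i(\object) = \suppt{I}{\object} - \dem(\agents_{i-1}, \object)$, hence $\dem(\agents, \object) \ge \dem(\agents_i, \object) > \suppt{I}{\object}$. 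Combined with $\suppt{I}{\object} = \min(\supp(\object), \dem(\agents, \object))$ from the preliminaries, this forces $\suppt{I}{\object} = \supp(\object)$, giving $\allocationSymbol(\agents, \object) = \supp(\object)$ as needed.

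The heavy lifting is already done by Lemma~\ref{lem:lambda} and Corollary~\ref{cor:under_demands_rounds}, so no real obstacle arises; the only point that requires care is verifying $\suppt{I}{\object} = \supp(\object)$ for $\object \in \objects_k$, which falls out of unwinding the recursive definition of $\objects_i$ as sketched above.
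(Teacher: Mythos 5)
Your chain of deductions is sound step by step: $\agents_k=\agents$, the case split on whether $\object$ lies in $\objects_k$, the use of Corollary~\ref{cor:under_demands_rounds} for objects outside $\objects_k$, and the verification that $\suppt{I}{\object}=\supp(\object)$ for $\object\in\objects_i\setminus\objects_{i-1}$ (this last point is exactly Claim~1 in the paper's proof of Lemma~\ref{lem:leeq}). However, this is a very different route from the paper's, and it has a structural problem: it is circular within the paper's development. Corollaries~\ref{cor:under_demands_rounds} and~\ref{cor:over_demands_rounds}, and more generally the applicability of Lemma~\ref{lem:lambda} to an arbitrary $\allocationSymbol$ in $\LMM(I)$, rest on the identification of $\LMM(I)$ with the set of allocations induced by lexicographic flows of $G_I$ (Lemma~\ref{lem:frugal_LMF}); the proof of that identification already invokes the fact that every LMMF allocation is NW (Observation~\ref{ob:LMMF_in_NW}) in order to conclude that the flow corresponding to a frugal LMMF allocation is a maximum flow. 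So the machinery you lean on presupposes the very statement you are proving.

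The paper's proof is a short direct exchange argument that needs none of this: if $\allocationSymbol$ were not NW, there would be an object $\object$ with $\allocationSymbol(\agents,\object)<\supp(\object)$ and an agent $\agent$ with $\allocationSymbol(\agent,\object)<\dem(\agent,\object)$; raising $\allocationSymbol(\agent,\object)$ to $\min\bigl(\dem(\agent,\object),\supp(\object)-\allocationSymbol(\agents-\agent,\object)\bigr)$ then strictly increases $\agent$'s utility while leaving every other agent's utility unchanged, contradicting the LMMF property. You should prove the theorem by an elementary argument of this form; as written, your proof cannot serve as the foundation that the flow-based machinery itself depends on.
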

\punt{\begin{proof}
	The definition of $\allocationSymbol$ implies that $\allocationSymbol$ belongs to $\frugal(I)$.
	Assume for the sake of contradiction that $\allocationSymbol$ is not NW.
	Hence there is an agent $\agent$ in $\agents$ and an object $\object$ in $\objects$ such that $\allocationSymbol(\agent, \object) < \dem(\agent, \object)$ and $\allocationSymbol(\agents, \object) < \supp(\object)$.
	Let $\allocationSymbol'$ be the allocation in $\allocs{I}$ such that $\allocationSymbol'(\agent, \object) = \min(\dem(\agent, \object), \supp(\object) - \allocationSymbol(\agents - \agent, \object))> \allocationSymbol(\agent, \object)$, and $\allocationSymbol'(\agent', \object') = \allocationSymbol(\agent', \object')$ for all $(\agent', \object')$ in $\agents \times \objects - (\agent, \object)$.
	Thus $\ut{\allocationSymbol'}{\dem}{\agent} > \ut{\allocationSymbol}{\dem}{\agent}$ and $\ut{\allocationSymbol'}{\dem}{\agent'} = \ut{\allocationSymbol}{\dem}{\agent'}$ for all agents $\agent'$ in $\agents - \agent$, a contradiction since $\allocationSymbol$ is LMMF.
\end{proof}}


\newcommand\numberthis{\addtocounter{equation}{1}\tag{\theequation}}

Theorem~\ref{thm:EF} below shows that any frugal LMMF allocation is
EF.  Bogomolnaia and Moulin~\cite{bogomolnaia2004random} show that any
frugal LMMF allocation is EF when all demands are $0$ or $1$, all
agent endowments are equal, and all object supplies are equal.  To
generalize this result to our setting, the main issue is to handle
fractional demands; Corollaries~\ref{cor:under_demands_rounds}
and~\ref{cor:over_demands_rounds} are useful in this regard.

\begin{theorem}
\label{thm:EF}
Allocation $\allocationSymbol$ is EF. 
\end{theorem}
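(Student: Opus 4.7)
The plan is to prove envy-freeness by fixing an arbitrary ordered pair of agents $a, a' \in A$ and splitting into two cases according to the breakpoints assigned by the lexicographic-flow algorithm. Let $i$ and $j$ be the indices with $a \in A_i \setminus A_{i-1}$ and $a' \in A_j \setminus A_{j-1}$, so $\Lambda(a) = \lambda_i$ and $\Lambda(a') = \lambda_j$. By Lemma~\ref{lem:allocation}, $u(\mu, d, a) = e(a)\lambda_i$ and $u(\mu, d, a') = e(a')\lambda_j$, and since $\mu$ is frugal, $u(\mu, d, a') = \mu(a', B) = \sum_{b \in B} \mu(a', b)$ and $u(\mu,d,a) = \sum_{b \in B} \mu(a,b)$.

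The easy case is $\lambda_j \le \lambda_i$. Here I would ignore the pointwise $\min$ in favor of its first argument:
\[
\sum_{b \in B} \min\!\left(\frac{e(a)}{e(a')}\, \mu(a', b),\, d(a, b)\right) \le \frac{e(a)}{e(a')}\, \mu(a', B) = e(a)\lambda_j \le e(a)\lambda_i = u(\mu, d, a),
\]
which is exactly the envy-freeness inequality.

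The harder case is $\lambda_j > \lambda_i$, equivalently $i < j$. A uniform scaling bound no longer suffices because a scaled copy of $a'$'s bundle can in principle exceed $a$'s total utility. Instead I would argue term by term on the partition $B = O_i \cup (B \setminus O_i)$. For $b \in B \setminus O_i$, Corollary~\ref{cor:under_demands_rounds} applied to $a \in A_i \setminus A_{i-1}$ gives $\mu(a, b) = d(a, b)$, so the $b$th summand on the envy side is at most $d(a, b) = \mu(a, b)$. For $b \in O_i$, the inequality $j > i$ places $a'$ in $A \setminus A_i$, and Corollary~\ref{cor:over_demands_rounds} then yields $\mu(a', b) = 0$, making the $b$th summand zero. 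Summing over $b$ and using $u(\mu,d,a) = \sum_b \mu(a,b)$ closes this case.

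The main subtlety I anticipate is recognizing that envy-freeness does not follow uniformly from the scalar identity in Lemma~\ref{lem:allocation}; precisely in the case $\lambda_j > \lambda_i$ one must invoke the saturation structure of the lexicographic flow, where Corollaries~\ref{cor:under_demands_rounds} and~\ref{cor:over_demands_rounds} encode the fact that an earlier-frozen agent has its demand met on every not-yet-saturated object, while a later-frozen agent touches none of the earlier-saturated objects. This structural decomposition is exactly what is needed to handle arbitrary fractional demands in the generalization beyond the dichotomous setting of Bogomolnaia and Moulin.
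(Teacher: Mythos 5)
Your proposal is correct and follows essentially the same route as the paper's proof: the same case split on the relative order of the two agents' breakpoints, the scaling bound via Lemma~\ref{lem:allocation} in the easy case, and the partition of $\objects$ into $\objects_i$ and $\objects\setminus\objects_i$ with Corollaries~\ref{cor:under_demands_rounds} and~\ref{cor:over_demands_rounds} in the hard case. The only cosmetic difference is that the paper argues by contradiction while you argue directly.
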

\punt{\begin{proof}
	
Assume for the sake of contradiction that there are agents $a$ and $a'$ such that agent $a$ envies the allocation of agent $a'$, that is, 
\begin{equation}
	\label{eq:envy_assumption}
	u(\mu,\dem, a) < \sum_{\object\in \objects} \min\left(\frac{\edw(a)}{\edw(a')} \allocationSymbol(a',\object),\dem(a,\object)\right).
\end{equation}
As in Section~\ref{sec:structural}, 
let $k$ denote the value $\num{I}$.
For any $i$ in $[k]$, let $\brkpt_i$, $\agents_i$, and $\objects_i$ denote $\brkpts{I}{i}$, $\agnts{I}{i}$, and $\objs{I}{i}$, respectively.
Let $i$ and $i'$ in $[k]$ be such that agent $\agent$ (resp. $\agent'$) belongs to $\agents_i \setminus \agents_{i-1}$ (resp., $\agents_{i'} \setminus \agents_{i'-1}$).
We consider two cases.

Case~$1$: In this case we have $i'>i$. 
We deduce that 
\begin{align*}
	u(\mu,\dem, a) &= \sum_{\object\in \objects} \min(\allocationSymbol(a,\object),\dem(a,\object))\\
	&= \sum_{\object\in \objects\setminus \objects_i} \min(\allocationSymbol(a,\object),\dem(a,\object)) + \sum_{\object\in \objects_i} \min(\allocationSymbol(a,\object),\dem(a,\object))\\
	&=\dem(a,\objects\setminus \objects_i)+\sum_{\object\in \objects_i} \min(\allocationSymbol(a,\object),\dem(a,\object))\\
	&\ge \dem(a,\objects\setminus \objects_i),  
\end{align*}
where the last equality follows from Corollary~\ref{cor:under_demands_rounds}.
Corollary~\ref{cor:over_demands_rounds} implies that  $\mu(a',\object)=0$ for all objects $\object$ in $\objects_i$. Therefore, 
\begin{align*}
	\sum_{\object\in \objects} \min\left( \frac{\edw(a)}{\edw(a')} \allocationSymbol(a',\object),\dem(a,\object) \right) &=  \sum_{\object\in \objects\setminus \objects_i} \min\left( \frac{\edw(a)}{\edw(a')} \allocationSymbol(a',\object),\dem(a,\object)\right)\\
	&\le \dem(a,\objects\setminus \objects_i).  
\end{align*}
The inequalities derived above imply that 
$$\sum_{\object\in \objects} \min\left( \frac{\edw(a)}{\edw(a')} \allocationSymbol(a',\object),\dem(a,\object)\right) \le u(\mu,\dem,a),$$ contradicting inequality~\eqref{eq:envy_assumption}.

Case $2$: $ i'\le i$. 
Since $\brkpt_1, \dots, \brkpt_k$ is an increasing sequence, $\brkpt_{i'}\le \brkpt_{i}$. 
Lemma~\ref{lem:allocation} implies that $\mu(\agent', \objects)= \edw(a')\brkpt_{i'}$. Thus
\begin{align*}
	\sum_{\object\in \objects} \min\left( \frac{\edw(a)}{\edw(a')} \allocationSymbol(a',\object),\dem(a,\object) \right) &\le \frac{\edw(a)}{\edw(a')} \allocationSymbol(a',\objects) \\
	&=\ \edw(\agent) \brkpt_{i'} \\
	&\le\ \edw(\agent) \brkpt_i \\
	&=\ \ut{\allocationSymbol}{\dem}{\agent},
\end{align*}
where the first and second equalities follow from Lemma~\ref{lem:allocation}.
This inequality contradicts inequality~\eqref{eq:envy_assumption}.
\end{proof}
}

Theorem~\ref{thm:SI} below shows that any frugal LMMF allocation is
$\frac12$-SI.  Lemma~\ref{lem:not_half_SI} in Section~\ref{sec:imp}
implies that no frugal LMMF mechanism is $\CoeffSI$-SI for any $\CoeffSI > \frac12$.

\begin{theorem}
\label{thm:SI}
Allocation $\allocationSymbol$ is $\frac12$-SI.
\end{theorem}
\punt{\begin{proof}
    Let $\agent$ be an agent in $\agents$ and let $\SI(\agent)$ denote $\sum_{\object \in \objects} \min\left( \frac{\edw(\agent)}{\edw(\agents)}\supp(\object), \dem(\agent, \object)\right)$.
    We need to show that $\ut{\allocationSymbol}{\dem}{\agent} \ge \SI(\agent) / 2$.
    
    As in Section~\ref{sec:structural}, let $\Lambda$ and $k$  denote the breakpoint function $\Lambda_I$ and the value $\num{I}$, respectively.
    For any $i$ in $[k]$, let $\brkpt_i$, $\agents_i$, and $\objects_i$ denote $\brkpts{I}{i}$, $\agnts{I}{i}$, and $\objs{I}{i}$, respectively.
    Let $i$ in $[k]$ be such that agent $\agent$ belongs to $\agents_i \setminus \agents_{i-1}$.
    Thus $\brkfn{}{\agent} = \brkpt_i$.
    
    Lemma~\ref{lem:allocation} implies that $\ut{\allocationSymbol}{\dem}{\agent} = \edw(\agent) \brkpt_i$. 
    We have
    $$\SI(a)  =\sum_{\object \in \objects} \min\left( \frac{\edw(\agent)}{\edw(\agents)}\supp(\object), \dem(\agent, \object)\right) \le \frac{\edw(a)}{\edw(A)}\supp(\objects_i) +  \dem(a,\objects\setminus \objects_i).$$ 
    Thus, to prove that $\ut{\allocationSymbol}{\dem}{\agent} \ge \SI(a)/2$, it suffices to prove that
    $u(\mu,\dem, a)\ge \dem(a,\objects\setminus \objects_i)$ and $\ut{\allocationSymbol}{\dem}{\agent} \ge \frac{\edw(\agent)}{\edw(\agents)}\supp(\objects_i)$.
    Observe that $u(\mu,\dem, a)= \mu(a,\objects) \ge \mu(a,\objects\setminus \objects_i) = \dem(a,\objects\setminus \objects_i)$, where the first equality follows from Lemma~\ref{lem:allocation} and the last equality follows from Corollary~\ref{cor:under_demands_rounds}. Thus $u(\mu,\dem, a)\ge \dem(a,\objects\setminus \objects_i)$. 
    
    It remains to prove that $\ut{\allocationSymbol}{\dem}{\agent} \ge \frac{\edw(\agent)}{\edw(\agents)}\supp(\objects_i)$.
    Since $u(\mu,\dem ,a) = \edw(a) \brkpt_i$, it suffices to prove that $\brkpt_i\ge \supp(\objects_i)/\edw(\agents)$. 
    Note that  $\edw(\agents_i) \brkpt_i \ge \sum_{j\in [i]} \edw(\agents_j \setminus \agents_{j-1})\brkpt_j = \supp(\objects_i) + \dem(\agents_i, \objects \setminus \objects_i) \ge \supp(\objects_i)$, where the first inequality follows since $\brkpt_1, \dots, \brkpt_k$ is an increasing sequence and the equality follows from Lemma~\ref{lem:leeq}. 
    Since $\edw(\agents_i) \brkpt_i \ge \supp(\objects_i)$, we find that $\brkpt_i \ge \supp(\objects_i)/\edw(\agents_i) \ge \supp(\objects_i)/\edw(\agents)$, 
    where the last inequality holds because  $\agents_i$ is a subset of $\agents$ and hence $\edw(\agents_i)\le \edw(\agents)$.
\end{proof}
    
}

\subsection{Game-Theoretic Properties of Frugal LMMF Mechanisms} \label{sec:GSP}

In this section, we establish that any frugal LMMF mechanism is RM
  (Theorem~\ref{thm:RM}), PM (Theorem~\ref{thm:PM}), and GSP
  (Theorem~\ref{thm:GSP}).
Bogomolnaia and Moulin~\cite{bogomolnaia2004random} prove these
  properties for the special case where the demands and supplies are all $0$-$1$.
  To handle arbitrary fractional demands and supplies, we employ a similar high-level proof framework, with some additional low-level details. 
  We now
  sketch our RM proof, which proceeds by contradiction.  (See
  Appendix~\ref{sec:RM} for the full proof.) Let
  $I=(\agents,\objects,\edw,\supp,\dem)$ and
  $I'=(\agents,\objects,\edw,\supp',\dem)$ denote $\oafd$ instances
  such that $\supp(\object)\le \supp'(\object)$ for all objects $b$ in
  $\objects$, let $\mu$ belong to $\LMM(I)$, and let $\mu'$ belong to
  $\LMM(I')$.  We first use Lemmas~\ref{lem:allocation}
  and~\ref{lem:frozenAgents_at_diff_iter} to prove that for any agents
  $a$ and $a'$, if there is an object $\object$ such that
  $\mu'(a,\object)< \mu(a,\object)$ and
  $\mu'(a',\object)>\mu(a',\object)$, then 
  $\brkfn{I'}{a'}\leq \brkfn{I'}{a}$ and $\brkfn{I}{a}\leq\brkfn{I}{a'}$.
  Next, to derive a contradiction, we consider the set $A'$ of
  agents who suffer a loss from switching $I$ to $I'$.  A
  straightforward counting argument shows that there is an agent $a'$
  in $A\setminus A'$ and an object $b'$ such that
  $\mu'(a,\object')< \mu(a,\object')$ and
  $\mu'(a',\object')>\mu(a',\object')$, implying that 
    $\brkfn{I'}{a'}\leq \brkfn{I'}{a}$ and $\brkfn{I}{a}\leq\brkfn{I}{a'}$.
  Since agent $a$ suffers a loss from switching $I$ to
  $I'$, we conclude that $a'$ also suffers a loss from switching $I$
  to $I'$. This is a contradiction since $a'$ belongs to $A\setminus A'$ and
  $A'$ contains all agents who suffer a loss from switching $I$ to
  $I'$.
	
  With regard to establishing the PM and GSP properties, a key
  difference between our setting and that of Bogomolnaia and Moulin may be
  illustrated by considering the set of agents receiving the minimum
  utility, i.e., $\agnts{I}{1}$.  When all of the demands are either
  $0$ or $1$, the objects demanded by agents in $\agnts{I}{1}$ are not
  available to the remaining agents, and hence there is a clean
  partitioning of the objects into the subset demanded by agents in
  $\agnts{I}{1}$ and the remaining objects. However, in our setting,
  the objects fractionally demanded by agents in $\agnts{I}{1}$ may
  still be partly available for the remaining agents, which allows for
  a more complicated interplay between the two subproblems. We use
  some new ideas to cope with this added complexity.  For example, in
  the main case (Case~2) of our proof of Theorem~\ref{thm:PM} and in
  the main case (Case~4) of our proof of Theorem~\ref{thm:GSP}, we
  find it convenient to leverage the RM property established in
  Theorem~\ref{thm:RM}.  
  We remark that Bogomolnaia and Moulin do not
  use RM to establish PM or GSP.

\section{Impossibility Results}
\label{sec:imp}

In this section, we show that fairness and SI are incompatible properties.
Lemma~\ref{lem:not_half_SI} below establishes that for any $\CoeffSI > \frac12$, no $\oafd$ mechanism is $\CoeffSI$-SI and MMF.
As mentioned in Section~\ref{sec:prelims}, MMF is a weaker notion of fairness than LMMF.
Thus for any $\CoeffSI > \frac12$, no $\oafd$ mechanism is $\CoeffSI$-SI and LMMF.
\begin{lemma}
	\label{lem:not_half_SI}
	For any $\CoeffSI>\frac12$, no $\oafd$ mechanism is $\CoeffSI$-SI and MMF.
\end{lemma}
\punt{\begin{proof}
	Let $M$ be an MMF $\oafd$ mechanism.  Consider an $\oafd$
        instance $I = (\agents, \objects, \edw, \supp, \dem)$ with $n$
        agents $a_1,\dots,a_n$, each with endowment $1$, and two
        objects $b_1$ and $b_2$, each with supply $n$, and where
        $\dem(a_1, b_1) = \dem(a_1, b_2) = 1$, and $\dem(a, b_1) = 2$
        and $\dem(a, b_2) = 0$ for all agents $a$ in $A - a_1$.
        Mechanism $M$ gives a utility of $1+1/n$ to each agent in $A$.
        If agent $a_1$ is allocated an $\edw(a_1)/\edw(\agents) = 1/n$
        fraction of each object, then $a_1$ achieves utility $2$.
        Hence $M$ is at most $\frac12\left(1 + \frac{1}{n}\right)$-SI.
        Let $\CoeffSI$ be any value greater than $\frac12$.  By choosing
        a sufficiently large $n$, we deduce that $M$ is not
        $\CoeffSI$-SI.
\end{proof}
}


Since no mechanism can be MMF and SI, we consider the following natural relaxation: mechanisms that are MMF subject to being SI.
Formally, for any $\oafd$ instance $I = (\agents, \objects, \edw, \supp, \dem)$, we say that an allocation $\mu$ in $\allocs{I}$ is MMF-SI if $\mu $ maximizes $\min_{\agent \in \agents} \ut{\allocationSymbol'}{\dem}{\agent}/\edw(\agent)$ over all $\allocationSymbol'$ in $\allocs{I}$ such that 
$\allocationSymbol'$ is SI.
An $\oafd$ mechanism $M$ is MMF-SI if 
for any $\oafd$ instance $I$, every allocation in $M(I)$ is MMF-SI.
Lemma~\ref{lem:no_maximin_SI} below shows that the SP and MMF-SI properties are incompatible.
\begin{lemma}
\label{lem:no_maximin_SI}
	No $\oafd$ mechanism is SP and MMF-SI.
\end{lemma}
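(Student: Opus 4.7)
The plan is to establish the impossibility by exhibiting an explicit pair of $\oafd$ instances $I$ and $I'$, together with a designated agent $a^\dagger$, such that (i) $I'$ differs from $I$ only in $a^\dagger$'s reported demand, and (ii) for every MMF-SI allocation $\mu$ of $I$ and every MMF-SI allocation $\mu'$ of $I'$, the true utility of $a^\dagger$ strictly improves under the manipulation, i.e.\ $\ut{\mu}{\dem}{a^\dagger}<\ut{\mu'}{\dem}{a^\dagger}$. Once such a pair is exhibited, any MMF-SI mechanism $M$ satisfies $M(I)\subseteq\text{MMF-SI}(I)$ and $M(I')\subseteq\text{MMF-SI}(I')$, so picking any $\mu\in M(I)$ and $\mu'\in M(I')$ witnesses a violation of the SP property.

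I would look for such a construction among small instances---two agents and two or three objects---where the SI constraint binds on at least one agent under truthful reporting and the max-min structure uniquely pins down $a^\dagger$'s mechanism utility at the SI lower bound. Denote by $u^*$ the true utility of $a^\dagger$ in every MMF-SI allocation of $I$. The manipulation $\dem^*$ should then inflate $a^\dagger$'s reported SI bound---typically by increasing the reported demand on an object where the supply share $\edw(a^\dagger)/\edw(\agents)\cdot\supp(\object)$ strictly exceeds $a^\dagger$'s true demand on that object---so that every MMF-SI allocation of $I'$ must route strictly more of the supply through objects that $a^\dagger$ actually demands. If the supplies are tuned so that the only feasible way to raise $a^\dagger$'s mech utility without violating the other agent's SI is to load the additional mass onto a truly demanded object, then $\ut{\mu'}{\dem}{a^\dagger}$ exceeds $u^*$ in every MMF-SI allocation of $I'$.

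The main technical obstacle is to guarantee the strict inequality for \emph{every} MMF-SI allocation of $I'$ rather than merely for some, since MMF-SI typically admits a polytope of optima. To rule out ``escape'' allocations, I would arrange $I'$ so that the other agent's SI and the supply bottleneck together force the mech-utility contribution on the truly-demanded object to exceed $u^*$ even in the worst case: concretely, by choosing supplies and endowments making both agents' SI bounds sum to the total effective supply, which collapses the MMF-SI polytope along the coordinate that matters for $a^\dagger$'s true utility.

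After pinning down the instances, the verification reduces to solving two small linear programs---one for each of $\text{MMF-SI}(I)$ and $\text{MMF-SI}(I')$---computing the true utility of $a^\dagger$ under each feasible vertex, and observing the strict inequality. The conclusion of the lemma follows immediately.
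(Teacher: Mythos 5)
Your proposal correctly identifies the right \emph{shape} of argument---exhibit two instances $I$ and $I'$ differing only in one agent's report, such that the agent's true utility is pinned down in every MMF-SI allocation of $I$ and strictly larger in every MMF-SI allocation of $I'$---and this is indeed how the paper proceeds. However, what you have written is a plan for finding a counterexample, not a proof: no instance is ever specified, no demands or supplies are given, and the two claims that carry all the weight (that the manipulator's utility is uniquely determined at some value $u^*$ in every MMF-SI allocation of $I$, and strictly exceeds $u^*$ in every MMF-SI allocation of $I'$) are never verified for any concrete data. Phrases like ``I would look for such a construction'' and ``I would arrange $I'$ so that\dots'' describe intentions, and the final paragraph defers the verification to ``two small linear programs'' that are never solved. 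Since the entire content of this lemma is the existence of such a pair of instances, the proof has a genuine gap.

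For reference, the paper's construction uses three agents (not two): agents $a_1,a_2,a_3$ with unit endowments, objects $b_1,b_2$ each of supply $6$, and demands $\dem(a_1,b_1)=3$, $\dem(a_1,b_2)=1$, $\dem(a_2,b_2)=\dem(a_3,b_2)=3$, $\dem(a_2,b_1)=\dem(a_3,b_1)=0$. Here $a_1$'s SI bound is $\min(2,3)+\min(2,1)=3$, and since $a_2,a_3$ draw all their utility from $b_2$ (total supply $6$), maximizing the minimum forces them to take all of $b_2$, leaving $a_1$ with exactly utility $3$. Misreporting $\dem''(a_1,b_2)=2$ raises $a_1$'s SI bound to $4$, which forces one unit of $b_2$ to $a_1$ in every SI allocation of $I'$; that unit is truly demanded, so $a_1$'s true utility becomes $4$. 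The third agent is what makes the bottleneck on $b_2$ tight enough to pin down $a_1$'s utility on both sides; your suggested two-agent search space is not obviously sufficient, and in any case you would need to carry out exactly this kind of explicit verification to complete the proof.
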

\punt{\begin{proof}
	Let $M$ be an MMF-SI $\oafd$ mechanism.
	Consider an $\oafd$ instance $I = (\agents, \objects, \edw, \supp, \dem)$ with three agents $a_1$, $a_2$, and $a_3$, each with endowment $1$, and two objects $b_1$ and $b_2$, each with supply $6$, and where $\dem(a_1, b_1) = 3$, $\dem(a_1, b_2) = 1$, $\dem(a_2, b_1) = \dem(a_3, b_1) = 0$, and $\dem(a_2, b_2) = \dem(a_3, b_2) = 3$.
	Let $\allocationSymbol$ belong to $M(I)$.
	It is easy to verify that $\ut{\allocationSymbol}{\dem}{a_1} = 3$.
	Let $\dem'$ denote $(\dem_{A-a_1}, \dem'')$, where $\dem''$ belongs to $\dems{\{a_1\}}{\objects}$, $\dem''(a_1, b_1) = 3$, and $\dem''(a_1, b_2) = 2$.
	Let $I'$ denote the $\oafd$ instance $(\agents, \objects, \edw, \supp, \dem')$ and let $\allocationSymbol'$ belong to $M(I')$.
	It is easy to verify that $\ut{\allocationSymbol'}{\dem'}{a_1} = 4$.
	We conclude that mechanism $M$ is not SP.
\end{proof}}

\section{Concluding Remarks}

In this paper, we introduced the $\oafd$ problem and we presented a lexicographically maximin fair $\oafd$ mechanism that enjoys a number of desirable game-theoretic properties: GSP, $1/2$-SI, NW, EF,  PE, RM, and PM.
We also showed that no maximin fair mechanism can be $\CoeffSI$-SI for any $\CoeffSI > 1/2$.
Further, we showed that no MMF-SI mechanism is SP.

We briefly mention some possible directions for future research.
First, 
we have shown that our mechanism is 
$\CoeffSI$-SI for $\CoeffSI = 1/2$, but on most real world instances it might achieve $\CoeffSI$-SI for a significantly higher value of $\CoeffSI$.
It would be interesting to benchmark our mechanism on real data.
Second, our work assumes perfect knowledge of future demands.
It would be interesting to develop mechanisms whose performance degrades gracefully as the knowledge of future demands becomes more unreliable.
Finally, we have studied lexicographic maximin fairness in this paper. It would also be interesting to study other notions of fairness.

\bibliography{refs}

\begin{thebibliography}{10}

\bibitem{aleksandrov2017pure}
Martin Aleksandrov and Toby Walsh.
\newblock Pure {N}ash equilibria in online fair division.
\newblock In {\em Proceedings of the 26th International Joint Conference on
  Artificial Intelligence}, pages 42--48, 2017.

\bibitem{behringer1981simplex}
Fred~A.\ Behringer.
\newblock A simplex based algorithm for the lexicographically extended linear
  maxmin problem.
\newblock {\em European Journal of Operational Research}, 7(3):274--283, 1981.

\bibitem{bogomolnaia2004random}
Anna Bogomolnaia and Herv{\'e} Moulin.
\newblock Random matching under dichotomous preferences.
\newblock {\em Econometrica}, 72(1):257--279, 2004.

\bibitem{brown1979sharing}
J.~Randall Brown.
\newblock The sharing problem.
\newblock {\em Operations Research}, 27(2):324--340, 1979.

\bibitem{cole2013mechanism}
Richard Cole, Vasilis Gkatzelis, and Gagan Goel.
\newblock Mechanism design for fair division: Allocating divisible items
  without payments.
\newblock In {\em Proceedings of the 14th ACM Conference on Electronic
  Commerce}, pages 251--268, 2013.

\bibitem{eisner1976}
Mark~J. Eisner and Dennis~G. Severance.
\newblock Mathematical techniques for efficient record segmentation in large
  shared databases.
\newblock {\em Journal of the ACM}, 23(4):619--635, 1976.

\bibitem{ford1962}
Lester~R.\ Ford and Delbert~R.\ Fulkerson.
\newblock {\em Flows in Networks}.
\newblock Princeton University Press, 1962.

\bibitem{freeman2018dynamic}
Rupert Freeman, Seyed~Majid Zahedi, Vincent Conitzer, and Benjamin~C. Lee.
\newblock Dynamic proportional sharing: A game-theoretic approach.
\newblock {\em Proceedings of the ACM on Measurement and Analysis of Computing
  Systems}, 2(1):1--36, 2018.

\bibitem{gallo1989}
Giorgio Gallo, Michael~D.\ Grigoriadis, and Robert~E.\ Tarjan.
\newblock A fast parametric maximum flow algorithm and applications.
\newblock {\em SIAM Journal on Computing}, 18(1):30--55, 1989.

\bibitem{ghodsiZHKSS2010}
Ali Ghodsi, Matei Zaharia, Benjamin Hindman, Andy Konwinski, Scott Shenker, and
  Ion Stoica.
\newblock Dominant resource fairness: Fair allocation of multiple resource
  types.
\newblock In {\em Proceedings of the 8th {USENIX} Symposium on Networked
  Systems Design and Implementation}, pages 323--336, 2011.

\bibitem{ghodsi2013choosy}
Ali Ghodsi, Matei Zaharia, Scott Shenker, and Ion Stoica.
\newblock Choosy: Max-min fair sharing for datacenter jobs with constraints.
\newblock In {\em Proceedings of the 8th ACM European Conference on Computer
  Systems}, pages 365--378, 2013.

\bibitem{hossain19}
Ridi Hossain.
\newblock Sharing is caring: Dynamic mechanism for shared resource ownership.
\newblock In {\em Proceedings of the 18th International Conference on
  Autonomous Agents and MultiAgent Systems}, pages 2417--2419, 2019.

\bibitem{kandasamy2020online}
Kirthevasan Kandasamy, Gur-Eyal Sela, Joseph~E. Gonzalez, Michael~I. Jordan,
  and Ion Stoica.
\newblock Online learning demands in max-min fairness.
\newblock {\em arXiv:2012.08648}, 2020.

\bibitem{katta2006solution}
Akshay-Kumar Katta and Jay Sethuraman.
\newblock A solution to the random assignment problem on the full preference
  domain.
\newblock {\em Journal of Economic Theory}, 131(1):231--250, 2006.

\bibitem{klein1995multiperiod}
Rachelle~S.\ Klein, Hanan Luss, and Uriel~G.\ Rothblum.
\newblock Multiperiod allocation of substitutable resources.
\newblock {\em European Journal of Operational Research}, 85(3):488--503, 1995.

\bibitem{klein1992lexicographic}
Rachelle~S.\ Klein, Hanan Luss, and Donald~R.\ Smith.
\newblock A lexicographic minimax algorithm for multiperiod resource
  allocation.
\newblock {\em Mathematical Programming}, 55(1):213--234, 1992.

\bibitem{luss1999equitable}
Hanan Luss.
\newblock On equitable resource allocation problems: A lexicographic minimax
  approach.
\newblock {\em Operations Research}, 47(3):361--378, 1999.

\bibitem{megiddo1974}
Nimrod Megiddo.
\newblock Optimal flows in networks with multiple sources and sinks.
\newblock {\em Mathematical Programming}, 7(1):97--107, 1974.

\bibitem{megiddo1977}
Nimrod Megiddo.
\newblock {A good algorithm for lexicographically optimal flows in
  multi-terminal networks}.
\newblock {\em Bulletin of the American Mathematical Society}, 83(3):407--409,
  1977.

\bibitem{ogryczakLNP14_survey_fair}
Wlodzimierz Ogryczak, Hanan Luss, Michal Pi{\'{o}}ro, Dritan Nace, and Artur
  Tomaszewski.
\newblock Fair optimization and networks: A survey.
\newblock {\em Journal of Applied Mathematics}, pages 612018:1--612018:25,
  2014.

\bibitem{ogryczak2005telecommunications}
Wlodzimierz Ogryczak, Michal Pi{\'o}ro, and Artur Tomaszewski.
\newblock Telecommunications network design and max-min optimization problem.
\newblock {\em Journal of Telecommunications and Information Technology}, pages
  43--56, 2005.

\bibitem{pioro2004routing}
Michal Pi{\'o}ro and Deep Medhi.
\newblock {\em Routing, Flow, and Capacity Design in Communication and Computer
  Networks}.
\newblock Elsevier, 2004.

\bibitem{pottersT92}
Jos A.~M. Potters and Stef~H. Tijs.
\newblock The nucleolus of a matrix game and other nucleoli.
\newblock {\em Mathematics of Operations Research}, 17(1):164--174, 1992.

\bibitem{sethuraman2010_survey_house_allocation}
Jay Sethuraman.
\newblock Mechanism design for house allocation problems: A short introduction.
\newblock {\em Optima}, 82:2--8, 2010.

\bibitem{singh2020fairness}
Bismark Singh.
\newblock Fairness criteria for allocating scarce resources.
\newblock {\em Optimization Letters}, pages 1--9, 2020.

\bibitem{stone1978}
Harold~S. Stone.
\newblock Critical load factors in two-processor distributed systems.
\newblock {\em IEEE Transactions on Software Engineering}, SE-4(3):254--258,
  1978.

\bibitem{tang2014long}
Shanjiang Tang, Bu-Sung Lee, Bingsheng He, and Haikun Liu.
\newblock Long-term resource fairness: Towards economic fairness on
  pay-as-you-use computing systems.
\newblock In {\em Proceedings of the 28th ACM International Conference on
  Supercomputing}, pages 251--260, 2014.

\end{thebibliography}
\newpage

\appendix


\section{Lexicographic Flow}
\label{apd:lex_flow}
The problem of computing a maximum flow in a flow network has been extensively studied. 
We describe this problem and discuss a parameterized version of the problem. We utilize parametric maximum flow to propose an efficient implementation of our algorithm.

A flow network is a directed graph $G = (V, \edges)$ with the vertex set 
$V$, the edge set $\edges$, having a source vertex $s$, a sink vertex $t$, and a non-negative capacity $\capt(\edge)$ for each edge $\edge$ in $\edges$.
A function $f: E \to R_{\ge 0}$ is said to be a flow if 
$f(\edge) \le \capt(\edge)$ (the capacity constraint for edge $e$) holds for each edge $e$ in $E$ and 
$\sum_{(u,v) \in \edges} f(u,v) = \sum_{(v,u) \in \edges} f(v,u)$ (the flow conservation constraint for vertex $v$) holds for each vertex $v$ in $V \setminus \{s, t\}$.
The value of flow $f$ may be defined as the net flow out of the source $s$.
The goal of the maximum flow problem is to determine a flow of maximum value~\cite{ford1962}.

A cut of a flow network $G = (V, E)$ is a partition $(S, \overline{S})$ of $V$ such that $s$ belongs to $S$ and $t$ belongs to $\overline{S}$.
The capacity of a cut $(S, \overline{S})$ is defined as the total capacity of all edges going from some vertex in $S$ to some vertex in $\overline{S}$.
A minimum cut is a cut of minimum capacity.
The famous max-flow min-cut theorem states that in any flow network, the value of a maximum flow is equal to the capacity of a minimum cut.
A standard result in network flow theory states that there is a minimum cut $(S, \overline{S})$ such that $S$ contains $S'$ for all minimum capacity cuts $(S', \overline{S'})$.
We refer to this minimum cut $(S, \overline{S})$ as the source-heavy minimum cut.

In a parametric flow network, each edge capacity is a function of a parameter $\lambda$.
In this paper, we restrict our attention to parametric networks where each edge leaving $s$ has a capacity proportional to $\lambda$ and all other edge capacities are independent of $\lambda$.
Parametric networks have been widely studied; we refer readers to~\cite{gallo1989} for more general settings and other results.
For any parametric flow network $G$, we let $G(\lambda)$ denote the flow network associated with a particular value of $\lambda$.

In a parametric flow network, the capacity of the minimum cut changes as the value of $\lambda$ changes.
We let the minimum cut capacity function $\kappa(\lambda)$ denote the capacity of the minimum cut as a function of the parameter $\lambda$. 
It is well known that $\kappa(\lambda)$ is a non-decreasing, concave, and piecewise-linear function with at most $|V|-2$ breakpoints, where a breakpoint is a value of $\lambda$ at which the slope of $\kappa(\lambda)$ changes~\cite{eisner1976, stone1978}. 
Each of the $|V|-1$ or fewer line segments that form the graph of $\kappa(\lambda)$ corresponds to a cut.
Notice that $\kappa(0) = 0$. 
As the value of $\lambda$ increases, the vertices in $V \setminus \{s,t\}$ move from the sink side to the source side of the source-heavy minimum cut. 
For any parametric flow network $G$, 
the breakpoint function $\brkfn{}{v}$ maps 
any given vertex $v$ in $V \setminus \{s,t\}$ to
the breakpoint value of $\lambda$ at which $v$ moves from the sink side to the source side of the source-heavy minimum cut.
The breakpoint function describes the sequence of cuts associated with $\kappa(\lambda)$ \cite{stone1978}.
 
We now define the notion of a lexicographic flow. 
Assume that 
the edges leaving $s$ reach the vertices $\{v_1, \dots, v_k\}$, and that $t$ does not belong this set. 
Let the capacity of the edge $(s, v_i)$ be $w_i\lambda$. 
For a flow $f$ in $G(\infty)$, let $\theta(G, f)$ denote the length-$k$ vector whose $j$th component is the $j$th smallest $f(s,v_i)/w_i$, for $i$ in $[k]$.
A lexicographic flow $f$ of $G$ is a maximum flow $f$ in $G(\infty)$ that is lexicographically at least  $\theta(G,f')$ for all maximum flows $f'$ in $G(\infty)$.

The lexicographic flow problem has been studied by Megiddo~\cite{megiddo1974,megiddo1977}.
An efficient algorithm for this problem was proposed by Gallo et al.\ \cite{gallo1989}.
We describe this algorithm here.
First, find all the breakpoints of $\kappa(\lambda)$ for a given parametric flow network $G$. 
Also, determine the breakpoint $\lambda(v_i)$ for each vertex $v_i$ in $\{v_1, \dots, v_k\}$ at which $v_i$ moves from the sink side to the source side of the source-heavy minimum cut. 
Let $G'$ be the flow network obtained by setting the capacity of edge $(s, v_i)$ to $w_i\lambda(v_i)$ for each vertex $v_i$ in $\{v_1, \dots, v_k\}$ in $G$.
Any maximum flow $f$ of $G'$ is a lexicographic flow of $G$.
Gallo et al.\ describe an algorithm that computes the breakpoint function and a lexicographic flow in $O(|V||E| \log(|V|^2/|E|))$ time.

\section{Proofs}\label{sec:appendix_missing_proofs}

\subsection{Proof of Lemma~\ref{lem:suboptimal}}

\subsection{Proof of Lemma~\ref{lem:frugal_LMF}}

\subsection{Proof of Lemma~\ref{lem:lambda}}

We now present a proof of Lemma~\ref{lem:lambda}.

\subsection{Proof of Lemma~\ref{lem:allocation}}

\subsection{Proof of Lemma~\ref{lem:leeq}}

\subsection{Proof of Lemma~\ref{lem:frozenAgents_at_diff_iter}}

\subsection{Proof of Lemma~\ref{lem:minimal_allocation}}

\subsection{Proof of Theorem~\ref{thm:NW}}

\subsection{Proof of Theorem~\ref{thm:EF}}

\subsection{Proof of Theorem~\ref{thm:SI}}

\subsection{Game-Theoretic Properties of Frugal LMMF Mechanisms}
\label{sec:RM}
\newcommand{\smallestI}{j^*}

\begin{theorem}\label{thm:RM}
Any frugal LMMF mechanism is RM.
\end{theorem}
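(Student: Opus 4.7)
The plan is a proof by contradiction. Suppose there are instances $I = (A, B, e, s, d)$ and $I' = (A, B, e, s', d)$ with $s(b) \le s'(b)$ for every object $b$, and allocations $\mu \in M(I)$ and $\mu' \in M(I')$, such that $u(\mu', d, a) < u(\mu, d, a)$ for some agent $a$. By Lemma~\ref{lem:allocation}, this is equivalent to $\Lambda_{I'}(a) < \Lambda_I(a)$. Among the ``losing'' agents, choose $a^*$ minimizing $\Lambda_I(a^*)$, and let $i^*$ be the index with $a^* \in A_{i^*} \setminus A_{i^*-1}$, so that $\Lambda_I(a^*) = \lambda_{i^*}$. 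The minimality of $\lambda_{i^*}$ forces every $a \in A_{i^*-1}$ to satisfy $u(\mu', d, a) \ge u(\mu, d, a)$.

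Next, since $u(\mu', d, a^*) < u(\mu, d, a^*) = \sum_b \mu(a^*,b)$ (using Lemma~\ref{lem:allocation} and frugality of $\mu$), there is an object $b^*$ with $\mu'(a^*,b^*) < \mu(a^*,b^*) \le d(a^*,b^*)$. Because $\mu'$ is frugal and, by Theorem~\ref{thm:NW}, non-wasteful, the inequality $\mu'(a^*,b^*) < d(a^*,b^*)$ forces $\mu'(A, b^*) = s'(b^*)$. Combining these facts gives
\[
\mu'(A \setminus \{a^*\}, b^*) = s'(b^*) - \mu'(a^*, b^*) > s'(b^*) - \mu(a^*,b^*) \ge s(b^*) - \mu(a^*,b^*) \ge \mu(A \setminus \{a^*\}, b^*),
\]
so some agent $a' \ne a^*$ strictly gains on $b^*$, i.e.\ $\mu'(a',b^*) > \mu(a',b^*) \ge 0$.

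The core of the argument now uses Lemma~\ref{lem:frozenAgents_at_diff_iter} to pin down the structural position of $a'$ and $b^*$ with respect to the decomposition $A_{i^*-1} \subseteq A_{i^*}$ and the associated object layers $B_{i^*-1} \subseteq B_{i^*}$ (via Corollaries~\ref{cor:under_demands_rounds} and~\ref{cor:over_demands_rounds}). If $a' \in A \setminus A_{i^*}$ and $\mu(a',b^*) > 0$, then Lemma~\ref{lem:frozenAgents_at_diff_iter} forces $\mu(a,b^*) = d(a,b^*)$ for every $a \in A_{i^*}$, and in particular $\mu(a^*,b^*) = d(a^*,b^*)$; this pins $b^* \notin B_{i^*}$. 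If instead $a' \in A_{i^*}$, then Corollary~\ref{cor:under_demands_rounds} likewise constrains the layer of $b^*$, and if $\mu(a',b^*) = 0$ while $\mu'(a',b^*) > 0$, Corollary~\ref{cor:over_demands_rounds} implies $b^* \notin B_{i^*}$ as well. In every case we obtain concrete inequalities relating the $\mu'$-allocation on $b^*$ to the $\mu$-allocation, which, combined with Lemma~\ref{lem:allocation} applied in $I'$ (so $\Lambda_{I'}(a') = u(\mu',d,a')/e(a')$), let us transfer a small amount of flow in $\mu'$ from $a'$ back to $a^*$ to produce an allocation $\tilde{\mu} \in \allocs{I'}$ whose sorted utility-to-endowment vector $\mathbf{u}(I',\tilde{\mu})$ strictly lexicographically dominates $\mathbf{u}(I',\mu')$, contradicting $\mu' \in \LMMF(I')$.

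The hard part will be verifying the lex-dominance of the perturbed allocation $\tilde{\mu}$ uniformly across the cases distinguished above, particularly when $a'$ also happens to have a low $\Lambda_{I'}$-breakpoint and thus cannot be safely decreased; handling this requires iterating the construction (repeatedly picking the current minimally-losing agent and a gainer that is provably above its breakpoint) and invoking the minimality of $a^*$ together with the structural constraints from Lemma~\ref{lem:frozenAgents_at_diff_iter} to ensure the chain of exchanges terminates with a strict lex-improvement.
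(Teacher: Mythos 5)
There is a genuine gap, and it sits exactly where you flag ``the hard part.'' Your setup is fine: you correctly produce an object $b^*$ with $\mu'(a^*,b^*)<\mu(a^*,b^*)$ and, via non-wastefulness and $s(b^*)\le s'(b^*)$, an agent $a'\ne a^*$ with $\mu'(a',b^*)>\mu(a',b^*)$. But the concluding move---transfer flow on $b^*$ from $a'$ to $a^*$ to get an allocation that lexicographically beats $\mu'$ in $I'$---can \emph{never} succeed in one step, and not merely in an exceptional subcase. Indeed, since $\mu'(a',b^*)>0$ and $\mu'(a^*,b^*)<d(a^*,b^*)$, Lemma~\ref{lem:frozenAgents_at_diff_iter} applied in $I'$ forces $\Lambda_{I'}(a')\le\Lambda_{I'}(a^*)$: every candidate donor $a'$ has a weakly \emph{smaller} breakpoint in $I'$ than the agent you want to help, so decreasing $a'$ never yields a lexicographic improvement. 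The case you defer to an unspecified ``iterated chain of exchanges'' is therefore the only case, and no termination or potential argument is given; as written, the proof does not go through. A secondary issue: you select $a^*$ to minimize $\Lambda_I$ over the losers, but the quantity that makes the bookkeeping close is the breakpoint in the \emph{new} instance, $\Lambda_{I'}$.

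The paper's proof uses your same ingredients but avoids constructing any improved allocation. It first proves a two-sided ordering claim: if $\mu'(a,b)<\mu(a,b)$ and $\mu'(a',b)>\mu(a',b)$, then Lemma~\ref{lem:frozenAgents_at_diff_iter} (applied once in $I$ and once in $I'$) gives \emph{both} $\Lambda_I(a)\le\Lambda_I(a')$ and $\Lambda_{I'}(a')\le\Lambda_{I'}(a)$. Taking a loser $a$ with minimum $\Lambda_{I'}$-value $\lambda^*$ among losers, an object $b$ on which the minimal losers collectively lose, and a gainer $a'$ outside that set, the chain $\Lambda_{I'}(a')\le\lambda^*<\Lambda_I(a)\le\Lambda_I(a')$ shows $a'$ is itself a loser with $\Lambda_{I'}(a')=\lambda^*$, contradicting $a'\notin A''$. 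If you want to salvage your approach, replace the one-edge transfer with this ordering contradiction (or with a full augmenting-path argument in the difference flow $\mu-\mu'$, which is substantially more work).
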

\punt{\begin{proof}
    The definition of mechanism $\LMM$ implies that it is sufficient to show that $\LMM$ is RM.
    Let $I=(\agents,\objects,\edw,\supp,\dem)$ and
    $I'=(\agents,\objects,\edw,\supp',\dem)$ denote $\oafd$ instances such
    that $\supp(\object)\le \supp'(\object)$ for all objects $b$ in
    $\objects$, let $\mu$ belong to $\LMM(I)$, and let $\mu'$ belong to $\LMM(I')$.
    We need to prove that $u(\mu,\dem,a) \le u(\mu',\dem,a)$ for all agents $a$ in
    $\agents$.
    Let $\Lambda$ and $\Lambda'$ denote the breakpoint functions for $\Lambda_I$
    and $\Lambda_{I'}$, respectively.  We begin by proving the following useful
    claim.
    
    Claim~1: Let $a$ and $a'$ be agents in $\agents$ and let $b$ be an
    object in $\objects$ such that $\mu'(a,\object)< \mu(a,\object)$ and
    $\mu'(a',\object)>\mu(a',\object)$. Then $\brkfn{}{a}\leq\brkfn{}{a'}$ and
    $\Lambda'(a')\leq \Lambda'(a)$.
    
    To prove Claim~1, first observe that
    $0\leq\mu'(a,\object)<\mu(a,\object)\leq\dem(a,\object)$ and
    $0\leq\mu(a',\object)<\mu'(a',\object)\leq\dem(a',\object)$ by
    Lemma~\ref{lem:allocation}. Since $\mu(a,\object)>0$ and
    $\mu(a',\object)<\dem(a',\object)$, Lemma~\ref{lem:frozenAgents_at_diff_iter}
    implies that $\brkfn{}{a}\leq\brkfn{}{a'}$. Similarly, since
    $\mu'(a',\object)>0$ and $\mu'(a,\object)<\dem(a,\object)$,
    Lemma~\ref{lem:frozenAgents_at_diff_iter} implies that
    $\Lambda'(a')\leq\Lambda'(a)$. This completes the proof of Claim~1.
    
    Let $\agents'$ denote $\{a\in\agents\mid u(\mu,\dem, a)>u(\mu',\dem,
    a)\}$. To establish the lemma, we need to prove that $\agents'$ is
    empty. Assume for the sake of contradiction that $\agents'$ is
    nonempty. Let $\lambda^*$ denote $\min_{a\in A'}\Lambda'(a)$, and let
    $\agents''$ denote $\{a\in\agents'\mid\Lambda'(a)=\lambda^*\}$; thus
    $\agents''$ is nonempty.
    
    Let $\objects'$ denote
    $\{b\in\objects\mid\mu(\agents'',\object)>\mu'(\agents'',\object)\}$. The set
    $\objects'$ is nonempty since $\agents''$ is a nonempty subset of
    $\agents'$.  Let $\object$ denote an object in $\objects'$.
    
    Let $\agents'''$ denote $\{a\in\agents''\mid\mu(a,\object)>\mu'(a,\object)\}$. The
    set $\agents'''$ is nonempty since
    $\mu(\agents'',\object)>\mu'(\agents'',\object)$. Let $a$ denote an agent in
    $\agents'''$. Since $u(\mu,\dem, a) =\edw(a)\brkfn{}{a}$ and
    $u(\mu',\dem, a)=\edw(a)\Lambda'(a)$ by
    Lemma~\ref{lem:allocation}, and since $a$ belongs to $\agents'$, we
    deduce that $\brkfn{}{a}>\Lambda'(a)=\lambda^*$.
    
    Since $\mu(\agents'',\object)>\mu'(\agents'',\object)$ and Theorem~\ref{thm:NW}
    implies that $\mu(\agents,\object)=\mu'(\agents,\object)$, we deduce that there is
    an agent in $\agents\setminus\agents''$, call it $a'$, such that
    $\mu(a',\object)<\mu'(a',\object)$.
    
    Since $\mu(a,\object)>\mu'(a,\object)$ and $\mu(a',\object)<\mu'(a',\object)$, Claim~1 implies
    that $\brkfn{}{a}\leq\brkfn{}{a'}$ and $\Lambda'(a')\leq
    \Lambda'(a)=\lambda^*$. Since $\brkfn{}{a}>\lambda^*$, we have
    $\Lambda'(a')\leq\lambda^*<\brkfn{}{a}\leq\brkfn{}{a'}$.
    Thus $a'$ belongs to $\agents'$, and hence the definition of
    $\lambda^*$ implies $\Lambda'(a')\geq\lambda^*$. Since
    $\Lambda'(a')\leq\lambda^*$, we conclude that
    $\Lambda'(a')=\lambda^*$.  Since $a'$ belongs to $\agents'$ and
    $\Lambda'(a')=\lambda^*$, we deduce that $a'$ belongs to $\agents''$, a
    contradiction. 
\end{proof}
    
}


\newcommand{\specificAgent}{a^\dagger}

\begin{theorem}\label{thm:PM}
	Any frugal LMMF mechanism is PM.
\end{theorem}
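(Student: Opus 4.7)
Since any frugal LMMF mechanism $M$ satisfies $M(I)\subseteq \LMM(I)$ for every $\oafd$ instance $I$, it suffices to prove PM for $\LMM$. Fix $I=(\agents,\objects,\edw,\supp,\dem)$ and $I'=(A',\objects,\edw',\supp,\dem_{A'})\in\shrink(I)$, let $\mu\in\LMM(I)$ and $\mu'\in\LMM(I')$, and let $\specificAgent\in A'$ satisfy $\edw'(\specificAgent)=\edw(\specificAgent)$. By Lemma~\ref{lem:allocation} we have $\ut{\mu}{\dem}{\specificAgent}=\edw(\specificAgent)\Lambda_I(\specificAgent)$ and $\ut{\mu'}{\dem}{\specificAgent}=\edw(\specificAgent)\Lambda_{I'}(\specificAgent)$, so the desired inequality is equivalent to $\Lambda_I(\specificAgent)\le\Lambda_{I'}(\specificAgent)$. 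The plan is to interpolate through the instance $J=(A',\objects,\edw|_{A'},\supp,\dem_{A'})$, obtained from $I$ by removing exactly the agents of $\agents\setminus A'$ while retaining the endowments of the remaining agents, and to prove $\Lambda_I(\specificAgent)\le\Lambda_J(\specificAgent)\le\Lambda_{I'}(\specificAgent)$ in two separate stages.

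The first inequality (agent-removal stage) combines Lemma~\ref{lem:suboptimal} with Theorem~\ref{thm:RM}. Applying Lemma~\ref{lem:suboptimal} with the subset $\agents\setminus A'$ places $\mu|_{A'}$ in $\LMM(\newsub{I}{\mu}{\agents\setminus A'})$, where $\newsub{I}{\mu}{\agents\setminus A'}$ has supply $\tilde{\supp}(\object)=\supp(\object)-\mu(\agents\setminus A',\object)\le\supp(\object)$. Since $\newsub{I}{\mu}{\agents\setminus A'}$ and $J$ differ only in supply, and $J$'s supply is pointwise larger, Theorem~\ref{thm:RM} applied to $\LMM$ yields $\ut{\mu|_{A'}}{\dem_{A'}}{\specificAgent}\le \ut{\mu_J}{\dem_{A'}}{\specificAgent}$ for any $\mu_J\in\LMM(J)$, which gives $\Lambda_I(\specificAgent)\le\Lambda_J(\specificAgent)$.

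The second inequality (endowment-decrease stage) I propose to prove by induction on the index $i^*$ determined by $\specificAgent\in \agnts{J}{i^*}\setminus \agnts{J}{i^*-1}$. In the base case $i^*=1$, Lemma~\ref{lem:lambda} yields
\[
\Lambda_J(\specificAgent)=\min_{\emptyset\neq \tilde{A}\subseteq A'}\frac{\capty{J}{\tilde{A}}}{\edw(\tilde{A})}\quad\text{and}\quad \brkpts{I'}{1}=\min_{\emptyset\neq \tilde{A}\subseteq A'}\frac{\capty{I'}{\tilde{A}}}{\edw'(\tilde{A})},
\]
where $\capty{J}{\tilde{A}}=\capty{I'}{\tilde{A}}$ because $J$ and $I'$ share the same supply and demand; since $\edw'(\tilde{A})\le \edw(\tilde{A})$ for every $\tilde{A}\subseteq A'$, the $I'$-minimum weakly exceeds the $J$-minimum, so $\Lambda_{I'}(\specificAgent)\ge \brkpts{I'}{1}\ge \Lambda_J(\specificAgent)$. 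For the inductive step, I would apply Lemma~\ref{lem:suboptimal} on both sides to peel off the set $\agnts{J}{i^*-1}$, producing sub-instances in which $\specificAgent$ occupies the first breakpoint on the $J$-side; by Lemma~\ref{lem:allocation} this peeling preserves the values $\Lambda_J(\specificAgent)$ and $\Lambda_{I'}(\specificAgent)$, so it remains to bound the first breakpoint of the peeled $I'$-sub-instance from below by the first breakpoint of the peeled $J$-sub-instance.

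The principal obstacle is this inductive step: the peeled sub-instances have residual supplies $\supp-\mu_J(\agnts{J}{i^*-1},\cdot)$ and $\supp-\mu'(\agnts{J}{i^*-1},\cdot)$ that can differ pointwise, so the clean endowment-only comparison of the base case does not transfer directly. I expect the fix to route through Theorem~\ref{thm:RM} once more: if the peeled agents consume at least as much of each object under $\mu_J$ as under $\mu'$ (a plausible consequence of their weakly larger endowments in $J$), then an intermediate instance with the endowments of $I'$ and the supply residual of $J$ can be interposed, after which the base-case endowment comparison and Theorem~\ref{thm:RM} apply in succession; ensuring this pointwise supply inequality, or else choosing a refined peeling order that avoids needing it, is the crux of the argument.
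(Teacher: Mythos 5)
Your first stage (deleting the agents of $\agents\setminus A'$ via Lemma~\ref{lem:suboptimal} and then restoring the full supply via Theorem~\ref{thm:RM}) is correct, and so is the base case of your second stage. But the inductive step of the second stage---the step you yourself flag as the crux---is a genuine gap, and it is where essentially all of the difficulty of the theorem lives. To compare the two peeled sub-instances you need the pointwise inequality $\mu_J(\agnts{J}{i^*-1},\object)\ge\mu'(\agnts{J}{i^*-1},\object)$ for every object $\object$, and nothing in your argument supplies it. The only tool in the paper that compares per-object consumption of a frugal LMMF allocation against another allocation, Lemma~\ref{lem:minimal_allocation}, gives an inequality in the \emph{opposite} direction (the frugal LMMF allocation is per-object minimal among allocations meeting the breakpoint utilities), so the inequality you call a ``plausible consequence'' of the larger endowments in $J$ does not follow from anything established. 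Moreover, decreasing endowments can completely rearrange the breakpoint partition, so $\agnts{J}{i^*-1}$ need not align with any $\agnts{I'}{j}$, which blocks the natural attempts to prove it.

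The paper closes exactly this gap by inducting on the number of agents rather than on the breakpoint index, and by peeling off only the \emph{first} breakpoint class $\agnts{I}{1}$ of the original instance $I$ (together with the deleted agents $\agents\setminus A'$) at each step, rather than all of $a^\dagger$'s predecessor classes of $J$ at once. With that choice the residual-supply comparison becomes provable: by Lemma~\ref{lem:lambda}, $\mu(\agnts{I}{1},\object)$ equals $\suppt{I}{\object}$ for $\object$ in $\objs{I}{1}$ and equals $\dem(\agnts{I}{1},\object)$ otherwise, and each of these dominates $\mu'(\agnts{I}{1}\cap A',\object)$, since feasibility and frugality of $\mu'$ give $\mu'(\agnts{I}{1}\cap A',\object)\le\min\bigl(\suppt{I}{\object},\dem(\agnts{I}{1}\cap A',\object)\bigr)$. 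Hence the peeled version of $I$ has pointwise smaller residual supply than the peeled version of $I'$; Theorem~\ref{thm:RM} then absorbs the supply discrepancy and the induction hypothesis (PM on the strictly smaller agent set) absorbs the endowment discrepancy---precisely the interposition you describe, but with a peeling set for which the supply inequality can actually be verified. The remaining case $a^\dagger\in\agnts{I}{1}$ is handled separately by observing that the restriction of $\mu$ to $A'$ is a feasible frugal allocation for $I'$, which forces $\brkpts{I'}{1}\ge\brkpts{I}{1}$. So your overall architecture (interpose an instance, use RM for supply and a separate argument for endowments) matches the paper's, but your choice of what to peel leaves the decisive supply inequality unproven.
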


\punt{\begin{proof}
	By the definition of mechanism $\LMM$, it is sufficient to
        show that $\LMM$ is PM.  Let $P(k)$ denote the predicate ``for
        any $\oafd$ instances $I = (\agents, \objects, \edw, \supp,
        \dem)$ and $I' = (A', \objects, \edw', \supp,\dem_{ A'})$ such
        that $|\agents|=k$ and $I'$ belongs to $\shrink(I)$, any
        allocations $\allocationSymbol$ in $\LMM(I)$ and
        $\allocationSymbol'$ in $\LMM(I')$, and any agent $\agent$ in
        $ A'$ such that $\edw'(\agent) = \edw(\agent)$, we have
        $u(\mu,\dem,a) \le u(\mu',\dem,a)$.''  We prove by induction
        that $P(k)$ holds for all $k\ge 0$, which implies that the
        theorem holds.
	
	Base case: It is easy to see that $P(0)$ holds.
	
	Induction step: Let $k$ be a positive integer and assume that $P(i)$
	holds for $0\leq i<k$. We need to prove that $P(k)$ holds.  Let
	$I=(\agents,\objects,\edw,\supp,\dem)$ and
	$I'=(A',\objects,\edw',\supp,\dem_{ A'})$ be $\oafd$ instances such that
	$|\agents|=k$ and $I'$ belongs to $\shrink(I)$.  Let allocation
	$\allocationSymbol$ (resp., $\allocationSymbol'$) belong to $\LMM(I)$
	(resp., $\LMM(I')$).  Let $\specificAgent$ be an agent in $A'$ such
	that $\edw'(\specificAgent) = \edw(\specificAgent)$.  We need to prove
	that $u(\mu,\dem,\specificAgent) \le u(\mu',\dem,\specificAgent)$.
	Let $\brkpt_1$ (resp., $\brkpt'_1$) denote $\brkpts{I}{1}$ (resp.,
	$\brkpts{I'}{1}$), and let $\agents_1$ (resp., $\agents'_1$) denote
	$\agnts{I}{1}$ (resp., $\agnts{I'}{1}$).  We consider two cases.
	
	Case~1: $\specificAgent \in A_1$.  Since $\specificAgent$ belongs to
	$A_1$, Lemma~\ref{lem:allocation} implies that
	$u(\mu,\dem,\specificAgent) = \lambda_1 \edw(\specificAgent)$.  The
	definition of $\lambda_1'$ implies that
	$u(\mu',\dem,\specificAgent)\ge \lambda_1' \edw(\specificAgent)$.
	Since $u(\mu,\dem,\specificAgent) = \lambda_1 \edw(\specificAgent)$
	and $u(\mu',\dem,\specificAgent)\ge \lambda_1' \edw(\specificAgent)$,
	it is sufficient to prove that $\lambda_1'\ge \lambda_1$.  Let
	$\mu_{\agents'}$ denote the restriction of $\mu$ to $\agents'$; thus
	$\mu_{A'}$ belongs to $\frugal(I')$.  Since $\mu'$ belongs to
	$\LMMF(I')$, we deduce that $\mathbf{u}(I', \allocationSymbol')$ is
	lexicographically at least $\mathbf{u}(I', \mu_{A'})$.
	Hence $\brkpt_1' \ge \brkpt_1$, as required.
	
	Case~2: $\specificAgent \in A'\setminus A_1$.  Let
	$\hat{I}=(\hat{\agents},\objects,\hat{\edw},\hat{\supp},\hat{\dem})$
	and
	$\hat{I}'=(\hat{\agents}',\objects,\hat{\edw}',\hat{\supp}',\hat{\dem}')$
	denote the OAFD instances such that $\hat{I} =
	\newsub{I}{\mu}{\agents_1\cup (A\setminus A')}$ and $\hat{I}' =
	\newsub{I'}{\mu'}{\agents_1\cap A'}$. Notice that $\hat{\agents}=
	A\setminus (\agents_1\cup (A\setminus A')) = A'\setminus A_1 =
	\hat{\agents}'$ and $\hat{\dem} = \dem_{\hat{A}} = \dem_{\hat{A}'} =
	\hat{\dem}'$. Moreover, the case assumption implies that
	$\specificAgent$ belongs to $A'\setminus A_1 = \hat{\agents}$.  Let
	$\hat{\mu}$ and $\hat{\mu}'$ be allocations in $\LMM(\hat{I})$ and
	$\LMM(\hat{I}')$, respectively.  By Lemma~\ref{lem:suboptimal}, it is
	sufficient to prove that $u(\hat{\mu},\dem,\specificAgent) \le
	u(\hat{\mu}',\dem,\specificAgent)$.
	
	Since $\hat{\edw}$ (resp., $\hat{\edw}'$) is the restriction
        of $\edw$ (resp., $\edw'$) to $\hat{A}$, we have
        $\hat{\edw}'(a)\le \hat{\edw}(a)$ for all agents $a$ in
        $\hat{A}$.  Let $\hat{I}^*$ denote the $\oafd$ instance
        $(\hat{\agents},\objects,\hat{\edw},\hat{\supp}',\hat{\dem})$,
        which belongs to $\shrink(\hat{I}')$.  Let $\hat{\mu}^*$
        denote an allocation in $\LMM(\hat{I}^*)$.  The induction
        hypothesis implies that $u(\hat{\mu}',\dem,a) \ge
        u(\hat{\mu}^*,\dem,a)$ for all agents $a$ in $\hat{A}$ such
        that $\hat{\edw}(a)=\hat{\edw}'(a)$. Since $\specificAgent$
        belongs to $\hat{A}$ and $\hat{\edw}'(\specificAgent) =
        \hat{\edw}(\specificAgent)$, we deduce that
        $u(\hat{\mu}^*,\dem,\specificAgent) \le
        u(\hat{\mu}',\dem,\specificAgent)$. Below we complete the
        proof by showing that $u(\hat{\mu},\dem,\specificAgent) \le
        u(\hat{\mu}^*,\dem,\specificAgent)$.
	
	Let $\object$ be an object in $\objects$.  We have
	\[
	\mu(\agents_1\cup (A\setminus A'),\object)
	\ge \mu(\agents_1,\object)
	= \supp_I(\object)
	= \min\left(\supp(\object), \sum_{a\in A} \dem(a,\object) \right)
	\ge \mu'(A', \object),
	\]
	where the first equality holds by Lemma~\ref{lem:lambda}, the
        second equality holds by the definition of $\supp_I(\object)$,
        and the second inequality holds because $\mu'$ belongs to
        $\frugal(I')$.  Therefore, $\mu(\agents_1\cup (A\setminus
        A'),\object) \ge \mu'(A', \object) \ge \mu'(A'\cap A_1,
        \object)$. Since $\hat{\supp}(\object) = \supp(\object) -
        \mu(\agents_1\cup (A\setminus A'),\object)$ and
        $\hat{\supp}'(\object) = \supp(\object) - \mu'(A'\cap
        A_1,\object)$, we deduce that
        $\hat{\supp}(\object)\le\hat{\supp}'(\object)$. Hence
        Theorem~\ref{thm:RM} implies that
        $u(\hat{\mu},\dem,\specificAgent) \le
        u(\hat{\mu}^*,\dem,\specificAgent)$, as required.
\end{proof}
}




\begin{theorem}
\label{thm:GSP}
Any frugal LMMF mechanism is GSP. 
\end{theorem}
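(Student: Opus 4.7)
The plan is to establish the theorem by strong induction on $|\agents|$. By Lemma~\ref{lem:frugal_LMF}, $\LMM(I)$ is exactly the set of frugal LMMF allocations at instance $I$, so it is enough to show that $\LMM$ itself is GSP; any frugal LMMF mechanism $M$ satisfies $M(I)\subseteq\LMM(I)$ and therefore inherits GSP from $\LMM$. Assume for contradiction that there exist an instance $I=(\agents,\objects,\edw,\supp,\dem)$, a coalition $C\subseteq\agents$, misreported demands $\dem^*$ producing $I'=(\agents,\objects,\edw,\supp,(\dem_{\agents\setminus C},\dem^*))$, and allocations $\mu\in\LMM(I)$, $\mu'\in\LMM(I')$ with $\ut{\mu'}{\dem}{a}\geq\ut{\mu}{\dem}{a}$ for every $a\in C$ and strict inequality for at least one member. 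The proof would then split on how $C$ meets the minimum-tier set $\agents_1=\agnts{I}{1}$.

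When either $C\cap\agents_1=\emptyset$ or the strict beneficiary lies in $C\cap\agents_1$, I would aim for a direct contradiction. In the first situation, Lemma~\ref{lem:suboptimal} lets me restrict $\mu$ and $\mu'$ to $\agents\setminus\agents_1$ and reason about LMMF allocations of the induced sub-instances; if additionally $\agnts{I'}{1}=\agents_1$ with the same breakpoint, the coalition's manipulation transfers cleanly into a smaller instance where the induction hypothesis applies. When the strict beneficiary sits in $C\cap\agents_1$, Lemma~\ref{lem:allocation} forces $\ut{\mu'}{\dem}{a}>\edw(a)\brkpts{I}{1}$, and one can stitch together parts of $\mu$ and $\mu'$ to build an allocation whose utility vector lexicographically dominates $\mathbf{u}(I,\mu)$, contradicting $\mu\in\LMMF(I)$.

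The main case, which is where fractional demands add genuine difficulty beyond the dichotomous-preference setting of Bogomolnaia and Moulin, has $C$ meeting both $\agents_1$ and $\agents\setminus\agents_1$, no member of $C\cap\agents_1$ strictly gaining, and the strict beneficiary lying in $C\setminus\agents_1$. The plan is to pass to the sub-instances $\hat{I}=\newsub{I}{\mu}{\agents_1}$ and $\hat{I}'=\newsub{I'}{\mu'}{\agnts{I'}{1}}$ and compare their supplies object by object. Using the minimum-consumption property of Lemma~\ref{lem:minimal_allocation} together with predicates $\Gamma_4$ and $\Gamma_5$ from Lemma~\ref{lem:lambda}, I would argue that $\mu'$ must leave at least as much of each object available as $\mu$ does on the $\agents_1$-side, so that the supply of $\hat{I}'$ dominates that of $\hat{I}$ object by object. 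Introducing an auxiliary instance $\hat{I}^*$ with the supply of $\hat{I}$ but the endowments and agent set of $\hat{I}'$, Theorem~\ref{thm:RM} then yields a utility chain that lets me apply the induction hypothesis to the coalition $C\setminus\agents_1$ on $\hat{I}$, producing the required contradiction with the existence of a strict beneficiary there.

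The hard step will be precisely the supply comparison between $\hat{I}$ and $\hat{I}'$: since demands are fractional, the allocations $\mu$ and $\mu'$ distribute the ``escape'' of the minimum-tier agents in subtly different ways, and one has to ensure that the $\agents_1$-tier manipulators in $\mu'$ cannot conspire to consume more of any single object than $\mu$ does. This is exactly where the new use of RM (Theorem~\ref{thm:RM}) enters, converting a qualitative supply comparison into a utility inequality strong enough to invoke the induction hypothesis.
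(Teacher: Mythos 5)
Your high-level architecture matches the paper's: induction on $|\agents|$, peel off the minimum-utility tier $\agents_1=\agnts{I}{1}$, compare the residual supplies of the two sub-instances via Lemma~\ref{lem:minimal_allocation}, insert an auxiliary instance, and invoke Theorem~\ref{thm:RM} to transfer the manipulation to a smaller instance. That is exactly the paper's main case. However, your case decomposition leaves genuine gaps. The crucial step in the main case is applying Lemma~\ref{lem:minimal_allocation} with $i=1$ to conclude $\allocationSymbol(\agents_1,\object)\le\allocationSymbol'(\agents_1,\object)$, and that lemma requires $\ut{\allocationSymbol'}{\dem}{\agent}\ge\edw(\agent)\brkpts{I}{1}$ for \emph{every} agent in $\agents_1$, including agents outside the coalition. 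Your case hypothesis (``no member of $C\cap\agents_1$ strictly gains'') says nothing about non-coalition members of $\agents_1$, who may well receive less than $\edw(\agent)\brkpts{I}{1}$ under $\allocationSymbol'$. The paper disposes of this possibility with two preliminary cases: if $\brkpts{I'}{1}<\brkpts{I}{1}$ then some coalition member in $\agnts{I'}{1}$ must lose (so the manipulation already fails), and if $\brkpts{I'}{1}\ge\brkpts{I}{1}$ then a truthful agent in $\agents_1$ cannot lose because its $I'$-breakpoint is at least $\brkpts{I'}{1}\ge\brkpts{I}{1}$. Your proposal supplies neither argument; your first case is explicitly conditional (``if additionally $\agnts{I'}{1}=\agents_1$ with the same breakpoint'') and does not say what happens otherwise.

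Second, your treatment of a strict beneficiary in $C\cap\agents_1$ is not a workable argument as stated. Stitching $\allocationSymbol$ and $\allocationSymbol'$ together cannot be expected to lexicographically dominate $\mathbf{u}(I,\allocationSymbol)$, because agents outside the coalition may be arbitrarily worse off under $\allocationSymbol'$ (with respect to the true demands), and there is no obvious recombination that preserves feasibility while improving the sorted utility vector. The paper instead uses a conservation argument: $\sum_{\agent\in\agents_1}\ut{\allocationSymbol'}{\dem}{\agent}\le\capty{I}{\agents_1}=\edw(\agents_1)\brkpts{I}{1}$ by Lemma~\ref{lem:lambda}, so a gain by one member of $\agents_1$ forces a loss by another member of $\agents_1$, which is then fed back into the earlier cases. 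Finally, a smaller but real issue: you propose removing $\agnts{I'}{1}$ from $I'$ while removing $\agents_1$ from $I$; these sets need not coincide, so the two sub-instances would have different agent sets and the object-by-object supply comparison and the appeal to RM would not type-check. The paper removes the same set $\agents_1$ from both instances. In summary, the core inductive mechanism is right, but the configurations handled by the paper's Cases 1--3 are not addressed, and without them the hypothesis of Lemma~\ref{lem:minimal_allocation} cannot be verified.
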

\punt{\begin{proof}
The definition of mechanism $\LMM$ implies that it is sufficient to show that mechanism $\LMM$ is GSP.
For any OAFD instances $I=(\agents,\objects,\edw,\supp,\dem)$ and
$I'=(\agents,\objects,\edw,\supp,\dem')$, any subset $\agents'$ of
$\agents$ such that
$\dem_{\agents\setminus\agents'}=\dem'_{\agents\setminus\agents '}$,
and any allocation $\mu'$ in $\LMM(I')$, we define
$(I,I',\agents',\mu')$ as a manipulation.  For any manipulation
$\Manip=(I,I',\agents',\mu')$ where
$I=(\agents,\objects,\edw,\supp,\dem)$, we define the set of winning
agents, denoted $\Winners{\Manip}$, as $\{a\in \agents\mid
u(\mu',\dem,a)>\edw(a)\brkfn{I}{a}\}$.  Similarly, we define the
set $\Losers{\Manip}$ of losing agents as $\{a\in\agents\mid
u(\mu',\dem,a)<\edw(a)\brkfn{I}{a}\}$.  Remark:
Lemma~\ref{lem:allocation} implies that
$u(\mu,\dem,a)=\edw(a)\brkfn{I}{a}$ for all 
allocations 
$\mu$ in
$\LMM(I)$ and all agents $a$ in $\agents$.

Let $P(k)$ denote the predicate ``for any manipulation
$\Manip=(I,I',\agents',\mu')$ where
$I=(\agents,\objects,\edw,\supp,\dem)$, $|\agents|=k$, and
$\Winners{\Manip}\cap\agents'\not=\emptyset$, we have
$\Losers{\Manip}\cap\agents'\not=\emptyset$.'' Below we prove by
induction on $k$ that $P(k)$ holds for all $k\geq 0$; the claim of the
theorem follows immediately.

It is easy to see that $P(0)$ holds. Let $k$ be a positive integer and
assume that $P(i)$ holds for $0\leq i<k$. We need to prove that $P(k)$
holds.  Let $\Manip=(I,I',\agents',\mu')$ be a manipulation where
$I=(\agents, \objects,\edw,\supp,\dem)$, $I'=(\agents,
\objects,\edw,\supp,\dem')$, $|\agents|=k$, and
$\Winners{\Manip}\cap\agents'\not=\emptyset$.  We need to prove that
$\Losers{\Manip}\cap\agents'\not=\emptyset$.  Let $\brkpt_1$ (resp.,
$\brkpt'_1$) denote $\brkpts{I}{1}$ (resp.,
$\brkpts{I'}{1}$), and let $\agents_1$ (resp., $\agents'_1$) denote
$\agnts{I}{1}$ (resp., $\agnts{I'}{1}$).  We consider four cases.

Case~1: $\brkpt'_1<\brkpt_1$.  Lemma~\ref{lem:lambda} implies that
$\brkpt'_1$ is equal to $\capty{I'}{\agents'_1}/\edw(\agents'_1)$ and
$\brkpt_1$ is equal to  $\min_{X\subseteq A}\capty{I}{X}/\edw(X)$. Thus
\[
\capty{I'}{\agents'_1}/\edw(\agents'_1)
<
\min_{X\subseteq\agents}\capty{I}{X}/\edw(X)
\leq
\capty{I}{\agents'_1}/\edw(\agents'_1),
\]
where the first inequality follows from the case assumption and the
second inequality follows from $\agents'_1\subseteq\agents$.
Multiplying by $\edw(\agents'_1)$, we obtain
$\capty{I'}{\agents'_1}<\capty{I}{\agents'_1}$.  If
$A'\cap\agents_1'=\emptyset$, then $\dem'_a=\dem_a$ for all agents $a$
in $\agents'_1$ and hence
$\capty{I'}{\agents'_1}=\capty{I}{\agents'_1}$, a contradiction.  It
remains to consider the case where
$A'\cap\agents'_1\not=\emptyset$. Let $a$ belong to
$A'\cap\agents'_1$. Thus
\[
u(\mu',\dem, a)
\leq\mu'(a,\objects)
=u(\mu',\dem',a)
=\edw(a)\brkpt'_1
<\edw(a)\brkpt_1,
\]
where the two equalities follows from
Lemma~\ref{lem:allocation}. Hence $a$ is in
$\Losers{\Manip}\cap\agents'$.

Case~2: $\brkpt'_1\geq\brkpt_1$ and
$\Losers{\Manip}\cap\agents_1\not=\emptyset$.  Let $a$ be an agent in
$\Losers{\Manip}\cap\agents_1$.  If $a$ is in $\agents'$ then
$\Losers{\Manip}\cap\agents'\not=\emptyset$, as required. Thus, in
what follows, we assume that $a$ is not in $\agents'$.  Let $i$ denote
the least integer such that $a$ is in $\agnts{I'}{i}$.
We have
\[
\edw(a)\brkfn{I'}{a}
=u(\mu',\dem', a)
=u(\mu',\dem, a)
<\edw(a)\lambda(a)
=\edw(a)\brkpt_1,
\]
where the first equality holds by Lemma~\ref{lem:allocation}, the
second equality holds since $a$ is not in $\agents'$ and hence
$\dem'_a=\dem_a$, the inequality holds since $a$ is in
$\Losers{\Manip}$, and the third equality holds since $a$ is in
$\agents_1$. Thus $\brkpt'_1\leq\brkfn{I'}{a}<\brkpt_1$,
contradicting the first condition in the case assumption.

Case~3: $\brkpt'_1\geq\brkpt_1$,
$\Losers{\Manip}\cap\agents_1=\emptyset$, and
$\Winners{\Manip}\cap\agents_1\neq\emptyset$. Let $a$ denote an agent
in $\Winners{\Manip}\cap\agents_1$.  Thus $u(\mu',\dem,
a)>\edw(a)\brkfn{I}{a}=\edw(a)\brkpt_1$.  Since
$u(\mu',\dem,\agents_1)\leq\capty{I}{\agents_1}=\edw(\agents_1)\brkpt_1$
by Lemma~\ref{lem:lambda}, we deduce that $u(\mu',\dem,
\agents_1-a)<\edw(\agents_1-a)\brkpt_1$.  Thus there is an agent
$a'$ in $\agents_1-a$ such that
$u(\mu',\dem,a')<\edw(a')\brkpt_1=\edw(a')\brkfn{I}{a'}$. Hence
$\Losers{\Manip}\cap\agents_1\not=\emptyset$, contradicting the second
condition in the case assumption.

Case~4: $\brkpt'_1\geq\brkpt_1$,
$\Losers{\Manip}\cap\agents_1=\emptyset$, and
$\Winners{\Manip}\cap\agents_1=\emptyset$.  Let $\mu$ denote an
allocation in $\LMM(I)$.  Let
$\hat{I}=(\hat{\agents},\objects,\hat{\edw},\hat{\supp},\hat{\dem})$
denote the OAFD instance $\newsub{I}{\mu}{\agents_1}$; thus
$\hat{\agents}=\agents\setminus\agents_1$. Let $\hat{\mu}$ denote the
restriction of $\mu$ to $\hat{\agents}$; Lemma~\ref{lem:suboptimal}
implies that $\hat{\mu}$ is in $\LMM(\hat{I})$.  Let
$I^*=(\hat{\agents},\objects,\hat{\edw},\supp^*,\dem^*)$ denote the
OAFD instance $\newsub{I}{\mu'}{\agents_1}$ and let $\mu^*$ denote the
restriction of $\mu'$ to $\hat{\agents}$; Lemma~\ref{lem:suboptimal}
implies that $\mu^*$ is in $\LMM(I^*)$. Let $\tilde{I}$ denote the
OAFD instance $(\hat{\agents},\objects,\hat{\edw},\supp^*,\hat{\dem})$
and let $\tilde{\mu}$ be in $\LMM(\tilde{I})$.

Claim~1: $\brkfn{I}{a}\geq\brkfn{\tilde{I}}{a}$ holds for all agents
$a$ in $\hat{\agents}$.  The third condition in the case assumption
implies that $u(\mu',\dem,a)\geq\edw(a)\brkpt_1$ for all agents
$a$ in $\agents_1$. Thus Lemma~\ref{lem:minimal_allocation} implies
that $\mu(\agents_1,\object)\leq\mu'(\agents_1,\object)$ for all objects $b$ in
$\objects$. It follows that $\hat{\supp}(\object)\geq\supp^*(\object)$ for all
objects $b$ in $\objects$. Hence
\[
\edw(a)\brkfn{I}{a}
=u(\mu,\dem,a)
=u(\hat{\mu},\hat{\dem},a)
\geq u(\tilde{\mu},\hat{\dem},a)
=\edw(a)\brkfn{\tilde{I}}{a},
\]
where the first and last equalities hold by
Lemma~\ref{lem:allocation}, the second equality holds by the
definition of $\hat{\mu}$, and the inequality holds by
Theorem~\ref{thm:RM}. Dividing by $\edw(a)$ yields the claim.

Claim~2: $u(\mu',\dem,a)=u(\mu^*,\hat{\dem},a)$ for all agents $a$
in $\hat{\agents}$. We have
\[
u(\mu',\dem,a)
=
\sum_{b\in \objects}\min(\mu'(a,\object),\dem(a,\object))
=
\sum_{b\in \objects}\min(\mu^*(a,\object),\hat{\dem}(a,\object))
=
u(\mu^*,\hat{\dem},a),
\]
where the second equality holds by the definition of $\mu^*$. The
claim follows.

Let $\agents''$ denote $\agents'\setminus\agents_1$ and let $\Manip'$
denote the manipulation $(\tilde{I},I^*,\agents'',\mu^*)$.

Claim~3: $\Winners{\Manip'}\cap\agents''\not=\emptyset$.  Since
$\Winners{\Manip}\cap\agents'\not=\emptyset$, the third condition in
the case assumption implies that
$\Winners{\Manip}\cap\agents''\not=\emptyset$. Let $a$ be an agent in
$\Winners{\Manip}\cap\agents''$. 
Thus
\[
u(\mu^*,\hat{\dem},a)
=
u(\mu',\dem,a)
>
\edw(a)\brkfn{I}{a}
\geq
\edw(a)\brkfn{\tilde{I}}{a},
\]
where the equality holds by Claim~2, the first inequality holds
because $a$ is in $\Winners{\Manip}$, and the second inequality holds
by Claim~1. Since $u(\mu^*,\hat{\dem},a)>\brkfn{\tilde{I}}{a}$ and
$a$ is in $\agents''$, the claim holds.

Since $|\hat{\agents}|<k$ and Claim~3 holds, the induction hypothesis
implies that $\Losers{\Manip'}\cap\agents''\not=\emptyset$.  Let $a$
be in $\Losers{\Manip'}\cap\agents''$. Thus $a$ is in
$A''\subseteq\hat{A}$ and
\[
u(\mu',\dem,a)
=
u(\mu^*,\hat{\dem},a)
<
\edw(a)\brkfn{\tilde{I}}{a}
\leq
\edw(a)\brkfn{I}{a},
\]
where the equality holds by Claim~2, the first inequality holds
because $a$ is in $\Losers{\Manip'}$, and the second inequality holds
by Claim~1. Since $u(\mu',\dem,a)<\edw(a)\brkfn{I}{a}$ and $a$
is in $\agents''\subseteq\agents'$, we deduce that $a$ is in
$\Winners{\Manip}\cap\agents'$. 
\end{proof}
}

\subsection{Proof of Lemma \ref{lem:not_half_SI}}

\subsection{Proof of Lemma \ref{lem:no_maximin_SI}}

\end{document}